\newtheorem{thm}{Theorem}
\newtheorem{prop}{Proposition}
\newtheorem{cor}{Corollary}
\newtheorem{lem}{Lemma}
\newtheorem{remark}{Remark}
\newtheorem{assum}{Assumption}
\newenvironment{merci}{\textbf{Acknowledgments}}{ }
\newcommand{\pr}{\mathbb{P}}
\newcommand{\esp}{\mathbb{E}}
\newcommand{\Var}{\mathbb{V}\text{ar}}
\newcommand{\Argmin}{\mathop{\mbox{Argmin}}}
\title{On efficient estimators of the proportion
of true null hypotheses in a multiple testing setup}
\author{Van Hanh Nguyen$^{1,2}$ and Catherine Matias$^2$}
\begin{document}
\thispagestyle{empty}

\maketitle

\begin{center}
1.  Laboratoire  de Mathématiques  d'Orsay, Université Paris  Sud, UMR
CNRS 8628, Bâtiment 425, 91~405 Orsay Cedex, France. 
E-mail: nvanhanh@genopole.cnrs.fr \\
2. Laboratoire Statistique et Génome, Université d'Évry Val d'Essonne, 
UMR CNRS 8071- USC INRA, 23 bvd de France, 91~037 Évry, France.
E-mail:  catherine.matias@genopole.cnrs.fr
\end{center}

\begin{abstract}
We consider the problem of estimating the proportion $\theta$ of true null hypotheses in a multiple testing context. The setup is classically modeled through a semiparametric mixture with two components: a uniform distribution on interval $[0,1]$ with prior probability $\theta$ and a nonparametric density $f$. We discuss asymptotic efficiency results and establish that two different cases occur whether $f$ vanishes on a set with non null Lebesgue measure or not. In the first case, we exhibit estimators converging at parametric rate, compute the optimal asymptotic variance and conjecture that no estimator is asymptotically efficient (\emph{i.e.} attains the optimal asymptotic variance). In the second case, we prove that the quadratic risk of any estimator does not converge at parametric rate. We illustrate those results on simulated data. 
\end{abstract}

\smallskip
\noindent {\it Key words and phrases: Asymptotic   efficiency;   efficient  score;   false
  discovery rate; information bound; multiple testing; $p$-values; semiparametric model.}\\



\section{Introduction}            
The problem of estimating the proportion $\theta$ of true null
hypotheses is of interest in situation where several thousands of (independent)
hypotheses can be tested simultaneously. One of the typical applications in
which multiple testing problems occur is estimating the proportion of
genes that are not differentially expressed in deoxyribonucleic acid
(DNA) microarray experiments \cite[see for instance][]{Dudoit_book}. Among other application domains, we mention astrophysics \citep{Meinshausen_Rice} or neuroimaging \citep{Turkheimer2001}. A reliable estimate of
$\theta$ is important when one wants to control multiple error rates,
such as the false discovery rate (FDR) introduced by
\cite{Benjamini1995}. In this work, we  discuss asymptotic efficiency of estimators of the true proportion of null hypotheses. We stress that the asymptotic framework is particularly relevant in the above mentioned contexts where the number of tested hypotheses is huge. \\

In many recent articles  \citep[such        as][etc]{Broberg2005,
   Celisse-Robin2010,Efron2004,Efron2001,Genovese2004},               a
 two-component  mixture  density is  used  to  model  the behavior  of
 $p$-values $X_1, X_2,\ldots, X_n$ associated with $n$ independent tested hypotheses. More precisely, assume the test statistics are independent
and identically distributed (iid) with a continuous distribution under
the corresponding null hypotheses, then the $p$-values $X_1, X_2,\ldots, X_n$ are iid and follow the
uniform distribution $\mathcal{U}([0,1])$ on interval $[0,1]$ under
the null hypotheses. The density $g$ of
$p$-values is modeled by a two-component mixture with following
expression
\begin{equation}
\label{eq-model}
\forall x \in [0,1], \quad g(x)=\theta + (1-\theta)f(x),
\end{equation}
where $\theta \in [0,1]$ is the unknown proportion of true null hypotheses and
$f$ denotes the density of $p$-values generated under the alternative (false null hypotheses). 

Many different identifiability conditions on the parameter $(\theta,f)$ in model~\eqref{eq-model} have been discussed in the literature. 
For example,  \cite{Genovese2004} introduce the concept of purity  that corresponds to the case where the essential infimum of $f$ on $[0,1]$ is zero. They prove that
purity implies identifiability but not \emph{vice versa}. \cite{Langaas2005}
suppose that $f$ is  decreasing with $f(1)=0$ while \cite{Neuvial2010}
assumes   that  $f$   is  regular   near  $x=1$   with   $f(1)=0$  and
\cite{Celisse-Robin2010}  consider  that   $f$  vanishes  on  a  whole
interval included  in $[0,1]$. These are sufficient  but not necessary
conditions on $f$ that ensure identifiability.  Now, if we assume more
generally that $f$  belongs to some set $\mathcal{F}$  of densities on
$[0,1]$,  then a  necessary  and sufficient  condition for  parameters
identifiability is stated in the  next result, whose proof is given in
Section~\ref{sec:proof_lower}.  

\begin{prop}\label{prop:identif}
The parameter $(\theta,f)$ is identifiable on a set $(0,1)\times
\mathcal{F}$ if and only if for all $f \in \mathcal{F}$ and for all
$c\in (0,1)$, we have $c+(1-c)f \notin \mathcal{F}$.
\end{prop}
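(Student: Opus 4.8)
The plan is to establish the equivalence by proving its contrapositive in both directions, namely that non-identifiability of $(\theta,f)$ on $(0,1)\times\mathcal{F}$ is equivalent to the existence of $f\in\mathcal{F}$ and $c\in(0,1)$ with $c+(1-c)f\in\mathcal{F}$. The whole argument is an elementary reparametrisation of the mixture~\eqref{eq-model}, so I expect no real obstacle; the only points needing care are the degenerate case where the two mixing weights coincide, and checking that the auxiliary weights I build genuinely lie in $(0,1)$.

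First I would assume the parameter is not identifiable: there exist $(\theta_1,f_1)\neq(\theta_2,f_2)$ in $(0,1)\times\mathcal{F}$ with $\theta_1+(1-\theta_1)f_1=\theta_2+(1-\theta_2)f_2$ on $[0,1]$. If $\theta_1=\theta_2$, dividing by $1-\theta_1>0$ forces $f_1=f_2$, a contradiction; hence $\theta_1\neq\theta_2$ and, relabelling, $\theta_1<\theta_2$. Solving the mixture identity for $f_1$ gives
\[
f_1=\frac{\theta_2-\theta_1}{1-\theta_1}+\frac{1-\theta_2}{1-\theta_1}\,f_2 .
\]
Setting $c:=(\theta_2-\theta_1)/(1-\theta_1)$, one checks $c\in(0,1)$ since $\theta_1<\theta_2<1$, and that $(1-\theta_2)/(1-\theta_1)=1-c$; thus $f_1=c+(1-c)f_2$ with $f_2\in\mathcal{F}$ and $f_1\in\mathcal{F}$, which exhibits an $f$ (namely $f_2$) and a $c$ violating the stated condition.

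Conversely, I would assume there exist $f\in\mathcal{F}$ and $c\in(0,1)$ with $h:=c+(1-c)f\in\mathcal{F}$ (note $h$ is automatically a density on $[0,1]$, being nonnegative with integral $c+(1-c)=1$). Fix any $\theta_1\in(0,1)$ and set $\theta_2:=\theta_1+c(1-\theta_1)$. Then $\theta_2-\theta_1=c(1-\theta_1)>0$, and $\theta_2<1\iff\theta_1<1$, so $\theta_1<\theta_2<1$; moreover $1-\theta_2=(1-\theta_1)(1-c)$. A direct substitution then gives
\[
\theta_1+(1-\theta_1)h=\theta_1+(1-\theta_1)c+(1-\theta_1)(1-c)f=\theta_2+(1-\theta_2)f ,
\]
so the two pairs $(\theta_1,h)$ and $(\theta_2,f)$ — distinct because $\theta_1\neq\theta_2$ — both lie in $(0,1)\times\mathcal{F}$ and yield the same mixture density, i.e.\ the parameter is not identifiable. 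Combining the two implications yields the claimed equivalence, the displayed identities being the only routine verifications involved.
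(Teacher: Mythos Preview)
Your proof is correct and follows essentially the same route as the paper's, working by contrapositive/contradiction and using the same reparametrisation $c=(\theta_2-\theta_1)/(1-\theta_1)$ in one direction and $\theta_2=\theta_1+c(1-\theta_1)$ in the other. If anything you are slightly more careful, explicitly disposing of the case $\theta_1=\theta_2$ and verifying that the constructed $\theta_2$ lies in $(0,1)$.
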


This very general result is the  starting point to  considering explicit sets  $\mathcal{F}$ of
densities that ensure  the parameter's identifiability on $(0,1)\times
\mathcal{F}$. In  particular, if $\mathcal{F}$  is a set  of densities
constrained to have essential infimum  equal to zero, one recovers the
purity result of \cite{Genovese2004}.  
However, from an estimation perspective, the purity assumption is very
weak  and it is  hopeless to  obtain a  reliable estimate  of $\theta$
based on the value of $f$ at a unique value (or at a finite number of values). 
In  the following, we  explore asymptotic  efficiency results  for the
estimation of $\theta$ and establish that two different cases are to be distinguished:  models assuming that $f$ vanishes on a set of points
with positive Lebesgue measure and models where this set of points has 
zero measure (and where regularity or monotonicity assumptions are
added  on  $f$).  In  the  first  case,  we  obtain  the  existence  of
$\sqrt{n}$-consistent estimators of  $\theta$ that is to say estimators $\hat \theta_n$ such that $\sqrt{n}(\hat \theta_n-\theta)$ is bounded in probability (denoted by $\sqrt{n}(\hat \theta_n-\theta)=O_{\pr}(1)$). We  exhibit such estimators and also compute the asymptotic optimal variance for this problem. Moreover, we conjecture that asymptotically efficient estimators (that is estimators asymptotically attaining this variance lower bound) do not exist. 
In the second case, while the existence of an estimator $\hat \theta_n$ of $\theta$ converging at parametric rate has   not  been   established   yet,   we  prove   that   if  such   a $\sqrt{n}$-consistent estimator of  $\theta$ exists, then the variance
$\Var(\sqrt{n}\hat \theta_n)$ cannot have a finite limit.  In other words,
the quadratic risk of $\hat \theta_n$ cannot converge to zero at a parametric rate.

Let us now discuss the different estimators of $\theta$ proposed in the
literature, starting with those  assuming (implicitly or not) that $f$
attains its  minimum value on a  whole interval. 
First, \cite{Schweder1982} suggested a procedure to
estimate $\theta$, that has been later used by
\cite{Storey2002}. This estimator depends on an unspecified parameter $\lambda \in
[0,1)$ and is equal to the proportion of $p$-values larger
than  this threshold  $\lambda$  divided by  $1-\lambda$.  It is  thus
consistent  only if  $f$ attains its minimum value   on the  interval  $[\lambda,1]$ (an
assumption not made in the article by \cite{Schweder1982} nor the one
by \cite{Storey2002}). Note that even if such an assumption were made, it would
not solve the problem of choosing $\lambda$ such that $f$ attains its infimum  on $[\lambda,1]$.
Adapting this procedure in order to end up with an estimate of the  positive
FDR (pFDR), \cite{Storey2002} proposes a bootstrap
strategy to pick $\lambda$.  More precisely, his
procedure minimizes the mean squared error for estimating the pFDR. Note that \cite{Genovese2004} established that, for fixed
value $\lambda$ such that the cumulative distribution function (cdf) $F$ of $f$
satisfies $F(\lambda)<1$, \citeauthor{Storey2002}'s
estimator converges  at parametric rate and  is asymptotically normal,
but is also asymptotically biased: thus it does not converge to $\theta$ at parametric rate. Some other choices of $\lambda$ are,
for instance, based on break point estimation \citep{Turkheimer2001}
or  spline  smoothing  \citep{Storey2003}.  Another natural  class  of
procedures  in this  context is  obtained  by relying  on a  histogram
estimator  of $g$  \citep{Mosigetal,Nettleton06}. Among  this  kind of
procedures,    we    mention   the    one    proposed   recently    by
\cite{Celisse-Robin2010}  who  proved  convergence in  probability  of
their  estimator (to the  true parameter  value) under  the assumption
that    $f$    vanishes   on    an    interval.    Note   that    both
\citeauthor{Storey2002}'s  and histogram based estimators  of $\theta$
are constructed using nonparametric  estimates $\hat g$ of the density
$g$ and then  estimate $\theta$ relying on the value of  $\hat g$ on a
specific  interval.  The  main  issue  with  those  procedures  is  to
automatically  select  an  interval  where  the true  density  $g$  is
identically  equal to  $  \theta$.  As a  conclusion  on the  existing
results for this setup ($f$ vanishing on a set of points with non null
Lebesgue measure),  we stress the  fact that none of  these estimators
were  proven to  be  convergent  to $\theta$  at  parametric rate.  In
Proposition~\ref{prop:sqrtn_consistent}  below, we  prove that  a very
simple histogram based estimator  possesses this property, while in Proposition~\ref{prop:CR_sqrtn_consistent}, we establish that this is also true for the more elaborate procedure proposed by~\cite{Celisse-Robin2010} which has the advantage of automatically selecting the "best" partition among a fixed collection. However, we
are not aware of a procedure for estimating $\theta$ that asymptotically attains the optimal variance in this context. Besides, one might conjecture that such a procedure does not exist for regular models (see Section~\ref{sec:one_step}).

Other estimators of $ \theta$  are based on regularity or
monotonicity assumptions made on  $f$ or equivalently on $g$, combined
with  the assumption  that the  infimum of  $g$ is  attained  at $x=1$.  These estimators rely on nonparametric estimates  of  $g$  and  appear  to  inherit  nonparametric  rates  of convergence. \cite{Langaas2005} derive estimators based on nonparametric maximum likelihood
estimation of the $p$-value density, in two setups: decreasing and 
convex decreasing densities $f$.  We mention that no theoretical properties of these
estimators are given. 
\cite{Hengartner1995} propose a very general 
finite sample  confidence envelope for a monotone  density. Relying on
this result and assuming moreover 
that the cdf $G$ of $g$ is concave and 
that $g$ is Lipschitz in a neighborhood of $x=1$, \cite{Genovese2004}
construct an estimator converging to $g(1)= \theta$ at rate $(\log
n)^{1/3}n^{-1/3}$. Under some 
regularity   assumptions  on   $f$   near  $x=1$,   \cite{Neuvial2010}
establishes that by letting $\lambda \to 1$, \citeauthor{Storey2002}'s
estimator may be turned into a consistent estimator of $ \theta$, with a
nonparametric  rate of  convergence  equal to  $n^{-k/(2k+1)}\eta_n$,
where $\eta_n \to +\infty$ and $k$ controls the regularity of $f$ near
$x=1$.  Our   results  are  in   accordance  to  the   literature:  no
$\sqrt{n}$-consistent  estimator  has  been  constructed  yet,  as  is
expected from  the fact  that the quadratic  risk of any  estimator of
$\theta$ cannot converge at parametric rate in this case (see Corollary~\ref{cor:LAM}).

To  finish  this  tour  on  the literature  about  the  estimation  of
$\theta$,   we  mention  that   \cite{Meinshausen_Buhlmann05}  discuss
probabilistic lower bounds for the proportion of true null hypotheses, which are valid under general and unknown dependence structures between the test statistics. 
Finally, note that we do not discuss here estimators of the proportion
of  non null effects in  Gaussian mixtures  such as  in \cite{Cai_Jin10,
  Jin08, Jin_Cai07},  a related but although different  problem as the
one we study. \\

The  article  is organized  as  follows. 
Section~\ref{sec:efficiency} establishes lower bounds on the quadratic
risk      for     the      estimation      of     $\theta$,      while
Section~\ref{sec:sqrtn_consistency}   explores   corresponding   upper
bounds, \emph{i.e.} the  existence of $\sqrt{n}$-consistent estimators
of $\theta$ and the existence of asymptotically efficient estimators.  Section~\ref{sec:simus} illustrates our  results relying
on simulations. The proofs of the main results are postponed to Section~\ref{sec:proofs}, while some technical lemmas are proved in Appendix~\ref{appendix}.

\section{Lower bounds for the quadratic risk and efficiency}\label{sec:efficiency}
In this section, we give lower bounds for the quadratic risk of any estimator of $\theta$. For any fixed unknown parameter $\delta \in [0, 1)$, we introduce a  
set of densities $\mathcal{F}_\delta$ (with respect to the Lebesgue measure $\mu$) and an induced  set of semiparametric distributions $\mathcal{P_\delta}$, respectively defined as  
\begin{align*}
\mathcal{F}_{\delta}  & =\{  f  :  [0,1]\mapsto
\mathbb{R}^+, \mbox{ continuously non increasing density, positive on } [0,1-\delta) \\
& \qquad  \mbox{ and such that  } f_{|{[1-\delta, 1] }}=0  \} , \\
\mathcal{P}_{\delta}     &     =\big\{     \mathbb{P}_{\theta,     f};
\frac{d\mathbb{P}_{\theta, f}}{d\mu} 
= \theta +(1-\theta)f ; (\theta, f) \in (0,1)\times \mathcal{F}_{\delta}\big\} .
\end{align*}
Note  that for  any fixed  value $\delta  \in [0,  1)$,  the condition
stated  in  Proposition~\ref{prop:identif}  is  satisfied on  the  set
$\mathcal{F}_\delta$, namely forall  $f\in \mathcal{F}_\delta$ and for
all $c\in (0,1)$, we have $c+(1- c)f \notin \mathcal{F}_\delta$. Thus, the
parameter    $(\theta,f)$     is    identifiable    on    $(0,1)\times
\mathcal{F}_\delta$.

The  case $\delta  > 0$  corresponds to  models where  density  $f$ is
supposed to vanish on a set of points with non null Lebesgue
measure. This case is thus easier from an estimation perspective. Note
that when $\delta =0$, it is usual to add assumptions on $f$. Here, we
choose   to  consider   the  case   where   $f$  is   assumed  to   be
non increasing.  The same results may be obtained by replacing
  this assumption with a regularity  constraint on $f$. Note also that
  when $\delta  >0$, the assumption  that $f$ is non  increasing could
  be removed without any change in the results.

We  aim at  computing  the (asymptotic) efficient  information  for estimating  the
finite dimensional parameter $\psi(\mathbb{P}_{\theta,f}) = \theta$ in
model $\mathcal{P}_\delta$ where we consider $f$ as a nuisance parameter. 
We start  by recalling  some concepts  from  semiparametric theory and give explicit expressions of the objects arising from this theory in our specific framework. We follow the notation of Chapter 25 and more particularly Section 25.4 in~\cite{VanderVaart1998} and refer to this book for more details.\\

We fix a parameter value $(\theta,f)$ and  consider first a parametric submodel of $\mathcal{F}_\delta$ induced by the following path 
\begin{equation}
\label{eq-path}
t \mapsto f_t(x) = \frac{k(t h_0(x))f(x)}{\int k(t h_0(u))f(u)du} =
c(t)k(t h_0(x))f(x),
\end{equation}
where $h_0$ is a continuous and non increasing function on $[0, 1]$, the function $k$ is defined by  $k(u) = 2(1+e^{-2u})^{-1}$ and the normalising constant $c(t)$ satisfies $c(t)^{-1}=\int k(t h_0(u))f(u)du$. A tangent set ${}_{f}\mathcal{\dot{P}}_{\delta}$ is composed of the score functions associated to such parametric submodels (as $h_0$ varies). It is easy to see that the path~\eqref{eq-path} is differentiable and that its corresponding score function is obtained by differentiating $t\mapsto \log [\theta+(1-\theta)f_t(x)]$ at $t=0$. We thus obtain a tangent set for  $f$ given by
\begin{equation*}
{}_f\dot{\mathcal{P}}_{\delta} = \Big\{h= \frac{(1-\theta)fh_0}{\theta
  +(1-\theta)f}; h_0\ \mbox{is continuous and non increasing on } [0,1-\delta) \mbox{ with} \int fh_0 =0\Big\}.
\end{equation*}
Now, we consider parametric submodels of $\mathcal{P}_\delta$ induced by paths of the form $t\mapsto \mathbb{P}_{\theta+ta,f_t}$ where the paths $t\mapsto f_t$ in $\mathcal{F}_\delta$ are given by~\eqref{eq-path}.
We remark that  if  $\dot{l}_{\theta,f}$ is the ordinary score function
for $\theta$ in the model in which $f$ is fixed, then for every $a \in
\mathbb{R}$ and for every $h \in {}_f\dot{\mathcal{P}}_{\delta}$, we
have $a \dot{l}_{\theta,f} + h$ is a score function for $(\theta, f)$
corresponding to the path $t \mapsto
\mathbb{P}_{\theta+ta,f_t}$. Hence, a tangent set  $\dot{\mathcal{P}}_{\delta}$ of the model $\mathcal{P}_{\delta}$ at $\mathbb{P}_{\theta, f}$ with respect to the parameter $(\theta,f)$ is given by the linear span 
\[
\mathcal{\dot{P}_{\delta}}=\text{lin}\big( \dot{l}_{\theta, f}+ {}_f\mathcal{\dot{P}}_{\delta}\big)=\{\alpha\dot{l}_{\theta,   f} + \beta h ;  (\alpha,\beta) \in \mathbb{R}^2, h \in {}_f\mathcal{\dot{P}}_{\delta}\}.
\]
Moreover, the ordinary score function $\dot{l}_{\theta,f}$ for
$\theta$ in the model in which $f$ is fixed is given by 
\begin{equation}\label{eq:ordinary_score}
\dot{l}_{\theta,f}(x) = \frac{\partial}{\partial \theta}\log[\theta+(1-\theta)f(x)]=\frac{1-f(x)}{\theta
  +(1-\theta)f(x)}.
\end{equation}
Now we let $\tilde{l}_{\theta,f}$ be the efficient
score function and $\tilde{I}_{\theta,f}$ be the efficient information
for estimating $\psi (\mathbb{P}_{\theta,f})=\theta$. These quantities are defined respectively as
\[
\tilde{l}_{\theta,f}=
\dot{l}_{\theta,f}-\Pi_{\theta,f}\dot{l}_{\theta,f}\ \text{and }
\tilde{I}_{\theta,f} = \mathbb{P}_{\theta,f}(\tilde{l}_{\theta,f}^2),
\]
where $\Pi_{\theta,f}$ is the orthogonal projection onto the closure
of the linear span of ${}_f\mathcal{\dot{P}}_{\delta}$ in
$\mathbb{L}_2(\mathbb{P}_{\theta,f})$.
The functional $\psi: \mathbb{P}_{\theta,f}\mapsto \theta$ is said to be differentiable at $\mathbb{P}_{\theta,f}$ relative to the tangent set $\mathcal{\dot P}_\delta$ if there exists a continuous linear map $\tilde \psi_{\theta,f}:\mathbb{L}_2(\mathbb{P}_{\theta,f}) \mapsto \mathbb{R}$, called the efficient influence function, such that for every path $t\mapsto f_t$ with score function $h\in {}_f\mathcal{\dot P}_\delta$, we have
\[
\forall a \in \mathbb{R}, \quad a = \int  \tilde \psi_{\theta,f}(x) [a^\intercal\dot l_{\theta,f}(x)+h(x)] d\mathbb{P}_{\theta,f}(x).
\]
Setting $a=0$, we  see that this efficient influence  function must be
orthogonal           to            the           tangent           set
${}_{f}\mathcal{\dot{P}}_{\delta}$.  Finally,  note  that  under  some
assumptions, the efficient influence function $\tilde{\psi}_{\theta,f}$ equals 
$\tilde{I}_{\theta,f}^{-1}\tilde{l}_{\theta,f}$  \cite[see Lemma 25.25
in][]{VanderVaart1998}. 
The following theorem provides expressions for these quantities in our
setup. All the proofs in the current section are postponed to Section~\ref{sec:proof_lower}.

\begin{thm}\label{thm:I_and_psi}
The efficient score
function   $\tilde{l}_{\theta,f}$   and   the  efficient   information
$\tilde{I}_{\theta,f}$    for    estimating    $\theta$    in    model
$\mathcal{P}_\delta$ are given by 
\begin{equation}
  \label{eq:efficient_score}
\tilde{l}_{\theta,f}(x)                                               =
\frac{1}{\theta}-\frac{1}{\theta(1-\theta\delta)}\mathbf{1}_{[0, 1 -
  \delta)}(x) \quad \text{and}\quad \tilde{I}_{\theta,f} = \frac{\delta}{\theta(1-\theta\delta)} ,
\end{equation}
where $\mathbf{1}_A(\cdot)$ is the indicator function of set $A$.
In particular, when $\delta=0$, this efficient information is zero. In this case, the functional $\psi(\mathbb{P}_{\theta,f})=\theta$ is not differentiable at $\mathbb{P}_{\theta,f}$ relative to the tangent set  $\mathcal{\dot{P}}_{0}$. \\
Moreover, when $\delta >0$,  the efficient influence
function  $\tilde{\psi}_{\theta,f}$   relative  to  the   tangent  set
$\dot{\mathcal{P}}_{\delta}$  is given by 
\[
 \tilde{\psi}_{\theta,f}(x) = \frac{1}{\delta}\mathbf{1}_{[1-\delta, 1]}(x)-\theta.
\]
\end{thm}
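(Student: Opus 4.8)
Write $g=\theta+(1-\theta)f$ for the density of $\mathbb{P}_{\theta,f}$ with respect to $\mu$; note that $g\equiv\theta$ on $[1-\delta,1]$ and that $\theta\le g\le\theta+(1-\theta)\sup f<\infty$, so $\mathbb{L}_2(\mathbb{P}_{\theta,f})$ and $\mathbb{L}_2(\mu)$ carry equivalent norms. Everything in the statement follows from one fact, namely the identification of the closed linear span of the tangent set for $f$: I claim that
\[
\overline{\text{lin}}\,{}_f\dot{\mathcal{P}}_{\delta}=\mathcal{H}_{\delta}:=\big\{k\in\mathbb{L}_2(\mathbb{P}_{\theta,f})\ :\ k=0\text{ a.e. on }[1-\delta,1]\ \text{and}\ \mathbb{P}_{\theta,f}k=0\big\}.
\]
The inclusion $\subseteq$ is immediate: every $h=(1-\theta)fh_0/g$ vanishes on $[1-\delta,1]$ because $f$ does, and $\mathbb{P}_{\theta,f}h=(1-\theta)\int fh_0\,d\mu=0$; since $\mathcal{H}_\delta$ is a closed subspace it contains the closed span.

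For the reverse inclusion I would exhibit a dense subset of $\mathcal{H}_\delta$ that already lies in $\text{lin}\,{}_f\dot{\mathcal{P}}_{\delta}$. Fix once and for all a continuous piecewise linear $\rho$ with compact support in $(0,1-\delta)$ and $\mathbb{P}_{\theta,f}\rho=1$. Given $k\in\mathcal{H}_\delta$, approximate it in $\mathbb{L}_2$ by continuous piecewise linear functions vanishing on $[1-\delta-\varepsilon_n,1]$ for some $\varepsilon_n>0$, then subtract $(\mathbb{P}_{\theta,f}k_n)\rho$: the resulting functions still have these properties, are now $\mathbb{P}_{\theta,f}$-centred, and still converge to $k$, so such functions are dense in $\mathcal{H}_\delta$. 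Now take any such $k$ (continuous, piecewise linear, $k\equiv0$ on $[1-\delta-\varepsilon,1]$, $\mathbb{P}_{\theta,f}k=0$). On the compact set $[0,1-\delta-\varepsilon]$ the density $f$ is continuous and strictly positive, hence bounded away from $0$, so $\phi:=kg/((1-\theta)f)$, extended by $0$ on $[1-\delta-\varepsilon,1-\delta)$, is continuous and of bounded variation on $[0,1-\delta)$, with $\int f\phi\,d\mu=(1-\theta)^{-1}\mathbb{P}_{\theta,f}k=0$. A continuous BV function is a difference $\eta_1-\eta_2$ of two continuous non-increasing functions (its total-variation function is continuous here); setting $a:=\int f\eta_1\,d\mu=\int f\eta_2\,d\mu$ (equal because their difference is $\int f\phi\,d\mu=0$) and using $\int_0^{1-\delta}f\,d\mu=1$, the shifted functions $h_{0,i}:=\eta_i-a$ are continuous and non-increasing on $[0,1-\delta)$ with $\int fh_{0,i}\,d\mu=0$, so $(1-\theta)fh_{0,i}/g\in{}_f\dot{\mathcal{P}}_{\delta}$ and $k=(1-\theta)f\phi/g=(1-\theta)fh_{0,1}/g-(1-\theta)fh_{0,2}/g\in\text{lin}\,{}_f\dot{\mathcal{P}}_{\delta}$. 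This is the main obstacle: the two delicate points are the possible degeneracy of $f$ at $x=1-\delta$, handled by truncating at $1-\delta-\varepsilon$, and the fact that the constraint ``$h_0$ non-increasing'' is not linear, handled by the Jordan decomposition together with the common shift by $a$.

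Granting the claim, the efficient score $\tilde{l}_{\theta,f}=\dot{l}_{\theta,f}-\Pi_{\theta,f}\dot{l}_{\theta,f}$ is the unique element with $\dot{l}_{\theta,f}-\tilde{l}_{\theta,f}\in\mathcal{H}_\delta$ and $\tilde{l}_{\theta,f}\perp\mathcal{H}_\delta$, so it suffices to verify these two conditions for the candidate in~\eqref{eq:efficient_score}, which equals $1/\theta$ on $[1-\delta,1]$ and the constant $-\delta/(1-\theta\delta)$ on $[0,1-\delta)$. Since $\dot{l}_{\theta,f}=(1-f)/g$ also equals $1/\theta$ on $[1-\delta,1]$, the difference $\dot{l}_{\theta,f}-\tilde{l}_{\theta,f}$ vanishes there, and a direct computation (using $\int(1-f)\,d\mu=0$ and $\mathbb{P}_{\theta,f}([1-\delta,1])=\theta\delta$) shows both $\dot{l}_{\theta,f}$ and $\tilde{l}_{\theta,f}$ are $\mathbb{P}_{\theta,f}$-centred, hence $\dot{l}_{\theta,f}-\tilde{l}_{\theta,f}\in\mathcal{H}_\delta$; and $\tilde{l}_{\theta,f}$, being the constant $-\delta/(1-\theta\delta)$ on $[0,1-\delta)$ where every $k\in\mathcal{H}_\delta$ is supported, satisfies $\langle\tilde{l}_{\theta,f},k\rangle=-\delta(1-\theta\delta)^{-1}\mathbb{P}_{\theta,f}k=0$, i.e. $\tilde{l}_{\theta,f}\perp\mathcal{H}_\delta$. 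Then $\tilde{I}_{\theta,f}=\mathbb{P}_{\theta,f}(\tilde{l}_{\theta,f}^2)=\theta^{-2}\,\theta\delta+\delta^2(1-\theta\delta)^{-2}(1-\theta\delta)=\delta/(\theta(1-\theta\delta))$. When $\delta=0$ the set $[1-\delta,1]$ is $\mu$-null, so $\mathcal{H}_0$ is the whole space of $\mathbb{P}_{\theta,f}$-centred functions; $\dot{l}_{\theta,f}$ lies in it, so $\Pi_{\theta,f}\dot{l}_{\theta,f}=\dot{l}_{\theta,f}$, giving $\tilde{l}_{\theta,f}=0$ and $\tilde{I}_{\theta,f}=0$. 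Moreover $\psi$ cannot be differentiable relative to $\dot{\mathcal{P}}_0$: if an efficient influence function $\tilde{\psi}_{\theta,f}$ existed, taking $a=0$ in the defining identity would force $\tilde{\psi}_{\theta,f}\perp{}_f\dot{\mathcal{P}}_0$, hence $\tilde{\psi}_{\theta,f}\perp\mathcal{H}_0$, so $\tilde{\psi}_{\theta,f}$ would be $\mathbb{P}_{\theta,f}$-a.e. constant; taking then $a=1$, $h=0$ would give $1=\mathbb{P}_{\theta,f}(\tilde{\psi}_{\theta,f}\dot{l}_{\theta,f})=(\text{const})\,\mathbb{P}_{\theta,f}\dot{l}_{\theta,f}=0$, a contradiction.

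Finally, when $\delta>0$ we have $\tilde{I}_{\theta,f}>0$, so Lemma 25.25 in~\cite{VanderVaart1998} applies: $\psi$ is differentiable relative to $\dot{\mathcal{P}}_\delta$ with efficient influence function $\tilde{\psi}_{\theta,f}=\tilde{I}_{\theta,f}^{-1}\tilde{l}_{\theta,f}$, and substituting~\eqref{eq:efficient_score} and simplifying separately on $[1-\delta,1]$ and on $[0,1-\delta)$ yields $\tilde{\psi}_{\theta,f}(x)=\delta^{-1}\mathbf{1}_{[1-\delta,1]}(x)-\theta$. One may also check this formula directly from the definition: for $h\in{}_f\dot{\mathcal{P}}_\delta$ one has $\mathbb{P}_{\theta,f}(\tilde{\psi}_{\theta,f}h)=-\theta\,\mathbb{P}_{\theta,f}h=0$ (as $h$ is centred and supported in $[0,1-\delta)$), while $\mathbb{P}_{\theta,f}(\tilde{\psi}_{\theta,f}\dot{l}_{\theta,f})=\int(\delta^{-1}\mathbf{1}_{[1-\delta,1]}-\theta)(1-f)\,d\mu=\delta^{-1}\delta-\theta\cdot0=1$.
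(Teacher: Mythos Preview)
Your proof is correct and takes a genuinely different route from the paper's. The paper never identifies the full closure $\overline{\text{lin}}\,{}_f\dot{\mathcal{P}}_{\delta}$; instead it works entirely by hand. It writes $\dot{l}_{\theta,f}$ as a sum of two pieces, then exhibits an explicit $h_0$---namely $h_0(x)=\frac{1}{(1-\theta)(1-\theta\delta)}\big(1-\delta-\tfrac{1}{f(x)}\big)$ on $[0,1-\delta)$---for which the first piece equals $-(1-\theta)fh_0/g$, and observes that this $h_0$ is continuous and non-increasing simply because $f$ is non-increasing and positive, so $1/f$ is non-decreasing. Thus the projection candidate lies in $\text{lin}({}_f\dot{\mathcal{P}}_{\delta})$ itself, with no closure or approximation needed; orthogonality of the remainder is then a one-line integral. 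For $\delta=0$, the paper's non-differentiability argument also uses this explicit element: since $-\dot{l}_{\theta,f}\in{}_f\dot{\mathcal{P}}_0$, plugging $h=-\dot{l}_{\theta,f}$ and varying $a$ in the defining identity for $\tilde\psi_{\theta,f}$ gives an immediate contradiction. Your approach, by contrast, proves the structural result $\overline{\text{lin}}\,{}_f\dot{\mathcal{P}}_{\delta}=\mathcal{H}_\delta$ via a density argument and a Jordan decomposition to defeat the monotonicity constraint on $h_0$; this is more work but yields a complete description of the nuisance tangent space, from which the efficient score, the vanishing of $\tilde{I}_{\theta,f}$ at $\delta=0$, and the non-differentiability all drop out uniformly. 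The paper's shortcut is that the single explicit $h_0$ above already does everything needed, so the full identification of $\mathcal{H}_\delta$---while true and illuminating---is not required for the theorem.
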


This  theorem has  some  consequences  on the  quadratic  risk of  any
estimator that we now explain. 
For every score function $g$ in the tangent set $\mathcal{\dot{P}_{\delta}}$, we write $P_{t,g}$ for a path with score function $g$ along which the functional  $\psi: \mathbb{P}_{\theta,f}\mapsto \theta$ is differentiable. Namely, $P_{t,g}$ takes the form $\mathbb{P}_{\theta+ta,f_t}$ for some path  $t\mapsto f_t$ and some $a\in\mathbb{R}$. Now, an estimator sequence $\hat{\theta}_n$ is called regular at
$\mathbb{P}_{\theta,f}$ for estimating $\theta$
(relative to the tangent set $\mathcal{\dot{P}_{\delta}}$) if there
exists a probability measure $L$ such that for any  
score function $g\in \mathcal{\dot P}_\delta$ corresponding to a path of the form $t\mapsto (\theta+ta,f_t)$, we have
\[
\sqrt{n}\big(\hat{\theta}_n-\psi(P_{1/\sqrt{n},g})\big)= 
\sqrt{n} \Big[\hat{\theta}_n-\Big(\theta+\frac a {\sqrt{n}}\Big)\Big]
\xrightarrow[] {d} L, \mbox{ under } P_{1/\sqrt{n},g} , 
\]
where 
$\xrightarrow[]{d}$ denotes convergence in distribution.
According    to   a    convolution    theorem   \cite[see Theorem    25.20
in][]{VanderVaart1998},   this  limit   distribution  writes   as  the
convolution between some unknown distribution and the centered Gaussian distribution $N(0,\mathbb{P}_{\theta,f}   (\tilde{\psi}_{\theta,f}^2))$   with  variance
\[
\mathbb{P}_{\theta,f}      (\tilde{\psi}_{\theta,f}^2)      =     \int
\tilde{\psi}_{\theta,f}^2 d\mathbb{P}_{\theta, f} .
\]
Thus we  say that an estimator sequence is asymptotically efficient at $\mathbb{P}_{\theta,f}$ (relative to the tangent set
$\dot{\mathcal P}_{\delta}$) if it is regular at $\mathbb{P}_{\theta,f}$ with limit
distribution $L = N(0,\mathbb{P}_{\theta,f}( \tilde{\psi}_{\theta,f}^2))$, in other words it is the best regular
estimator. The quadratic risk of an estimator sequence $\hat{\theta}_n$ (relative to the tangent set $\dot{\mathcal P}_{\delta}$), is defined as 
\[
\underset{E_\delta}{\sup}\ \underset{n \rightarrow \infty}{\liminf}\ \underset{g
  \in E_\delta}{\sup}\ P_{1/\sqrt{n},g}
\big[\sqrt{n}\big(\hat{\theta}_n-\psi(P_{1/\sqrt{n},g})\big)\big]^2,
\]
where the first supremum is taken over all finite subsets $E_\delta$ of the
tangent set $\dot{\mathcal{P}}_{\delta}$. 
According to the local asymptotic minimax (LAM) theorem \cite[see
Theorem 25.21  in][]{VanderVaart1998}, this quantity  is lower bounded
by the minimal variance $\mathbb{P}_{\theta,f}
(\tilde{\psi}_{\theta,f}^2)$.  Thus,  Theorem~\ref{thm:I_and_psi}  has
the following corollary.

\begin{cor}\label{cor:LAM}
When $\delta =0$, any  estimator sequence ${\hat{\theta}_n}$ has an infinite
quadratic risk, namely 
\begin{equation*}
\underset{E_0}{\sup}\ \underset{n \rightarrow \infty}{\liminf}\ \underset{g
  \in E_0}{\sup}\ \esp _{P_{1/\sqrt{n},g}} \big[\sqrt{n}\big(\hat{\theta}_n-\psi(P_{1/\sqrt{n},g})\big)\big]^2 = +\infty,
\end{equation*}
where the first supremum is taken over all finite subsets $E_0$ of the
tangent set $\dot{\mathcal{P}}_{0}$.\\
When $\delta >0$, we obtain that  
\begin{itemize}
\item [i)] For any estimator sequence ${\hat{\theta}_n}$ we have, 
\begin{equation*}
\underset{E_\delta}{\sup}\ \underset{n \rightarrow \infty}{\liminf}\ \underset{g
  \in E_\delta}{\sup}\ \esp _{P_{1/\sqrt{n},g}}
\big[\sqrt{n}\big(\hat{\theta}_n-\psi(P_{1/\sqrt{n},g})\big)\big]^2 \geq
\theta\big(\frac{1}{\delta} - \theta\big),
\end{equation*}
where the first supremum is taken over all finite subsets $E_\delta$ of the
tangent set $\dot{\mathcal{P}}_{\delta}$.\\
\item [ii)] A sequence of estimators
$\hat{\theta}_n$ is asymptotically efficient in the sense of a 
convolution theorem (best regular estimator) if and only if it
satisfies
\begin{equation}\label{eq:efficiency}
\hat{\theta}_n =\frac{1}{n}\sum_{i=1}^n\frac{1}{\delta}\mathbf{1}_{X_i \in
  [1-\delta, 1]}+o_{\mathbb{P}_{\theta,f}}(n^{-1/2}).
\end{equation}
\end{itemize}
\end{cor}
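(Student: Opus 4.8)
The plan is to derive Corollary~\ref{cor:LAM} from Theorem~\ref{thm:I_and_psi} together with the standard semiparametric lower bound machinery already recalled in the excerpt (the convolution theorem, Theorem 25.20, and the LAM theorem, Theorem 25.21 in \cite{VanderVaart1998}). The case $\delta=0$ and the case $\delta>0$ are handled separately, but both boil down to computing $\mathbb{P}_{\theta,f}(\tilde\psi_{\theta,f}^2)$ and invoking the LAM bound.

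First, for $\delta=0$, recall from Theorem~\ref{thm:I_and_psi} that $\tilde I_{\theta,f}=0$ and that $\psi$ is \emph{not} differentiable at $\mathbb{P}_{\theta,f}$ relative to $\dot{\mathcal P}_0$; equivalently, $\tilde l_{\theta,f}\equiv 0$, so $\dot l_{\theta,f}$ lies in the closure of the linear span of ${}_f\dot{\mathcal P}_0$. The argument here is the usual "information equals zero forces infinite minimax risk" reasoning: one applies Theorem 25.21 not to the original functional but to a one-parameter restriction. Concretely, since $\dot l_{\theta,f}$ is a limit of nuisance scores, for any $c>0$ one can find a path $t\mapsto f_t$ whose score $h$ approximates $\dot l_{\theta,f}$ well enough that the map $t\mapsto\psi(P_{t,h})$ has derivative as large as $c$ at $t=0$ while the Fisher information of the path stays bounded (say by $1$). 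Running the scalar LAM theorem along such a path gives a minimax lower bound proportional to $c^2$; letting $c\to\infty$ forces the supremum over finite subsets $E_0$ to be $+\infty$. I would present this as: for each $M>0$ exhibit a finite set $E_0$ and conclude the $\liminf\sup$ is $\ge M$, hence the overall supremum is $+\infty$.

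Next, for $\delta>0$, part i) is immediate: by the LAM theorem the displayed quadratic risk is lower bounded by $\mathbb{P}_{\theta,f}(\tilde\psi_{\theta,f}^2)$, and using the expression $\tilde\psi_{\theta,f}(x)=\delta^{-1}\mathbf 1_{[1-\delta,1]}(x)-\theta$ from Theorem~\ref{thm:I_and_psi} one computes
\[
\mathbb{P}_{\theta,f}(\tilde\psi_{\theta,f}^2)=\tilde I_{\theta,f}^{-1}=\frac{\theta(1-\theta\delta)}{\delta}=\theta\Big(\frac1\delta-\theta\Big),
\]
where I would double-check the elementary integral: under $\mathbb{P}_{\theta,f}$ one has $\mathbb{P}_{\theta,f}(X\in[1-\delta,1])=\theta\delta$ since $f$ vanishes there, so $\mathbb{P}_{\theta,f}(\tilde\psi_{\theta,f})=\delta^{-1}\theta\delta-\theta=0$ and $\mathbb{P}_{\theta,f}(\tilde\psi_{\theta,f}^2)=\delta^{-2}\theta\delta-\theta^2=\theta/\delta-\theta^2$, as claimed.

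Finally, for part ii), the "if and only if" characterization of asymptotically efficient (best regular) estimators: by the general theory (Lemma 25.23 / the discussion around the convolution theorem in \cite{VanderVaart1998}), an estimator sequence is best regular at $\mathbb{P}_{\theta,f}$ if and only if it is asymptotically linear with influence function equal to the efficient influence function, i.e.
\[
\sqrt n(\hat\theta_n-\theta)=\frac1{\sqrt n}\sum_{i=1}^n\tilde\psi_{\theta,f}(X_i)+o_{\mathbb{P}_{\theta,f}}(1).
\]
Substituting $\tilde\psi_{\theta,f}(x)=\delta^{-1}\mathbf 1_{[1-\delta,1]}(x)-\theta$ and rearranging (the $-\theta$ terms contribute $-\theta$ after averaging, which combines with the $\theta$ on the left) yields exactly \eqref{eq:efficiency}. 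I would note explicitly that this step needs the efficient influence function to genuinely exist and equal $\tilde I_{\theta,f}^{-1}\tilde l_{\theta,f}$, which is guaranteed here because $\delta>0$ makes the efficient information strictly positive; this is the only place the hypothesis $\delta>0$ is essential in part ii). The main obstacle is making the $\delta=0$ argument rigorous — specifically, constructing paths along which the functional moves arbitrarily fast relative to their Fisher information, which requires exhibiting an explicit sequence of nuisance scores $h_0$ (via~\eqref{eq-path}) whose associated scores approximate $\dot l_{\theta,f}$ in $\mathbb L_2(\mathbb{P}_{\theta,f})$; the remaining steps are direct applications of quoted theorems and elementary integration.
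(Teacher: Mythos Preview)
For $\delta>0$, your parts i) and ii) are correct and coincide with the paper's proof: both simply invoke Theorem~25.21 and Lemma~25.23 of \cite{VanderVaart1998} after plugging in the expressions from Theorem~\ref{thm:I_and_psi}, and your verification of $\mathbb{P}_{\theta,f}(\tilde\psi_{\theta,f}^2)=\theta(\delta^{-1}-\theta)$ is fine.

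For $\delta=0$, the underlying idea---zero efficient information forces infinite LAM risk because $\dot l_{\theta,f}$ can be approximated by nuisance scores---is right, but your formulation has a slip. You claim that a pure nuisance path $t\mapsto f_t$ with score $h\approx\dot l_{\theta,f}$ makes $t\mapsto\psi(P_{t,h})$ move at rate $c$; but along any nuisance path $\psi(P_{t,h})\equiv\theta$, so the derivative is $0$, not $c$. The repairable version considers the \emph{combined} path $t\mapsto(\theta+t,f_t)$ with total score $\dot l_{\theta,f}-h$: here $\psi$ moves at rate $1$ while the Fisher information along the path is $\|\dot l_{\theta,f}-h\|^2$, which can be made arbitrarily small, so the scalar LAM bound $\|\dot l_{\theta,f}-h\|^{-2}$ blows up.

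The paper avoids constructing explicit paths altogether. It picks an orthonormal basis $\{h_i\}$ of ${}_f\dot{\mathcal P}_0$ with $\dot l_{\theta,f}\notin\mathrm{lin}(h_1,\dots,h_m)$ for every $m$, and applies the LAM theorem to the submodel with nuisance tangent space ${}_f\dot{\mathcal P}_{0,m}=\mathrm{lin}(h_1,\dots,h_m)$. The efficient information there, $\tilde I_{\theta,f,m}=\|\dot l_{\theta,f}-\Pi_m\dot l_{\theta,f}\|^2$, is strictly positive, so Lemma~25.25 applies and yields the lower bound $\tilde I_{\theta,f,m}^{-1}$; since $\tilde I_{\theta,f,m}\to\tilde I_{\theta,f}=0$ as $m\to\infty$, the supremum over finite $E_0$ is $+\infty$. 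This route is cleaner precisely because it sidesteps the obstacle you flag---building explicit admissible paths $t\mapsto f_t$ inside the constrained class $\mathcal F_0$---and works purely at the level of tangent spaces.
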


\begin{remark}\label{rem:efficiency_oracle}
A) When $\delta=0$, using Theorem 2 in \cite{Chamberlain1986}, we conclude that there
is no regular estimator for $\theta$  relative to the tangent
set  $\dot{\mathcal{P}}_{0}$.  This implies  that  if  there exists  a
$\sqrt{n}$-consistent estimator  in model $\mathcal{P}_0$,  it can not
have finite asymptotic variance. 
In other words, we could have $\sqrt{n}(\hat \theta_n-\theta)=O_{\mathbb{P}}(1)$ for some estimator $\hat \theta_n$ but then $\Var(\sqrt{n}\hat \theta_n)\to+\infty$. 
However, we note that the only rates of convergence obtained until now
in this case are nonparametric ones.\\
B) When $\delta>0$,   for fixed parameter value $\lambda$  such that $G(\lambda)
  <1$,        \citeauthor{Storey2002}'s        estimator        $\hat
  \theta^{\text{Storey}}(\lambda)$ satisfies 
\[
\sqrt {n}\left( \hat \theta^{\text{Storey}}(\lambda) -\frac {1-G(\lambda)}{1-\lambda}
\right)   \xrightarrow[n\to  \infty]  {d}   N\left(0,  \frac
  {G(\lambda)(1-G(\lambda))}{(1-\lambda) ^2}
\right)
\]
\citep[see for instance][]{Genovese2004}.  In particular, if we assume
that  $f$   vanishes  on   $[\lambda  ,  1]$   then  we   obtain  that
$G(\lambda)=1-\theta(1-\lambda)$ and $\hat
  \theta^{\text{Storey}}(\lambda)$  becomes a  $\sqrt{n}$-consistent  estimator of
  $\theta$, which is moreover asymptotically distributed, with  asymptotic variance 
\[
\theta \left( \frac 1 {1-\lambda} -\theta\right).
\]
In  this  sense,   the  oracle  version  of  \citeauthor{Storey2002}'s
estimator  that picks $\lambda  =1-\delta$ (namely  choosing $\lambda$
as  the smallest  value such  that $f$  vanishes on  $[\lambda,1]$) is
asymptotically efficient. Note also that $\hat
  \theta^{\text{Storey}}(\lambda)$                        automatically
  satisfies~\eqref{eq:efficiency}.
\end{remark}

\section{Upper bounds for the quadratic risk and efficiency (when $\delta>0$)}\label{sec:sqrtn_consistency}
In this section, we  investigate the existence of asymptotically efficient estimators
for $\theta$, in the case where $\delta>0$. We consider histogram based estimators of $\theta$ where a nonparametric histogram estimator $\hat g$ of $g$ is combined with an interval selection that aims at picking an interval where $g$ is equal to $\theta$. We  start   by  establishing  the   existence  of  $\sqrt{n}$-consistent
estimators: a simple histogram based procedure is studied in Section~\ref{sec:his-estimator} while a more elaborate one is the object of Section~\ref{sec:CR-estimator}. Finally in Section~\ref{sec:one_step} we  explain the general one-step method  to construct an asymptotically      efficient     estimator     relying      on     a
$\sqrt{n}$-consistent procedure and  discuss conditions under which 
an asymptotically efficient estimator could be obtained in model $\mathcal{P}_\delta$.

\subsection{An histogram based estimator}\label{sec:his-estimator}
Throughout this  section and the following one, we assume that the density $f$ belongs to $\mathbb{L}^2([0,1])$.
Let  $\hat{g}_I$ be a histogram estimator corresponding to a partition $I=(I_k)_{1,\ldots,D}$ of
$[0,1]$, defined by
\begin{equation*}
\hat{g}_I(x)=\sum_{k=1}^D\frac{n_k}{n|
  I_k|}\mathbf{1}_{I_k}(x) , 
\end{equation*}
where $n_k=\text{card}\{i : X_i \in I_k\}$ is the number of
observations in $I_k$
and $|I_k|$ is the width of interval $I_k$. We estimate $\theta$
by the minimal value of $\hat{g}_I$, that is 
\begin{equation}
  \label{eq:his-estimator}
\hat   \theta_{I,n}   =    \min_{1\le   k\le   D}   \frac{n_k}{n|I_k|}
=\frac{n_{\hat k_n}}{n|I_{\hat k_n}|} ,
\end{equation}
where we let 
\[
\hat k_n \in \Argmin_{1\le k \le D} \left \{ \frac{ n_k} {n|I_k|} = \frac 1 {n
  |I_k|} \sum_{i=1}^n \mathbf{1}_{X_i\in I_k} \right \} .
\]

Note  that histogram estimators  are natural  nonparametric estimators
for $g$ when assuming that $f\in \mathcal{F}_\delta$ with $\delta >0$, that is $g$ is
constant on  an interval. It  is easy to  see that $\hat  \theta_{I,n}$ is
$\sqrt{n}$-consistent as soon as the partition $I$ is fine enough. We moreover establish that this estimator has a variance of the order $1/n$. The proof of this result appears in section~\ref{sec:proof_upper}.

\begin{prop}\label{prop:sqrtn_consistent}
 Fix $\delta >0$ and suppose that $f \in \mathcal{F}_\delta$. Assume
 moreover that the  partition $I$  is such  that
 $\max_k   |I_k|$   is  small   enough,   then the estimator  $\hat
 \theta_{I,n}$ has the following properties
\begin{itemize}
\item[i)] $\hat \theta_{I,n}$ converges almost surely to $\theta$,  
\item[ii)] $\hat \theta_{I,n}$ is $\sqrt{n}$-consistent, \emph{i.e.}
  $\sqrt{n} (\hat \theta_{I,n} -\theta) =O_{\mathbb{P}}(1)$, 
\item[iii)] $\underset{n \rightarrow \infty}{\limsup}
  \Var(\sqrt{n}\hat\theta_{I,n}) < +\infty$.
\end{itemize}
\end{prop}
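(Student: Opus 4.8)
The plan is to analyze the histogram estimator $\hat\theta_{I,n}$ directly, exploiting the fact that on the last interval(s) of a fine enough partition $I$, the density $f$ vanishes, so $g \equiv \theta$ there and the corresponding histogram bin count $n_k/(n|I_k|)$ is an unbiased estimator of $\theta$ with variance of order $1/n$. Concretely, since $f \in \mathcal{F}_\delta$ is non increasing and vanishes exactly on $[1-\delta,1]$, and is positive on $[0,1-\delta)$, if $\max_k |I_k|$ is small enough there is at least one interval $I_{k_0}$ entirely contained in $[1-\delta,1]$; on every interval $I_k$ not contained in $[1-\delta,1]$ the average of $g$ over $I_k$ is strictly larger than $\theta$ (using monotonicity and positivity of $f$ to get a uniform lower bound $\theta + \eta$ for some $\eta>0$ depending on $I$). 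This separation is the structural backbone of all three claims.

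For part i), I would write $\hat\theta_{I,n} = \min_k \bar g_{I,n}(k)$ where $\bar g_{I,n}(k) = n_k/(n|I_k|)$, and note that by the strong law of large numbers each $\bar g_{I,n}(k) \to \frac{1}{|I_k|}\int_{I_k} g\,d\mu$ almost surely. Since the partition is finite, the minimum of finitely many a.s.-convergent sequences converges a.s.\ to the minimum of the limits, which equals $\theta$ by the separation argument above (the minimum over all bins of the bin-averages of $g$ is attained on a bin inside $[1-\delta,1]$ and equals $\theta$). For part ii), on the event that $\hat k_n$ is a bin contained in $[1-\delta,1]$ — which has probability tending to one by the separation $\eta>0$ and a large-deviations / Hoeffding bound on the finitely many bins outside $[1-\delta,1]$ — we have $\hat\theta_{I,n} = n_{\hat k_n}/(n|I_{\hat k_n}|)$ with $n_{\hat k_n} \sim \mathrm{Bin}(n, \theta|I_{\hat k_n}|)$ restricted to the family of "good" bins, so $\sqrt n(\hat\theta_{I,n}-\theta)$ is $O_{\mathbb P}(1)$; on the complementary event (probability exponentially small) one controls the contribution crudely since $0 \le \hat\theta_{I,n} \le \max_k \bar g_{I,n}(k)$.

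For part iii) — which I expect to be the main obstacle, since almost-sure convergence and $O_{\mathbb P}(1)$ do not by themselves control the variance — the key is to get a genuinely quantitative handle on $\esp[(\sqrt n(\hat\theta_{I,n}-\theta))^2] = n\,\esp[(\hat\theta_{I,n}-\theta)^2]$. I would split according to whether $\hat k_n$ lands in a bin inside $[1-\delta,1]$ or not. On the good event, $(\hat\theta_{I,n}-\theta)^2 \le \max_{k : I_k \subset [1-\delta,1]} (\bar g_{I,n}(k)-\theta)^2 \le \sum_{k : I_k \subset [1-\delta,1]} (\bar g_{I,n}(k)-\theta)^2$, and each summand has expectation $\theta(1-\theta|I_k|)/(n|I_k|) = O(1/n)$ since $n_k$ is binomial with mean $n\theta|I_k|$; summing the finitely many bins gives an $O(1/n)$ bound after multiplication by $n$. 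On the bad event, I bound $(\hat\theta_{I,n}-\theta)^2 \le C + \max_k \bar g_{I,n}(k)^2 \le C + \sum_k \bar g_{I,n}(k)^2$ and use that $\esp[\bar g_{I,n}(k)^2] = \mathrm{Var}(\bar g_{I,n}(k)) + (\esp \bar g_{I,n}(k))^2$ is bounded uniformly in $n$ (indeed $\esp[n_k^2] = n|I_k|\esp[\hat g_I] (1 + (n-1)|I_k|\esp \hat g_I)$, so $\esp[\bar g_{I,n}(k)^2]$ stays bounded), while the bad event has probability $\le e^{-cn}$; by Cauchy--Schwarz, $n\,\esp[(\hat\theta_{I,n}-\theta)^2 \mathbf{1}_{\text{bad}}] \le n\, (\esp[(\hat\theta_{I,n}-\theta)^4])^{1/2} \mathbb P(\text{bad})^{1/2} \to 0$ provided the fourth moment of $\hat\theta_{I,n}$ is polynomially bounded in $n$, which follows from the same binomial-moment computations (the histogram heights have all moments bounded by polynomials in $n$, in fact uniformly bounded moments). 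Combining the two pieces yields $\limsup_n \mathrm{Var}(\sqrt n \hat\theta_{I,n}) \le \limsup_n n\,\esp[(\hat\theta_{I,n}-\theta)^2] < \infty$. The delicate point to get right is the uniform (in $n$) moment control of $\bar g_{I,n}(k)$ on the exceptional event together with the exponential smallness of that event, so that their product is negligible after the $\sqrt n$ scaling.
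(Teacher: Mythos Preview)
Your proposal is correct and follows essentially the same route as the paper: split the bins into those contained in $[1-\delta,1]$ and the rest, apply the SLLN/CLT on the good bins, and use Hoeffding's inequality to show the bad event $\{\hat k_n \text{ lies in a bin not contained in } [1-\delta,1]\}$ has exponentially small probability. One small simplification available in part~iii): since $n_k \le n$ forces $\hat\theta_{I,n} \le \min_k 1/|I_k|$ deterministically, the bad-event contribution is bounded by a constant times $n\,\mathbb{P}(\text{bad}) \to 0$, so the Cauchy--Schwarz/fourth-moment step you sketch there is unnecessary (the paper itself uses this deterministic bound on the bad event, reserving Cauchy--Schwarz and fourth moments for the good bins, whereas your $\max \le \sum$ bound on the good event is actually cleaner than what the paper does).
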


Note that while $\sqrt{n}$-consistency and a control of the variance
of $\sqrt{n}\hat  \theta_{I,n}$ are  proved in the  above proposition,
asymptotic normality of $\hat \theta_{I,n}$ or the value of its
asymptotic variance are difficult to obtain. Indeed, for any deterministic interval $I_k$, the central limit theorem (CLT) applies on the estimator $n_k/(n|I_k|)$. However, an histogram based estimator such as $\hat \theta_{I,n}$ is based on the selection of a random interval $\hat I$ and the CLT fails to apply directly on $n_{\hat I}/(n|\hat I|)$. Note also that the choice of the partition $I$ is not solved here. From a practical point of view, decreasing the parameter $\max_k   |I_k|$ will in fact increase the variance of the estimator. In the next section, we study a procedure that automatically selects the best partition among a given collection.


\subsection{\cite{Celisse-Robin2010}'s procedure}\label{sec:CR-estimator}
We recall here the procedure for estimating $\theta$ that is
presented in \cite{Celisse-Robin2010}. It relies on  an elaborate histogram approach that selects the best partition among a given collection. As it will be seen from the simulations experiments (Section~\ref{sec:simus}), its asymptotic variance is likely to be smaller than for the previous estimator, justifying our interest into this procedure. Unfortunately, from a theoretical point of view, we only establish that this estimator should be as good as the previous one. Note that since not many estimators of $\theta$ have been proved to be $\sqrt{n}$-convergent, this is already a non trivial result.

For a given integer
$M$, define $\mathcal{I}_M$ as the set of partitions of $[0,1]$ such
that for some integers $k, l$ with $2 \leq k+2 \leq l \leq M$, the
first $k$ intervals and the last $M-l$ ones are regular of width
$1/M$, namely
\begin{equation*}
\mathcal{I}_M = \big\{ I=(I_i)_i : \forall i \neq k+1, |I_i| =
  \frac{1}{M},\ |I_{k+1}| = \frac{l-k}{M},\  2\leq k+2
\leq l \leq M \big\}.
\end{equation*}
These partitions are motivated by the assumption that $f$ vanishes on a set $[\lambda,\mu]\subset [0,1]$.  
 Then for two given integers $m_{min} <m_{max} $, denote by $\mathcal{I}$ the following collection of  partitions  
\begin{equation}
\label{eq-collection}
\mathcal{I} = \underset{m_{min} \leq m \leq m_{max}}{\bigcup} \mathcal{I}_{2^m}.
\end{equation}
Every partition $I$ in $\mathcal{I}$ is characterized by a triplet
$(M=2^m,\lambda=k/M,\mu=l/M)$ and the quality of the histogram estimator
$\hat{g}_I$ is measured by its quadratic risk.  So in this sense, the
\emph{oracle estimator} $\hat{g}_{I^\star}$ is obtained through 
\begin{equation*}
I^\star = 
 \underset{I\in
  \mathcal{I}}{\text{argmin}}\ \esp[||g-\hat{g}_I||_2^2]= \underset{I\in  \mathcal{I}}{\text{argmin}}\ R(I),
\text{ where }  R(I)=\esp\Big[||\hat{g}_I||_2^2 - 2\int_0^1\hat{g}_I(x)g(x)dx\Big].
\end{equation*} 
However, for every partition $I$, the quantity $R(I)$  depends on $g$ which is unknown. Thus $I^\star$ is  an oracle and not  an estimator.  It is
then  natural  to replace  $R(I)$  by an  estimator.  In
\cite{Celisse-Robin2008, Celisse-Robin2010}, the authors use 
leave-p-out (LPO) estimator of $R(I)$ with
$p\in\{1,\ldots,n-1\}$, whose expression is given by \cite[see][Theorem 2.1]{Celisse-Robin2008}
\begin{equation}
\label{LPO-estimator}
\hat{R}_p(I) =
\frac{2n-p}{(n-1)(n-p)}\displaystyle\sum_k\frac{n_k}{n|I_k|} -  \frac{n(n-p+1)}{(n-1)(n-p)}\displaystyle\sum_k\frac{1}{|I_k|}\big(\frac{n_k}{n}\big)^2.
\end{equation}
The best theoretical value of $p$ is the one that minimizes the mean squared  error
(MSE) of $\hat{R}_p(I)$, namely 
\begin{equation*}
p^\star(I) = \underset{p\in
  \{1,\ldots,n-1\}}{\text{argmin}}MSE(p,I)=\underset{p\in
  \{1,\ldots,n-1\}}{\text{argmin}}\esp\Big[\big(\hat{R}_p(I) -
R(I)\big)^2\Big].
\end{equation*}
It  clearly  appears  that  $MSE(p,I)$  has the  form  of  a  function
$\Phi(p,I,\alpha)$    \cite[see][Proposition   2.1]{Celisse-Robin2008}
depending on the  unknown vector $\alpha = (\alpha_1,  \alpha_2, \ldots ,
\alpha_D)$ with $\alpha_k = \mathbb{P}(X_1\in I_k)$. A natural idea is then  to replace the
$\alpha_k$s in $\Phi(p,I,\alpha)$ by their empirical counterparts $\hat{\alpha}_k=n_k/n$ and
an estimator of $p^\star(I)$ is therefore given by
\begin{equation*}
\hat{p}(I) = \underset{p\in
  \{1,\ldots,n-1\}}{\text{argmin}}\widehat{MSE}(p,I) = \underset{p\in
  \{1,\ldots,n-1\}}{\text{argmin}}\Phi(p,I,\hat{\alpha}).
\end{equation*}
The exact calculation of $\hat{p}(I)$ may be found in Theorem~3.1 from
\cite{Celisse-Robin2008}. Hence, the procedure for estimating $\theta$ is the following one
\begin{enumerate}
\item For each partition $I \in \mathcal{I}$, define
  $\hat{p}(I) = \underset{p\in
  \{1,\ldots,n-1\}}{\text{argmin}}\widehat{MSE}(p,I)$,
\item Choose $\hat{I} =
  (\hat{M},\hat{\lambda},\hat{\mu}) \in \underset{I \in
    \mathcal{I}}{\text{argmin}}\ \hat{R}_{\hat{p}(I)}(I)$ such that
  the width of the interval $[\hat{\lambda},\hat{\mu}]$ is maximum,
\item Estimate $\theta$ by $\hat{\theta}^{CR}_n = \text{card}\{i:X_i \in
  [\hat{\lambda},\hat{\mu}]\}/[n(\hat{\mu} - \hat{\lambda})]$.
\end{enumerate}
\begin{remark}
In our procedure, we consider the set of natural partitions defined by~\eqref{eq-collection}, while \cite{Celisse-Robin2010} use the one
defined by 
\begin{equation*}
\mathcal{I} = \underset{M_{min} \leq M \leq M_{max}}{\bigcup} \mathcal{I}_{M}.
\end{equation*}
This change is natural for lowering the complexity of the algorithm
and has no consequences on the theoretical properties of the estimator. In particular, if we assume the function $f$
vanishes on an interval $[1-\delta, 1]$, then the complexity of the
algorithm is simpler when we consider the following set of
partitions
\begin{equation*}
\mathcal{I} = \underset{m_{min} \leq m \leq m_{max}}{\bigcup} \mathcal{I}_{2^m},
\end{equation*} 
where
\begin{equation*}
\mathcal{I}_M = \big\{ I^{(k)}=(I_i)_{i=1,\ldots,k+1} : \forall i \leq k, |I_i| =
  \frac{1}{M},\ |I_{k+1}| = \frac{M-k}{M},\  1\leq k \leq M-2 \big\}.
\end{equation*}
\end{remark}

In \cite{Celisse-Robin2010}, the authors only establish convergence in probability of this estimator. Here, we prove its almost sure convergence, $\sqrt{n}$-consistency and establish that its variance is of the order $1/n$. Let us first introduce some assumptions. 

\begin{assum}
\label{assum1}
Density $f$ is null  on an interval $[\lambda^\star, \mu^\star]\subset
(0,1]$ (with  unknown values $\lambda^\star$ and  $\mu^\star$) and $f$
is monotone outside the interval $[\lambda^\star,\mu^\star]$.
\end{assum}

For example, $f$ is decreasing on
$[0,\lambda^\star]$    and   increasing   on    $[\mu^\star,1]$.   This assumption is
stronger  than  Assumption  A'  in \cite{Celisse-Robin2010},  the
latter  not  being  sufficient  to  establish    the  result  they
  claim (see  the proof of Lemma~\ref{lem3} for more  details).  The
monotonicity part of our assumption is not
necessary and we shall explain what is exactly required and how we use
the previous assumption in the proof of Lemma~\ref{lem3}. Under Assumption~\ref{assum1}, the true parameter $\theta$ is equal to $g(x)$ for all $x$
in $[\lambda^\star,  \mu^\star]$. Note that  the case where  we impose
$\mu^\star=1$ is included in this setting. 
We now introduce a technical condition that comes from~\cite{Celisse-Robin2010}. We let 
\[
 \forall (i,j) \in \mathbb{N}^2, \quad 
 s_{ij} =  \sum_{k=1}^D \frac{\alpha_k^i}{|I_k|^j} ,
\]
and further assume that the collection of partitions $\mathcal{I}$ and density $f$ are such that 
\begin{equation}
\label{condition-sij}
\forall I \in \mathcal{I}, \quad 
8s_{11}s_{21}-2s_{11}^2+8s_{32}-10s_{21}^2-4s_{22} \neq 0,\ s_{21}-s_{22}-s_{32}+3s_{11} \neq 0.
\end{equation}
This technical condition is used in~\cite{Celisse-Robin2010} to control the behaviour of the minimizer $\hat p(I)$.  We are now ready to state our result, whose proof can be found in Section~\ref{sec:proof_CR}.

\begin{prop}\label{prop:CR_sqrtn_consistent}
 Suppose that $f$ satisfies Assumption~\ref{assum1} as well as the technical condition~\eqref{condition-sij}. Assume
 moreover that  $m_{\max}$   is  large   enough,   then the estimator  $\hat
 \theta_{n}^{CR}$ has the following properties
\begin{itemize}
\item[i)] $\hat \theta^{CR}_{n}$ converges almost surely to $\theta$,  
\item[ii)] $\hat \theta^{CR}_{n}$ is $\sqrt{n}$-consistent, \emph{i.e.}
  $\sqrt{n} (\hat \theta^{CR}_{n} -\theta) =O_{\mathbb{P}}(1)$, 
\item[iii)] If $p$ is fixed then $\underset{n \rightarrow \infty}{\limsup}
  \Var(\sqrt{n}\hat\theta^{CR}_{n}) < +\infty$. 
\end{itemize}
\end{prop}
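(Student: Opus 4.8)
The plan is to exploit that, by~\eqref{eq-collection}, $\mathcal{I}$ is a \emph{finite} collection of partitions that does not depend on $n$, so that the whole argument reduces to controlling which partition $\hat I\in\mathcal{I}$ gets selected. \emph{Step~1 (the asymptotically optimal partitions).} For $I=(I_k)_{k\le D}\in\mathcal{I}$ let $g_I$ be the $\mathbb{L}_2([0,1])$-orthogonal projection of $g$ onto the histograms built on $I$, and set $b(I)=\|g-g_I\|_2^2$. The usual bias--variance decomposition gives $R(I)=-\|g_I\|_2^2+n^{-1}\sum_k\alpha_k(1-\alpha_k)/|I_k|=-\|g\|_2^2+b(I)+O(1/n)$, hence $R(I)\to R_\infty(I):=-\|g\|_2^2+b(I)$ uniformly over the finite set $\mathcal{I}$. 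Under Assumption~\ref{assum1} — in fact, under the weaker requirement that $g$ be non constant on every subinterval of positive length meeting $[0,1]\setminus[\lambda^\star,\mu^\star]$, which is all we actually use — one checks that $b$ is minimal exactly on
\[
\mathcal{I}^\star=\bigl\{I\in\mathcal{I}_{2^{m_{\max}}}:\ \text{the large interval of }I\text{ is contained in }[\lambda^\star,\mu^\star]\bigr\}:
\]
refining an arbitrary $I\in\mathcal{I}$ to the regular grid of mesh $2^{-m_{\max}}$ can only decrease $b$, and equality forces $g$ to be constant on every interval of $I$, which — outside the trivial case $[\lambda^\star,\mu^\star]=[0,1]$ — holds only for $I\in\mathcal{I}^\star$; and $\mathcal{I}^\star\neq\emptyset$ as soon as $m_{\max}$ is large enough. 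Set $2\eta=\min\{R_\infty(I)-R_\infty(\bar I):I\in\mathcal{I}\setminus\mathcal{I}^\star\}>0$ for some (any) $\bar I\in\mathcal{I}^\star$. The key feature is that every $I\in\mathcal{I}^\star$ has its large interval $I_0\subseteq[\lambda^\star,\mu^\star]$, so that $\mathbb{P}(X_1\in I_0)=\theta|I_0|$ and $|I_0|\ge 2^{-m_{\max}+1}=:c_0>0$.

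\emph{Step~2 (the selected partition lies in $\mathcal{I}^\star$ with overwhelming probability).} I would prove $\mathbb{P}(\hat I\notin\mathcal{I}^\star)\le C_1 e^{-C_2 n}$. Indeed, on the event $\mathcal{E}_n=\{\,|\hat R_{\hat p(I)}(I)-R_\infty(I)|<\eta/2\text{ for all }I\in\mathcal{I}\,\}$ and for $n$ large, the criterion $\hat R_{\hat p(I)}(I)$ is strictly smaller on $\mathcal{I}^\star$ than on $\mathcal{I}\setminus\mathcal{I}^\star$, so $\Argmin_{I\in\mathcal{I}}\hat R_{\hat p(I)}(I)\subseteq\mathcal{I}^\star$ and hence, after the maximal-width tie-break of Step~2 of the procedure, $\hat I\in\mathcal{I}^\star$. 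The complement of $\mathcal{E}_n$ is handled by a union bound over the finite set $\mathcal{I}$: for fixed $p$, $\hat R_p(I)$ as given by~\eqref{LPO-estimator} is a quadratic function of $(n_1,\dots,n_D)/n$ with bounded coefficients that changes by $O(1/n)$ when one observation is modified, so McDiarmid's inequality yields $\mathbb{P}(|\hat R_p(I)-\esp\hat R_p(I)|>t)\le 2e^{-cnt^2}$, while a direct computation gives $\esp\hat R_p(I)\to R_\infty(I)$; for the data-driven choice $\hat p(I)$, the same concentration of $\hat R_{\hat p(I)}(I)$ around $R_\infty(I)$ is obtained from Lemma~\ref{lem3}, whose proof uses the technical condition~\eqref{condition-sij} to keep the minimiser $\hat p(I)$ of $\Phi(p,I,\hat\alpha)$ under control.

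\emph{Step~3 (conclusion).} On $\{\hat I\in\mathcal{I}^\star\}$ the estimator equals $\hat{\theta}_n^{CR}=(n|\hat I_0|)^{-1}\sum_{i=1}^n\mathbf{1}_{X_i\in\hat I_0}$ with $\hat I_0\subseteq[\lambda^\star,\mu^\star]$, so
\[
n\bigl(\hat{\theta}_n^{CR}-\theta\bigr)^2\mathbf{1}_{\hat I\in\mathcal{I}^\star}\ \le\ \sum_{I\in\mathcal{I}^\star}\frac{1}{n|I_0|^2}\Bigl(\sum_{i=1}^n\bigl(\mathbf{1}_{X_i\in I_0}-\theta|I_0|\bigr)\Bigr)^2,
\]
a finite sum of rescaled centred empirical averages (each with expectation $\le\theta/c_0$), whereas off $\mathcal{I}^\star$ one always has $0\le\hat{\theta}_n^{CR}\le 2^{m_{\max}-1}$. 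Assertion (i) then follows from the strong law of large numbers together with Borel--Cantelli applied to $\sum_n\mathbb{P}(\hat I\notin\mathcal{I}^\star)<\infty$; assertion (ii) follows from the central limit theorem on $\{\hat I\in\mathcal{I}^\star\}$ and from $\mathbb{P}(\hat I\notin\mathcal{I}^\star)\to0$; and for (iii), with $p$ fixed, $\Var(\sqrt n\,\hat{\theta}_n^{CR})\le\esp[n(\hat{\theta}_n^{CR}-\theta)^2]$ splits according to $\{\hat I\in\mathcal{I}^\star\}$ into a term bounded by $|\mathcal{I}^\star|\,\theta/c_0$ and a term bounded by $n\,4^{m_{\max}-1}\,C_1e^{-C_2n}\to0$, whence $\limsup_n\Var(\sqrt n\,\hat{\theta}_n^{CR})<\infty$.

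\emph{Expected main obstacle.} Steps~1 and~3 are essentially bookkeeping. The delicate point is Step~2 in the adaptive case, namely showing that $\hat R_{\hat p(I)}(I)$ still concentrates around $R_\infty(I)$ when $p=\hat p(I)$ is data driven; this is precisely the content of Lemma~\ref{lem3}, it is where the technical condition~\eqref{condition-sij} is needed, and it is also where Assumption~A' of~\cite{Celisse-Robin2010} turns out to be too weak. A secondary subtlety is the exact identification of $\mathcal{I}^\star$ in Step~1, which forces one to replace the monotonicity of $f$ outside $[\lambda^\star,\mu^\star]$ by the precise non-degeneracy condition alluded to above.
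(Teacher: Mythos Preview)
Your global strategy coincides with the paper's: identify the set of ``good'' partitions (those whose large interval sits inside $[\lambda^\star,\mu^\star]$ at the finest mesh), show that $\hat I$ eventually lands there, and on that event reduce to a sum of centred indicators. The decomposition in your Step~3 is exactly~\eqref{eq:estimator} in the paper. What is off is the attribution of the lemmas and, as a consequence, the strength of the concentration you claim in Step~2.

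\textbf{Lemma~\ref{lem3} is your Step~1, not your Step~2.} In the paper, Lemma~\ref{lem3} is the bias comparison $L(I)\ge L(I^{(N)})$ with equality iff $I\trianglelefteq I^{(N)}$; this is precisely your identification of $\mathcal{I}^\star$, and it is here (not in any concentration argument) that Assumption~\ref{assum1} replaces the too-weak Assumption~A' of \cite{Celisse-Robin2010}. The technical condition~\eqref{condition-sij} does not appear in Lemma~\ref{lem3} at all: it is used only to invoke Proposition~2.1 of \cite{Celisse-Robin2010}, which yields $\hat p(I)/n\to l_\infty(I)\in[0,1)$ almost surely.

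\textbf{The adaptive concentration you claim is not established and is not needed.} The paper obtains only $\hat L_{\hat p(I)}(I)\to L(I)$ a.s., by combining Lemma~\ref{lem2}(i) (a.s.\ convergence for fixed $p$), the continuity of $\hat R_p(I)$ in $p/n$, and $\hat p(I)/n\to l_\infty(I)$ a.s. This already gives $\mathbf 1\{\hat I\notin\mathcal{I}^\star\}\to 0$ a.s., hence~(i) directly (no Borel--Cantelli) and~(ii). For~(iii) the statement fixes $p$, so your McDiarmid argument (the paper uses Hoeffding on the pieces of $\hat L_p(I)-L(I)$) is fine. Your stronger claim $\mathbb P(\hat I\notin\mathcal{I}^\star)\le C_1e^{-C_2 n}$ in the adaptive case cannot be obtained by a uniform bound over $p$: writing the coefficients in~\eqref{LPO-estimator} one finds $b(p,n)-1=a(p,n)$ and hence $\hat R_p(I)-R_\infty(I)=a(p,n)(S_1-S_2)-(S_2-s_{21})$ with $S_1\to s_{11}$, $S_2\to s_{21}$; since $a(n-1,n)\to 1$, the right-hand side tends to $s_{11}-s_{21}\neq 0$ in general, so $\hat R_p(I)$ does \emph{not} concentrate around $R_\infty(I)$ uniformly in $p$. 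Getting an exponential bound in the adaptive case would require a separate concentration result for $\hat p(I)$ itself, which the paper does not prove.

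In short: keep your decomposition, but move the ``A' is too weak'' remark to Step~1 (it belongs to the bias comparison, i.e.\ Lemma~\ref{lem3}), replace the appeal to Lemma~\ref{lem3} in Step~2 by ``$\hat p(I)/n\to l_\infty(I)$ a.s.\ via~\eqref{condition-sij} and Celisse--Robin, plus Lemma~\ref{lem2} and continuity'', and drop the exponential claim in the adaptive regime.
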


Here again, asymptotic normality of $\hat \theta^{CR}_{n}$ or the exact value of its
asymptotic variance are difficult to obtain. Heuristically, one can explain that this procedure outperforms the simpler histogram based with fixed partition approach described in the previous section. Indeed, when considering a fixed partition, the latter should be fine enough to obtain convergence but refining the partition increases the variance of $\hat \theta_{I,n}$. Here, \citeauthor{Celisse-Robin2010}'s approach realizes a compromise on the size of the partition that is used.

\subsection{One-step estimators}\label{sec:one_step}
In this section, we introduce the one-step method to
construct an asymptotically efficient estimator, relying on a $\sqrt{n}$-consistent
one~\cite[see][Section 25.8]{VanderVaart1998}. Let $\hat{\theta}_n$
be a $\sqrt{n}$-consistent estimator of $\theta$, then
$\hat{\theta}_n$ can be discretized on grids of mesh
width $n^{-1/2}$. Suppose that we are given a sequence of estimators
$\hat{l}_{n,\theta} (\cdot)= \hat{l}_{n,\theta}(\cdot;X_1,\ldots,X_n)$ of
the efficient score function $\tilde{l}_{\theta,f}$. Define
with $m=\lfloor n/2 \rfloor$,
\[
  \hat{l}_{n,\theta,i}(\cdot) = \left\{
          \begin{array}{ll}
             \hat{l}_{m,\theta}(\cdot;X_1,\ldots,X_m)& \qquad \mathrm{if}\quad i> m, \\
             \hat{l}_{n-m,\theta}(\cdot;X_{m+1},\ldots,X_n) & \qquad \mathrm{if} \quad i\leq m.\\
          \end{array}
        \right.
\]
Thus, for $X_i$ ranging through each of the two halves of the sample,
we use an estimator $\hat{l}_{n,\theta,i}$ based on the other half of
the sample. We assume that, for every deterministic sequence $\theta_n = \theta +
O(n^{-1/2})$, we have
\begin{eqnarray}
\label{equa-vitesse-para}
\sqrt{n}\mathbb{P}_{\theta_n,f}\hat{l}_{n,\theta_n}\xrightarrow[n\to\infty]
{\mathbb{P}_{\theta,f}} 0, \\
\label{equa-variance}
\mathbb{P}_{\theta_n,f}\Arrowvert \hat{l}_{n,\theta_n}
-\tilde{l}_{\theta_n,f}\Arrowvert ^2\xrightarrow[n\to\infty]
{\mathbb{P}_{\theta,f}} 0, \\
\label{equa-int}
\int \Arrowvert \tilde{l}_{\theta_n,f}d\mathbb{P}_{\theta_n,f}^{1/2}
-\tilde{l}_{\theta,f}d\mathbb{P}_{\theta,f}^{1/2}\Arrowvert ^2\xrightarrow[n\to\infty]
{} 0.
\end{eqnarray}
Note that in the above notation, the term $\mathbb{P}_{\theta_n,f} \hat l$
for some random function $\hat  l$ is an abbreviation for the integral
$\int \hat l(x) d \mathbb{P}_{\theta_n,f}(x)$. 
Thus the  expectation is taken  with respect to  $x$ only and  not the
random variables in $\hat l$.
Now under the above assumptions, the one-step estimator defined as
\begin{equation*}
\tilde{\theta}_n = \hat{\theta}_n - \Big( \displaystyle\sum_{i=1}^n\hat{l}_{n,\hat{\theta}_n,i}^2(X_i) \Big)^{-1}\displaystyle\sum_{i=1}^n\hat{l}_{n,\hat{\theta}_n,i}(X_i),
\end{equation*}
is asymptotically efficient at $(\theta, f)$ \cite[see][Section 25.8]{VanderVaart1998}. This estimator
$\tilde{\theta}_n$ can be considered a one-step iteration of the
Newton-Raphson algorithm for solving an approximation of the equation
$\sum_{i} \tilde{l}_{\theta,f}(X_i)=0$ with respect to $\theta$, starting at
the initial guess $\hat{\theta}_n$.

Now,  we  discuss  a  converse result  on  necessary
conditions for  existence of an asymptotically  efficient estimator of
$\theta$ and its implications in model $\mathcal{P}_\delta$.


Under condition~\eqref{equa-int}, it is shown in Theorem
7.4 from \cite{VanderVaartlecture} that the existence of
an asymptotically efficient sequence of estimators of $\theta$ implies
the existence of a sequence of estimators $\hat{l}_{n,\theta}$ of $\tilde{l}_{\theta,f}$ satisfying~\eqref{equa-vitesse-para} and~\eqref{equa-variance}. In our
case, it is not difficult to prove that condition~\eqref{equa-int}
holds. Then, the estimator $\hat{l}_{n,\theta}$ of the efficient score function $\tilde{l}_{\theta,f}$ must
satisfy both a "no-bias"~\eqref{equa-vitesse-para} and a consistency~\eqref{equa-variance}
condition. The consistency is usually easy to arrange, but the
"no-bias" condition requires a convergence to zero of the bias
at a rate faster than $1/\sqrt{n}$. We thus obtain the following proposition, whose proof can be found in Section~\ref{sec:proof_upper}.

\begin{prop}\label{prop:equiv_efficiency}
  The existence of an asymptotically efficient sequence of
estimators of $\theta$ in model $\mathcal{P}_\delta$ is equivalent to
the existence of a sequence of estimators $\hat{l}_{n,\theta}$  of
the efficient score function $\tilde{l}_{\theta,f}$ satisfying~\eqref{equa-vitesse-para} and~\eqref{equa-variance}. 
Moreover,   if   the   efficient   score   function
$\tilde{l}_{\theta,f}$  is  estimated through  a  plug-in method  that
relies on an estimate $\hat{\delta}_n$ of the parameter $\delta$, then
this condition is equivalent to $\sqrt{n}(\hat{\delta}_n- \delta) =
o_{\mathbb{P}}(1)$. 
\end{prop}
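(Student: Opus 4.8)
The plan is to split the statement into two parts. The first part — equivalence between existence of an asymptotically efficient estimator and existence of a sequence $\hat l_{n,\theta}$ satisfying \eqref{equa-vitesse-para} and \eqref{equa-variance} — follows from invoking two results already available. The forward implication (efficient estimator $\Rightarrow$ good score estimator) is precisely Theorem 7.4 of \cite{VanderVaartlecture}, provided condition~\eqref{equa-int} holds; so the first task is to verify \eqref{equa-int} in our setup. Here I would use the explicit form of $\tilde l_{\theta,f}$ from Theorem~\ref{thm:I_and_psi}, namely $\tilde l_{\theta,f}(x)=\theta^{-1}-[\theta(1-\theta\delta)]^{-1}\mathbf 1_{[0,1-\delta)}(x)$, which depends on the nuisance parameter $f$ only through $\delta$ (a fixed constant of the model, not varying along the submodel) and is a bounded function of $\theta$. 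Since $\theta_n=\theta+O(n^{-1/2})\to\theta$, both $\tilde l_{\theta_n,f}\to\tilde l_{\theta,f}$ pointwise and uniformly on $[0,1]$, and $d\mathbb P_{\theta_n,f}^{1/2}\to d\mathbb P_{\theta,f}^{1/2}$ in $\mathbb L_2(\mu)$ by continuity of $\theta\mapsto\sqrt{\theta+(1-\theta)f}$ and dominated convergence; combining these gives \eqref{equa-int} by a routine triangle-inequality argument. The reverse implication (good score estimator $\Rightarrow$ efficient one-step estimator) is exactly the one-step construction recalled just above the proposition, taken from Section 25.8 of \cite{VanderVaart1998}: under \eqref{equa-vitesse-para}, \eqref{equa-variance} and \eqref{equa-int}, and starting from any $\sqrt n$-consistent estimator, $\tilde\theta_n$ is asymptotically efficient. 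So the first half is essentially a matter of checking hypotheses and citing.

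For the second part, suppose the efficient score is estimated by plug-in, $\hat l_{n,\theta}(x)=\theta^{-1}-[\theta(1-\theta\hat\delta_n)]^{-1}\mathbf 1_{[0,1-\hat\delta_n)}(x)$, and I want to show that \eqref{equa-vitesse-para}–\eqref{equa-variance} are together equivalent to $\sqrt n(\hat\delta_n-\delta)=o_{\mathbb P}(1)$. The key computation is the integral $\mathbb P_{\theta_n,f}\hat l_{n,\theta_n}=\int \hat l_{n,\theta_n}\,d\mathbb P_{\theta_n,f}$, taken with respect to $x$ only with $\hat\delta_n$ held as a fixed value inside the integrand. Using $\int_0^1 d\mathbb P_{\theta_n,f}=1$ and $\mathbb P_{\theta_n,f}([0,1-\hat\delta_n))=1-\theta_n\hat\delta_n-(1-\theta_n)F(1-\hat\delta_n)$ where $F$ is the cdf of $f$ (recall $F(1-\delta)=1$ since $f$ vanishes on $[1-\delta,1]$), a short algebraic simplification should yield that $\mathbb P_{\theta_n,f}\hat l_{n,\theta_n}$ equals a constant multiple of $(\hat\delta_n-\delta)$ plus a term controlled by $F(1-\hat\delta_n)-F(1-\delta)$; since $f$ is bounded near $1-\delta$ (it is continuous and non-increasing on $[0,1-\delta)$, hence bounded on a left-neighbourhood, and $f\equiv 0$ on $[1-\delta,1]$), this remainder is $O(|\hat\delta_n-\delta|)$ as well. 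Hence $\sqrt n\,\mathbb P_{\theta_n,f}\hat l_{n,\theta_n}\to 0$ in probability is equivalent to $\sqrt n(\hat\delta_n-\delta)\to 0$ in probability, which gives \eqref{equa-vitesse-para} $\Leftrightarrow$ $\sqrt n(\hat\delta_n-\delta)=o_{\mathbb P}(1)$. It then remains to check that \eqref{equa-variance}, $\mathbb P_{\theta_n,f}\|\hat l_{n,\theta_n}-\tilde l_{\theta_n,f}\|^2\to 0$, is implied by (indeed weaker than) $\hat\delta_n\to\delta$ in probability: the squared difference of the two indicators integrates to roughly $|\mathbb P_{\theta_n,f}([1-\delta,1]\triangle[1-\hat\delta_n,1])|=O(|\hat\delta_n-\delta|)$, again by boundedness of the density near $1-\delta$, so consistency of $\hat\delta_n$ suffices and no rate is needed for \eqref{equa-variance}. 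Thus the binding constraint is the no-bias condition, and the equivalence is established.

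The main obstacle is the second part, and within it the no-bias computation: one has to be careful that in $\mathbb P_{\theta_n,f}\hat l_{n,\theta_n}$ the integration is over $x$ with $\hat\delta_n$ frozen — this is exactly the abbreviation convention spelled out before the proposition — and that the linear-in-$(\hat\delta_n-\delta)$ main term does not accidentally cancel. A clean way to see it cannot cancel is to compute $\frac{\partial}{\partial\delta'}\big[\int \hat l_{n,\theta_n}(x;\delta')\,d\mathbb P_{\theta_n,f}(x)\big]$ at $\delta'=\delta$ and check it is nonzero (it equals $\tilde I_{\theta_n,f}$ up to sign, i.e. $\delta/[\theta(1-\theta\delta)]+o(1)>0$, since $\tilde l$ has mean zero exactly at $\delta'=\delta$), so that a first-order expansion in $\hat\delta_n$ gives the claimed equivalence without degeneracy. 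Everything else — the verification of \eqref{equa-int}, the reverse one-step direction, and the sufficiency of consistency for \eqref{equa-variance} — is routine once the explicit formulas of Theorem~\ref{thm:I_and_psi} are in hand.
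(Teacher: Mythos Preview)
Your overall strategy coincides with the paper's: verify condition~\eqref{equa-int} from the explicit form of $\tilde l_{\theta,f}$, invoke Theorem~7.4 of \cite{VanderVaartlecture} together with the one-step construction for the first equivalence, and then, for the plug-in part, reduce everything to the no-bias condition~\eqref{equa-vitesse-para} while noting that \eqref{equa-variance} already follows from mere consistency of $\hat\delta_n$. The paper proves \eqref{equa-int} by an explicit chain of inequalities rather than your dominated-convergence sketch, and handles the no-bias term by a direct algebraic decomposition of $\hat l_{n,\theta_n}-\tilde l_{\theta_n,f}$ rather than a derivative argument, but the architecture is the same.

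There is, however, a genuine gap in your non-degeneracy check. You assert that $\partial_{\delta'}\big[\int \hat l_{n,\theta_n}(x;\delta')\,d\mathbb P_{\theta_n,f}(x)\big]\big|_{\delta'=\delta}$ equals $\tilde I_{\theta_n,f}$ up to sign and is therefore nonzero. It does not. Writing $\Phi(\delta')=\theta^{-1}-[\theta(1-\theta\delta')]^{-1}G(1-\delta')$ with $G$ the cdf of $g_{\theta,f}$, a direct differentiation gives
\[
\Phi'(\delta)=\frac{g(1-\delta)-\theta}{\theta(1-\theta\delta)}=\frac{(1-\theta)\,f(1-\delta)}{\theta(1-\theta\delta)},
\]
and in $\mathcal F_\delta$ one has $f_{|[1-\delta,1]}=0$, so this derivative vanishes; your heuristic that it should coincide with the efficient information conflates differentiation in the nuisance direction $\delta'$ with differentiation in $\theta$. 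Consequently the first-order expansion in $\hat\delta_n-\delta$ does not by itself yield the claimed equivalence. The paper avoids this pitfall by carrying out the exact decomposition $\frac{1}{1-\theta_n\hat\delta_n}\mathbf 1_{[0,1-\hat\delta_n)}-\frac{1}{1-\theta_n\delta}\mathbf 1_{[0,1-\delta)}$ into two pieces and integrating each against $g_{\theta_n,f}$, obtaining an explicit bracket multiplying $\sqrt n(\hat\delta_n-\delta)$; you should reproduce that direct computation rather than rely on the derivative shortcut. (Incidentally, your formula $\mathbb P_{\theta_n,f}([0,1-\hat\delta_n))=1-\theta_n\hat\delta_n-(1-\theta_n)F(1-\hat\delta_n)$ has a sign slip: the correct value is $\theta_n(1-\hat\delta_n)+(1-\theta_n)F(1-\hat\delta_n)$.)
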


Let us  now explain the  consequences of this result.  The proposition
states  that efficient  estimators of  $\theta$ exist  if and  only if
estimators           of           $\tilde{l}_{\theta,f}$          that
satisfy~\eqref{equa-vitesse-para} and~\eqref{equa-variance} can be 
constructed. As  there is no  general method to estimate  an efficient
score function, such an estimator should rely on the specific 
expression~\eqref{eq:efficient_score}.   Though   we   cannot   claim  that   all   estimators   of
$\tilde{l}_{\theta,f}$ are plug-in estimates  based on an estimator of
the       parameter       $\hat       \delta$       plugged       into
expression~\eqref{eq:efficient_score}, it is likely to be the case. 
Then, existence  of efficient estimators of $\theta$  is equivalent to
existence  of estimators  of  $\delta$ that  converge  at faster  than
parametric rate. Note that this is possible for irregular models \citep[see Chapter 6 in][for more details]{Ibra}. However, for regular models, such estimators cannot be constructed and one might conjecture that efficient estimators of $\theta$ do not exist in regular models.


\section{Simulations}\label{sec:simus}
In this section, we give some illustrations of the previous results on
some simulated experiments  and explore the non asymptotic
performances of the estimators of $\theta$ previously discussed. We choose to
compare three different estimators: the  histogram based estimator $\hat \theta_{I,n}$ defined in Section~\ref{sec:his-estimator} through~\eqref{eq:his-estimator}, the more elaborate histogram based estimator $\hat \theta^{CR}_n$ proposed in~\cite{Celisse-Robin2010} and finally~\cite{Langaas2005}'s estimator, denoted by $\hat \theta^{L}_n$ and defined as the value $\hat g(X_{(n)})$ where $X_{(n)}$ is the largest $p$-value and $\hat g$ is Grenander's estimator of a decreasing density. We investigate the behaviour of these three different
estimators of $\theta$ under two different setups: $\delta =0$ and
$\delta \in (0, 1)$. More precisely, we consider  the alternative
density $f$ given by 
\begin{equation*}
f(x)=\frac{s}{1-\delta}\Big(1-\frac{x}{1-\delta}\Big)^{s-1}\mathbf{1}_{[0,1-\delta]}(x),
\end{equation*}
 where $\delta\in [0,1)$ and $s>1$. This form of density is 
 introduced in~\cite{Celisse-Robin2010} and covers various situations when varying its parameters. Note that $f$ is
 always decreasing, convex when $s\ge 2$ and concave when $s\in
 (1,2]$.  In the experiments, we consider a total of $8$ different models corresponding to different parameter values. These models are labeled as described in Table~\ref{tab:parameters}, distinguishing the cases $\delta=0$ and $\delta >0$. 
As an illustration, we represent some of the densities obtained for the
 $p$-values corresponding to $4$ out of the $8$ models in Figure~\ref{fig1:illustration}. For each estimator $\hat{\theta}_n$ of $\theta$, we compare the quantity  $ n\esp[(\hat{\theta}_n-\theta)^2]$ with the optimal
variance $\theta(\delta^{-1} -\theta)$ when this bound
exists. Equivalently, we compare the logarithm of mean squared error, $\log(\mbox{MSE})= \log \esp[(\hat{\theta}_n-\theta)^2]$ for each estimator $\hat{\theta}_n$ with $-\log(n)+\log[\theta(\delta^{-1} -\theta)]$. When $\delta =0$, we only
compare the slope of the line induced by $\log(\mbox{MSE})$ with the
parametric rate corresponding to a slope $-1$. In each case, we simulated data with sample size
$n\in \{5000; 7000; 9000; 10000; 12000; 14000;15000\}$ and perform
$R=100$ repetitions.

When computing the estimator $\hat\theta_{I,n}$, the choice of the partition $I$ surely affects the results. Here, we have chosen a regular partition $I$  such that it  is fine enough (we fixed $|I_k|<\delta$) but not too fine (choosing a too small value of $|I_k|$ increases the variance). The choice of the partition in the simple procedure $\hat \theta_{I,n}$ is an issue for real data problems. Our  goal here is to show that on simulated experiments, the "best" of these estimators still has a larger variance than $\hat \theta^{CR}_n$. Note that the partition $I$ is always included in the collection $\mathcal{I}$ of partitions from which $\hat \theta^{CR}_n$ is computed.

\begin{figure}[htbp]
\begin{center}
\includegraphics[width=0.6\columnwidth]{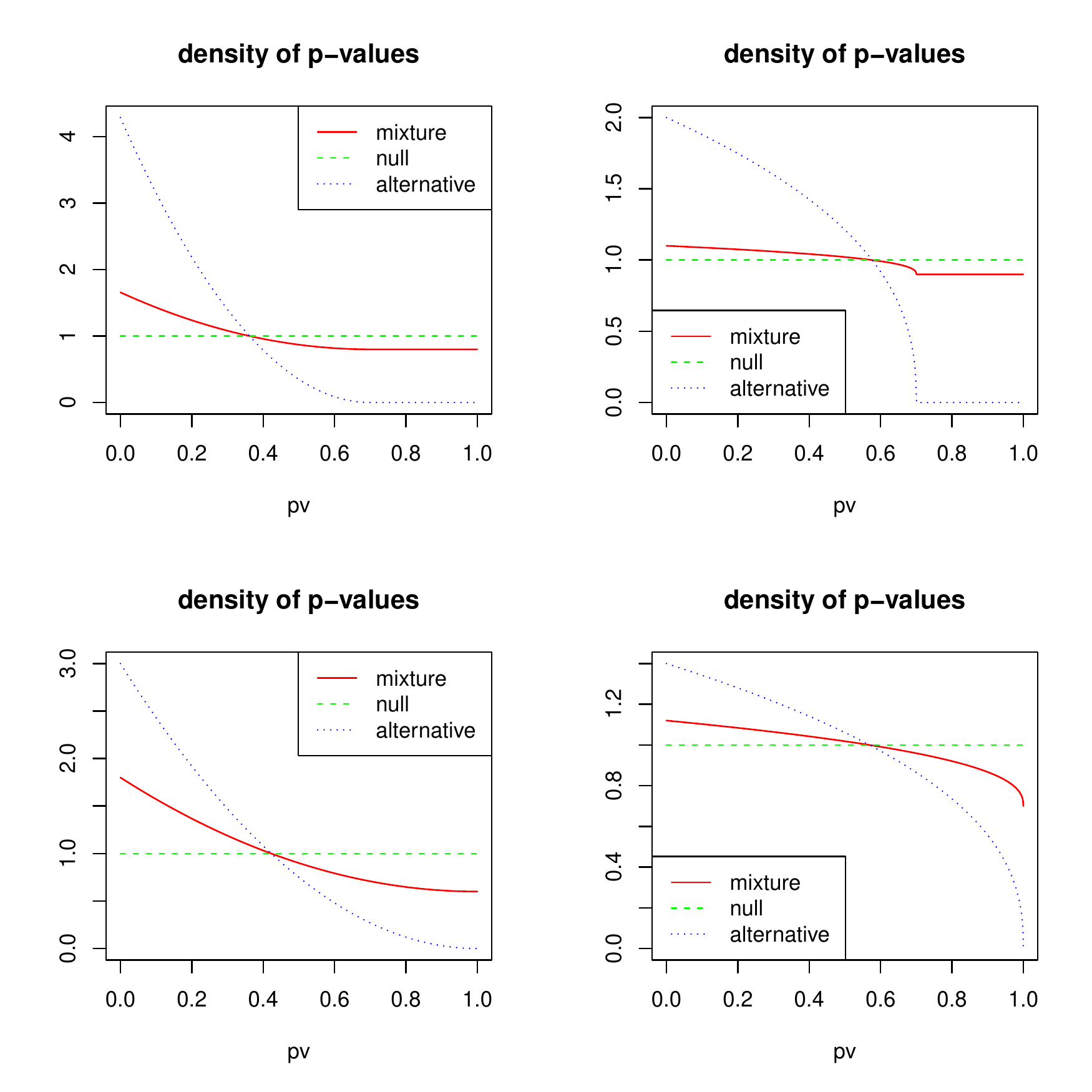}
\caption{Density function of the $p$-values. Top left: model ($b_1$); top right: model ($d_1$); bottom left: model ($a_2$); bottom right: model ($c_2$).}
\label{fig1:illustration}
\end{center}
\end{figure}

\begin{table}[h]
  \centering
  \begin{tabular}{c|c|c}
  $(s,\theta)$& $\delta = 0.3$ &$\delta = 0$\\
\hline 
$(3,0.6)$& $(a_1)$ & $(a_2)$ \\
$(3,0.8)$& $(b_1)$ & $(b_2)$ \\
$(1.4,0.7)$& $(c_1)$ & $(c_2)$ \\
$(1.4,0.9)$& $(d_1)$ & $(d_2)$ 
  \end{tabular}
  \caption{Labels of the 8 models with different parameter values.}
  \label{tab:parameters}
\end{table}

The results are presented in Figure~\ref{fig2:illustration} for the
case $\delta > 0$ and Figure~\ref{fig3:illustration} for the case $\delta =
0$. First, we note that in both cases ($\delta > 0$ and $\delta = 0$), \citeauthor{Langaas2005}'s estimator $\hat \theta^{L}_n$ has nonparametric rate of convergence (null slope) and performs badly compared to  $\hat \theta_{I,n}$ and $\hat \theta^{CR}_{n}$. In particular, when $\delta =0$ the two histogram based procedures  $\hat \theta_{I,n}$ and $\hat \theta^{CR}_{n}$ have better performances than the estimator $\hat \theta^{L}_n$ despite the fact that the latter is dedicated to the convex decreasing setup. Now, when $\delta >0$, both estimators $\hat \theta_{I,n}$ and $\hat \theta^{CR}_{n}$ exhibit a parametric rate of convergence (slope equal to $-1$). Moreover,  $\hat \theta^{CR}_{n}$ has a smaller variance than $\hat \theta_{I,n}$ (smaller intercept) and this variance is very close to the optimal one $\theta(\delta^{-1}-\theta)$. 
Now, when $\delta =0$, we observe two different behaviors depending on whether $f$ is convex or not. Indeed, for models ($a_2$) and ($b_2$) corresponding to the convex case, we observe that both estimators $\hat \theta_{I,n}$ and $\hat \theta^{CR}_{n}$ still exhibit a parametric rate of convergence, with a smaller variance for $\hat \theta^{CR}_{n}$. These estimators are thus robust to the assumption that $f$ vanishes on an interval in the convex setup. The results are slightly different when considering models ($c_2$) and ($d_2$) where $f$ is now concave. These estimators have a more erratic behaviour, exhibiting either parametric rate of convergence ($\hat \theta^{CR}_{n}$ in model ($c_2$) and $\hat \theta_{I,n}$ in model ($d_2)$) or nonparametric rates. Their respective performances in terms of variance are also less clear.
Nonetheless we conclude that  $\hat \theta^{CR}_{n}$ seems to exhibit the overall best performances, with parametric rate of convergence and almost optimal asymptotic variance.

\begin{figure}[htbp]
\begin{center}
\includegraphics[width=0.6\columnwidth]{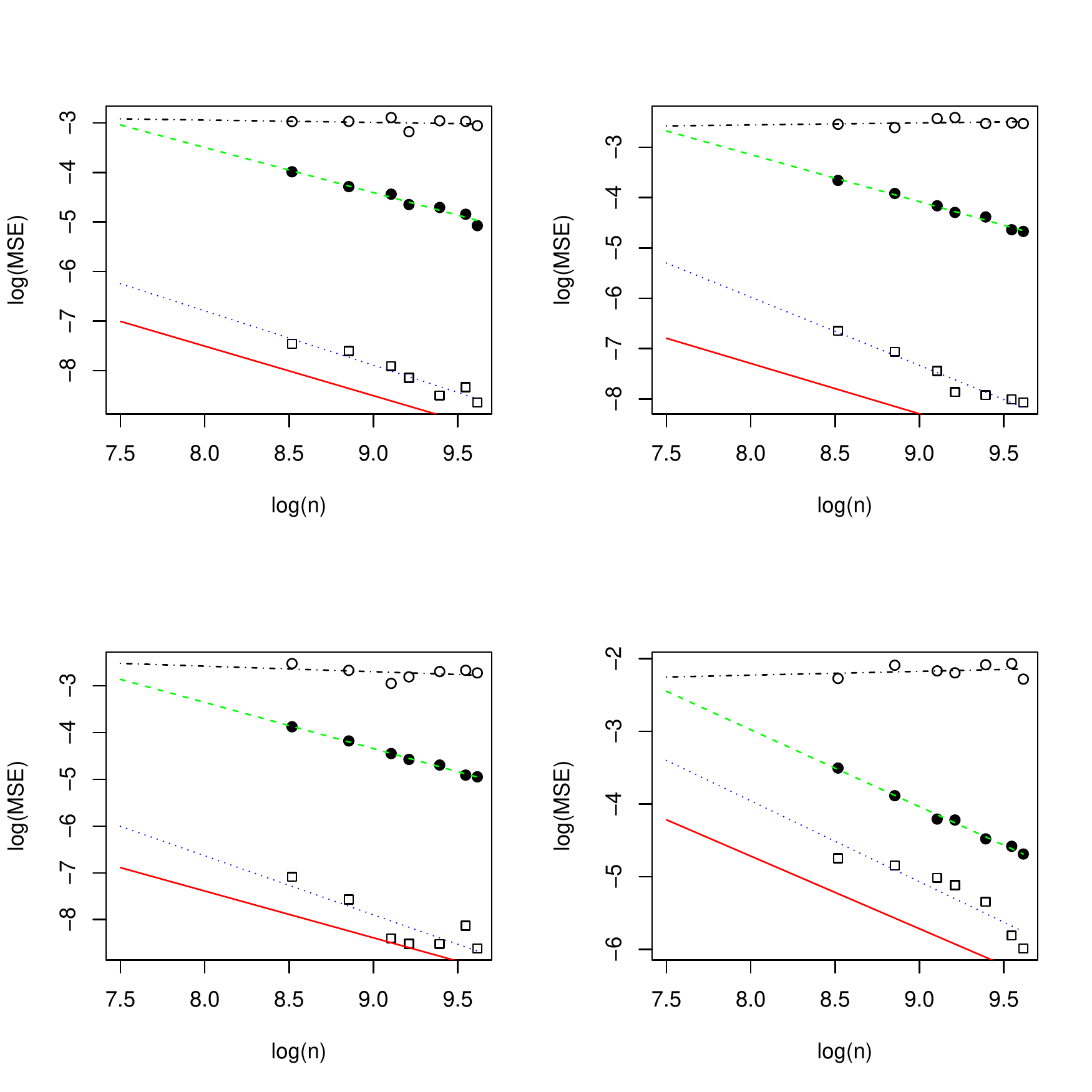}
\caption{Logarithm of the mean squared error  as a function of
  $\log(n)$ and corresponding linear regression for
 $\hat \theta^{L}_n$  ($\circ$ and black line,
  respectively), $\hat \theta^{CR}_n$ ($\square$ and
  blue line, respectively)  and $\hat \theta_{I,n}$ ($\bullet$ and green line, respectively) in  the case $\delta = 0.3$, for different parameter values: $(a_1)$ top left; $(b_1)$ top right; $(c_1)$ bottom left; $(d_1)$ bottom right. Red line represents the line   $y=-\log(n)+\log[\theta(\delta^{-1}-\theta)]$. }
\label{fig2:illustration}
\end{center}
\end{figure}

\begin{figure}[htbp]
\begin{center}
\includegraphics[width=0.6\columnwidth]{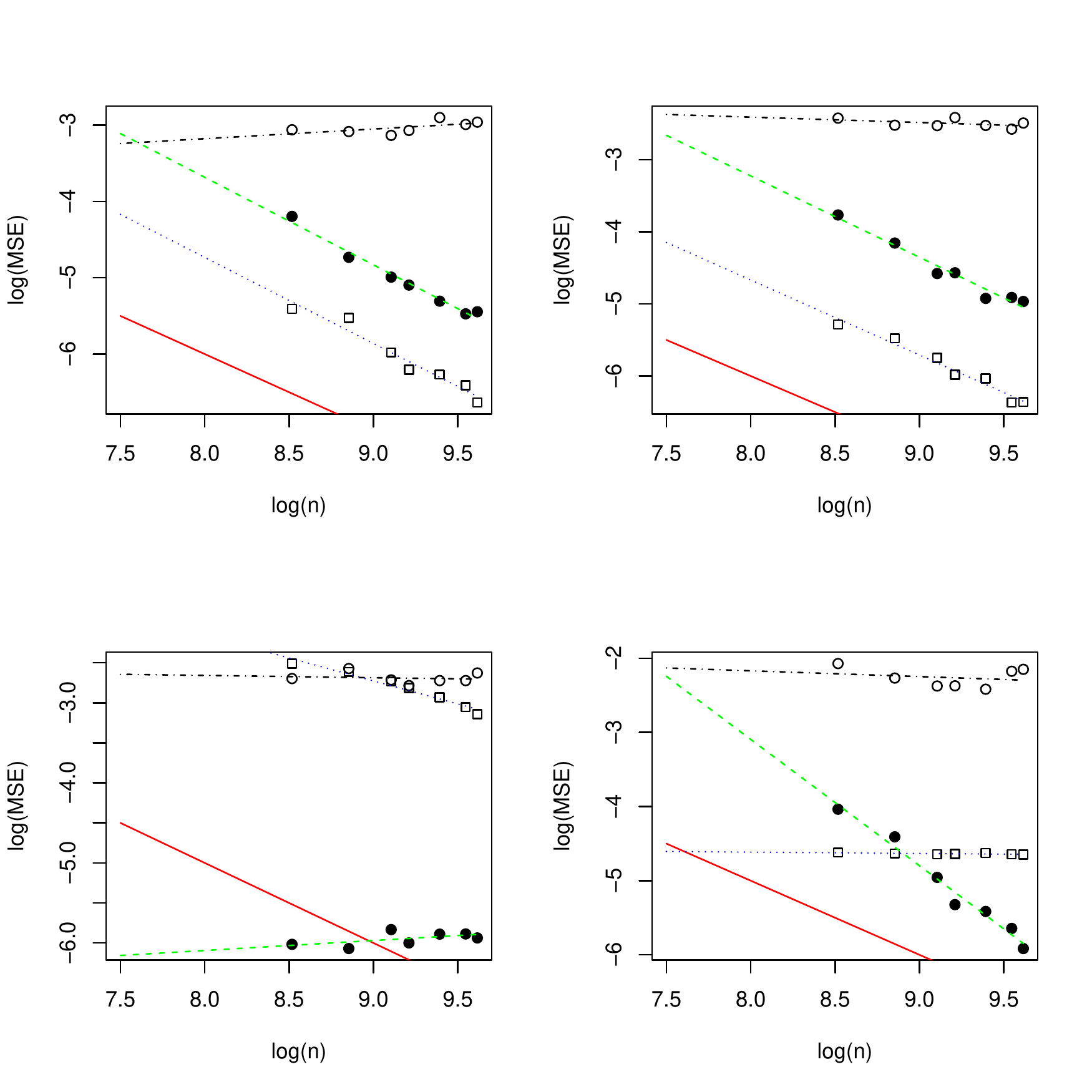}
\caption{Logarithm of the mean squared error  as a function of $\log(n)$ and corresponding linear regression for  $\hat \theta^{L}_n$ ($\circ$ and black line,
  respectively),  $\hat \theta^{CR}_n$ ($\square$ and
  blue line, respectively)  and  $\hat \theta_{I,n}$ ($\bullet$ and green line, respectively) in   the case $\delta = 0$, for different parameter values: $(a_2)$ top
  left; $(b_2)$ top right; $(c_2)$ bottom left; $(d_2)$ bottom right. Red
  line represents the line $y=-\log(n)+c$ for some well chosen constant $c$.}
\label{fig3:illustration}
\end{center}
\end{figure}

\section{Proofs}\label{sec:proofs}

\subsection{Proofs from Introduction and Section~\ref{sec:efficiency}}\label{sec:proof_lower}

\begin{proof}[Proof of Proposition~\ref{prop:identif}]
Sufficiency: Let us suppose that for all $f \in \mathcal{F}$ and for all
$c\in (0,1)$, we have $c+(1-c)f \notin \mathcal{F}$. We prove that the
parameters $\theta$ and $f$ are identifiable on the set $(0,1)\times
\mathcal{F}$ by contradiction. Suppose that there exist $(\theta_1,
f_1)$ and $(\theta_2,f_2) \in \mathcal{F}$, $(\theta_1,
f_1) \neq (\theta_2,f_2)$ such that 
\begin{equation}
\label{identifiable_1}
\theta_1 +(1-\theta_1) f_1(x) = \theta_2 +(1-\theta_2) f_2(x),\
\text{for all}\ x \in [0,1].  
\end{equation}
We can always consider $\theta_1 > \theta_2$. Let us denote by
$c=(\theta_1-\theta_2)/(1-\theta_2)$, then $c \in (0,1)$. We obtain
that 
\begin{equation}
\label{identifiable_2}
\theta_1 +(1-\theta_1) f_1(x) = \theta_2 +(1-\theta_2)(c+(1-c)f_1(x)),\
\text{for all}\ x \in [0,1].
\end{equation}
From \eqref{identifiable_1} and \eqref{identifiable_2}, we have $f_2
=c+(1-c)f_1$, it means that there exist $f_1 \in \mathcal{F}$ and $c
\in (0,1)$ such that $c+(1-c)f_1 \in \mathcal{F}$. So we have a contradiction.\\
Necessity: Suppose that the parameters $\theta$ and $f$ are
identifiable on the set $(0,1) \times \mathcal{F}$. We prove by
contradiction that for all
$f \in \mathcal{F}$ and for all $c \in (0,1)$, we have  $c+(1-c)f
\notin \mathcal{F}$. Indeed, suppose that there exist $f \in
\mathcal{F}$ and $c \in (0,1)$ such that $c+(1-c)f
\in \mathcal{F}$. For all $\theta_1 \in (0,1)$, we denote
$\theta_2=c+(1-c)\theta_1$, then we obtain
\begin{equation*}
\theta_1 +(1-\theta_1) (c+(1-c)f(x)) = \theta_2 +(1-\theta_2)f(x),\
\text{for all}\ x \in [0,1].
\end{equation*}
This implies that $\theta$ and $f$ are not
identifiable on the set $(0,1) \times \mathcal{F}$.
\end{proof}

\begin{proof}[Proof of Theorem~\ref{thm:I_and_psi}]
According to the  expression~\eqref{eq:ordinary_score} of the ordinary
score, we can write 
\begin{equation}\label{eq:dotl_decompose}
\dot{l}_{\theta,f} (x)=\Big(\frac{1-f}{\theta
  +(1-\theta)f}+\frac{\delta}{1-\theta\delta}\Big)\mathbf{1}_{[0,
  1-\delta)}(x)+\frac{1}{\theta}\mathbf{1}_{[1-\delta,
  1]}(x)-\frac{\delta}{1-\theta\delta}\mathbf{1}_{[0, 1-\delta)}(x). 
\end{equation}
Let us recall that  $\Pi_{\theta,f}$ is the orthogonal projection onto
the closure of the linear span of ${}_f\mathcal{\dot{P}}_{\delta}$ in $\mathbb{L}_2(\mathbb{P}_{\theta,f})$.
We prove  that the orthogonal  projection of $\dot  l_{\theta,f}$ onto
this space is equal to the first term appearing in the right-hand side
of~\eqref{eq:dotl_decompose}, namely
\begin{equation}
\label{equa_projection}
\Pi_{\theta,f}\dot{l}_{\theta,f}(x) =\Big(\frac{1-f(x)}{\theta
  +(1-\theta)f(x)}+\frac{\delta}{1-\theta\delta}\Big)\mathbf{1}_{[0,1 -\delta)}(x),
\end{equation}
and then the efficient score function for $\theta$ is 
\begin{equation*}
\tilde{l}_{\theta,f}(x) =
\dot{l}_{\theta,f}(x)-\Pi_{\theta,f}\dot{l}_{\theta,f}(x)=\frac{1}{\theta}\mathbf{1}_{[1-\delta,
  1]}(x)-\frac{\delta}{1-\theta\delta}\mathbf{1}_{[0, 1-\delta)}(x).
\end{equation*}
In fact, we can write
\begin{equation*}
-\Big(\frac{1-f}{\theta
  +(1-\theta)f}+\frac{\delta}{1-\theta\delta}\Big)\mathbf{1}_{[0, 1-\delta)}=\frac{(1-\theta)fh_0}{\theta
  +(1-\theta)f},
\end{equation*}
where
\begin{align*}
h_0(x)&=-\Big(\frac{1-f(x)}{(1-\theta)f(x)}+\frac{\delta}{1-\theta\delta}\times\frac{\theta+(1-\theta)f(x)}{(1-\theta)f(x)}\Big)\mathbf{1}_{[0,
1-\delta)}(x)\\
&= \frac{1}{(1-\theta)(1-\theta \delta)}\Big(1-\delta-\frac{1}{f(x)}\Big) \mathbf{1}_{[0,
1-\delta)}(x).
\end{align*}
The function $h_0$ is continuous and decreasing on $[0, 1-\delta)$. It is not
difficult to examine the condition $\int fh_0 = 0$. Indeed, 
\begin{align*}
\int_0^1f(x)h_0(x)dx&= \frac{1}{(1-\theta)(1-\theta
  \delta)}\int_0^{1-\delta}[(1-\delta)f(x)-1]dx\\
&= \frac{1}{(1-\theta)(1-\theta
  \delta)}\Big[\int_0^1(1-\delta)f(x)dx-(1-\delta)\Big] =0.
\end{align*}
Hence
\[
\Big(\frac{1-f}{\theta
  +(1-\theta)f}+\frac{\delta}{1-\theta\delta}\Big)\mathbf{1}_{[0,
  1-\delta)}\ \text{belongs to } \overline{\text{lin} ({}_f\mathcal{\dot{P}}_{\delta})}.
\]
Now, to conclude the proof of~\eqref{equa_projection}, it is necessary to
establish   that   the   second   term   in  the   right   hand   side
of~\eqref{eq:dotl_decompose}  is  orthogonal  to  the closure  of  the
linear span of ${}_f\mathcal{\dot{P}}_{\delta}$, namely 
\[
\frac{1}{\theta}\mathbf{1}_{[1-\delta,1]}-\frac{\delta}{1-\theta\delta}\mathbf{1}_{[0,
  1- \delta)}=\frac{1}{\theta(1-\theta\delta)}\mathbf{1}_{[0,1-\delta)}-\frac{\delta}{1-\theta\delta}\
\bot\ \overline{\text{lin} ({}_f\mathcal{\dot{P}}_{\delta})} , 
\]
where $\bot$ means orthogonality in $\mathbb{L}^2(\mathbb{P}_{\theta,
  f})$. In fact, for every score function 
\[
h =
\frac{(1-\theta)f             h_0}{\theta+(1-\theta)f}             \in
{}_f\mathcal{\dot{P}}_{\delta},
\]
the scalar  product between $h$ and the  remaining term in~\eqref{eq:dotl_decompose}
is given by
\begin{align*}
 &\int_0^1\big[\frac{1}{\theta(1-\theta\delta)}\mathbf{1}_{[0, 1- \delta)}(x)-\frac{\delta}{1-\theta\delta}\big]h(x)d\mathbb{P}_{\theta,f}(x)\\
&=\int_0^1\big[\frac{1}{\theta(1-\theta\delta)}\mathbf{1}_{[0, 1- \delta)}(x)-\frac{\delta}{1-\theta\delta}\big]\frac{(1-\theta)f(x)h_0(x)}{\theta
  +(1-\theta)f(x)}[\theta
  +(1-\theta)f(x)]dx\\
&=\frac{1-\theta}{\theta(1-\theta\delta)}\int_0^1f(x)h_0(x)\mathbf{1}_{[0
  , 1 -  \delta)}(x)dx-\frac{(1-\theta)\delta}{1-\theta\delta}\int_0^1f(x)h_0(x)dx\\
&=  0.
\end{align*}
This establishes~\eqref{equa_projection}. Let us now calculate the efficient information 
\begin{align*}
\tilde{I}_{\theta,f} =&\mathbb{P}_{\theta,f}(\tilde{l}_{\theta,f}^2)\\
=&
\int_0^1\Big(\frac{1}{\theta^2}\mathbf{1}_{[1-\delta,
  1]}(x)+\frac{\delta^2}{(1-\theta\delta)^2}\mathbf{1}_{[0, 1-\delta)}(x)\Big)[\theta+(1-\theta)f(x)]dx\\
=&\frac{\delta}{\theta}+\frac{\delta^2}{(1-\theta\delta)^2}(1-\theta\delta)\\
=& \frac{\delta}{\theta(1-\theta\delta)}.
\end{align*}

We now turn to the particular  case where $\delta=0$. In this case the
previous  computations show  that $\dot  l_{\theta,f}$ belongs  to the
closure of the linear span of ${}_f\mathcal{\dot P}_{\delta}$ and that
the Fisher information is zero. 
Now, we show that the functional $\psi(\mathbb{P}_{\theta,f})=\theta$ is not differentiable at
$\mathbb{P}_{\theta,f}$ relative to the tangent set
$\mathcal{\dot{P}}_{0}=\text{lin}\big(\dot{l}_{\theta,
  f}+{}_f\mathcal{\dot{P}}_{0}\big)= {}_f\dot{\mathcal{P}}_{0}$. In fact, if this were true, there would
exist a function $\tilde{\psi}_{\theta,f}$ such that
\[
a= 
\frac{\partial}{\partial t}
\psi(\mathbb{P}_{\theta+ta,f_t}) _{\big|_{t=0} } =\langle \tilde{\psi}_{\theta,
  f},a\dot{l}_{\theta,f}+h\rangle,\ \forall a \in \mathbb{R}, h \in
{}_f\mathcal{\dot{P}}_{0},
\]
where  $\langle   \cdot,\cdot  \rangle$  denotes   scalar  product  in
$\mathbb{L}^2(\mathbb{P}_{\theta,     f})$.      Choosing     $h     =
-\dot{l}_{\theta,f}\in     {}_f\mathcal{\dot{P}}_{0}$,    we    obtain
$a=(a-1)\langle   \tilde{\psi}_{\theta,  f},\dot{l}_{\theta,f}\rangle$
for every value $a \in \mathbb{R}$, 
which is impossible. \\


For the rest of the proof, we set $\delta>0$.
Using Lemma 25.25 in \cite{VanderVaart1998}, we remark that the functional
$\psi(\mathbb{P}_{\theta,f})=\theta$ is differentiable at
$\mathbb{P}_{\theta,f}$ relative to the tangent set
$\mathcal{\dot{P}_{\delta}}$
with efficient influence function given by
\begin{align*}
\tilde{\psi}_{\theta,f}(x) =&
\tilde{I}_{\theta,f}^{-1}\tilde{l}_{\theta,f}(x)\\
=&\frac{\theta(1-\theta\delta)}{\delta}\Big(\frac{1}{\theta}\mathbf{1}_{[1-\delta,1]}(x)-\frac{\delta}{1-\theta\delta}\mathbf{1}_{[0,
  1-\delta)}(x)\Big)\\
=&\frac{1-\theta\delta}{\delta}\mathbf{1}_{[1-\delta,
  1]}(x)-\theta\mathbf{1}_{[0, 1-\delta)}(x)\\
=&\frac{1}{\delta}\mathbf{1}_{[1-\delta, 1]}(x)-\theta,
\end{align*}
which concludes the proof.
\end{proof}


\begin{proof}[Proof of Corollary~\ref{cor:LAM}]
We start by dealing with the case  $\delta = 0$. Let us recall that in
this  case,   the  ordinary  score  $\dot   l_{\theta,f}$  belongs  to
${}_f\mathcal{\dot P}_0$. 
 We first remark that this tangent set 
${}_f\mathcal{\dot P}_0$ is a
linear subspace of $\mathbb{L}^2(\mathbb{P}_{\theta,f})$ with infinite
dimension. So we can choose an orthonormal basis 
$\{h_i\}_{i=1}^{\infty}$ of ${}_f \dot{\mathcal{P}}_{0}$ such
that for every $m$, we have $\dot{l}_{\theta,f} \notin
{}_f\dot{\mathcal{P}}_{0,m} := \text{lin}(h_1,h_2,\ldots, h_m)$. We thus have
 \begin{multline*}
\underset{E_0}{\sup}\ \underset{n \rightarrow \infty}{\liminf}\ \underset{g
  \in E_0}{\sup}\ \esp _{P_{1/\sqrt{n},g}}
\big[\sqrt{n}\big(\theta_n-\psi(P_{1/\sqrt{n},g})\big)\big]^2 \\ \geq
\underset{F_0}{\sup}\ \underset{n \rightarrow \infty}{\liminf}\ \underset{g
  \in F_0}{\sup}\ \esp _{P_{1/\sqrt{n},g}}
\big[\sqrt{n}\big(\theta_n-\psi(P_{1/\sqrt{n},g})\big)\big]^2,
\end{multline*}
where $E_0$ and $F_0$ range through all finite subsets of the tangent sets
$\dot{\mathcal{P}}_{0}=\text{lin}(\dot   l_{\theta,f}+{}_f\mathcal{\dot
  P}_0)={}_f\mathcal{\dot P}_0$ and $\text{lin}\big(\dot{l}_{\theta,f}+{}_f\dot{\mathcal{P}}_{0,m}\big)= {}_f\dot{\mathcal{P}}_{0,m}$,
respectively. The
efficient score function for $\theta$ corresponding to the tangent set
${}_f\dot{\mathcal{P}}_{0,m}$ is
\begin{equation*}
\tilde{l}_{\theta,f,m} = \dot{l}_{\theta,f} - \sum_{i=1}^m \langle
\dot{l}_{\theta,f},h_i \rangle h_i \neq 0. 
\end{equation*}
Moreover, the efficient information  $\tilde{I}_{\theta,f,m}
=\mathbb{P}_{\theta,f}(\tilde{l}_{\theta,f,m}^2)$                    is
non zero. Using Lemma 25.25 from \cite{VanderVaart1998}, we remark that the functional
$\psi(\mathbb{P}_{\theta,f})=\theta$ is differentiable at
$\mathbb{P}_{\theta,f}$ relative to the tangent set
$\text{lin}\big(\dot{l}_{\theta,f}+{}_f\dot{\mathcal{P}}_{0,m}\big)$
with efficient influence function $\tilde{\psi}_{\theta,f,m}=
\tilde{I}_{\theta,f,m}^{-1}\tilde{l}_{\theta,f,m}$.  So  we can  apply
Theorem 25.21 from 
\cite{VanderVaart1998} to obtain that
\begin{equation*}
\underset{F_0}{\sup}\ \underset{n \rightarrow \infty}{\liminf}\ \underset{g
  \in F_0}{\sup}\ \esp _{P_{1/\sqrt{n},g}}
\big[\sqrt{n}\big(\theta_n-\psi(P_{1/\sqrt{n},g})\big)\big]^2 \geq \tilde{I}_{\theta,f,m}^{-1}.
\end{equation*}
Since $\tilde{I}_{\theta,f,m}
\xrightarrow[m\to\infty]  {} \tilde  I_{\theta,f}= 0$,  we  obtain the
result. 
The second part of the proof concerning $\delta>0$ is an immediate consequence of Theorem~\ref{thm:I_and_psi} together with Theorem 25.21 and Lemma 25.23 in \cite{VanderVaart1998}.
\end{proof}

\subsection{Proofs from Sections~\ref{sec:his-estimator} and \ref{sec:one_step}}\label{sec:proof_upper}

\begin{proof}[Proof of Proposition~\ref{prop:sqrtn_consistent}]
Let us denote by  $\mathcal{D} = \{ 1, 2, \cdots, D\}$, $\mathcal{D}_0 = \{ k \in \mathcal{D}\ \text{such that}\ I_k
\subseteq [1-\delta, 1]\}$ and $\mathcal{D}_1= \mathcal{D} \setminus \mathcal{D}_0 =\{ k \in \mathcal{D}\ \text{such that}\ I_k
\nsubseteq [1-\delta, 1]\}$. We start by proving that the estimator
$\hat \theta_{I,n}$ converges almost surely to $\theta$. Indeed, we
can write that
\begin{equation}\label{eq:hat_theta_decomp}
\hat \theta_{I,n} =\theta +  \sum_{k \in \mathcal{D}_0}\Big(\frac{n_k}{n|I_k|} -\theta\Big) \mathbf{1}\{\hat  k_n=k\} + 
  (\hat\theta_{I,n} -\theta)\mathbf{1}\{I_{\hat k_n} \nsubseteq [1-\delta, 1]\} ,
\end{equation}
where $\mathbf{1}\{A\}$ or $\mathbf{1}_A$ is used to denote the indicator function of set $A$.
By using the strong law of large numbers, we have the almost sure convergences 
\begin{align*}
\forall k \in \mathcal{D}_0, \quad & \frac
{n_k}{n|I_k|} \xrightarrow[n\to  +\infty]{a.s.} \theta ,\\
\forall k \in \mathcal{D}_1, \quad & \frac
{n_k}{n|I_k|}  \xrightarrow[n\to  +\infty]{a.s.}  \frac{\alpha_k}{|I_k|}=\frac 1  {|I_k|}\int_{I_k}
g(u)du >\theta .
\end{align*}
As a consequence, we obtain that the second term in the right-hand side of~\eqref{eq:hat_theta_decomp} converges almost surely to zero, namely
\[
\Big|\sum_{k \in \mathcal{D}_0}\Big(\frac{n_k}{n|I_k|} -\theta\Big)
\mathbf{1}\{\hat  k_n=k\}\Big| \leq \sum_{k \in
  \mathcal{D}_0}\Big|\frac{n_k}{n|I_k|} -\theta\Big| \xrightarrow[n\to +\infty]{a.s.} 0 .
\]
The third term in the right-hand side of~\eqref{eq:hat_theta_decomp}
also converges almost surely to zero. Indeed, we have
\[
|\hat
\theta_{I,n}  -\theta|\mathbf{1}\{I_{\hat k_n}  \nsubseteq  [1-\delta,
  1]\} \le \Big(\max_{1\le   k\le   D}   \frac{1}{|I_k|}-\theta\Big) \sum_{k \in \mathcal{D}_1}\mathbf{1}\{\hat  k_n=k\}.
\]
We fix an integer $k_0 \in \mathcal{D}_0$, then for all $k \in
\mathcal{D}_1$, we have
\begin{eqnarray*}
\mathbf{1}\{\hat  k_n=k\} &=&\mathbf{1}\Big\{\frac{n_k}{n|I_k|} \leq
\frac{n_j}{n|I_j|},\ \forall j \in \mathcal{D}\Big\}\\
&\leq& \mathbf{1}\Big\{\frac{n_k}{n|I_k|} \leq
\frac{n_{k_0}}{n|I_{k_0}|} \Big\}\\
&\leq& \mathbf{1}\Big\{\frac{n_{k_0}}{n|I_{k_0}|} -\theta
+\frac{\alpha_k}{|I_k|}-\frac{n_k}{n|I_k|} \geq \frac{\alpha_k}{|I_k|} - \theta\Big\}.
\end{eqnarray*}
Since $\epsilon_k =\alpha_k/|I_k| - \theta >0$ and 
\[
\frac{n_{k_0}}{n|I_{k_0}|} -\theta
+\frac{\alpha_k}{|I_k|}-\frac{n_k}{n|I_k|}  \xrightarrow[n\to  +\infty]{a.s.} 0,
\]
we obtain that
\[
\mathbf{1}\Big\{\frac{n_{k_0}}{n|I_{k_0}|} -\theta
+\frac{\alpha_k}{|I_k|}-\frac{n_k}{n|I_k|} \geq \epsilon_k \Big\} \xrightarrow[n\to  +\infty]{a.s.} 0,
\]
which concludes the proof of the almost sure convergence of $\hat \theta_{I,n}$.

We now write 
\begin{equation}\label{eq:hat_theta_decomp_sqrtn}
\sqrt{n} (\hat \theta_{I,n} -\theta) = \sum_{k \in \mathcal{D}_0}\sqrt{n} \Big(\frac{n_k}{n|I_k|} -\theta\Big)
\mathbf{1}\{\hat  k_n=k\} +\sqrt{n}  (\hat
\theta_{I,n} -\theta)\mathbf{1}\{I_{\hat k_n} \nsubseteq [1-\delta, 1]\} .
\end{equation}
The  second term  in  the  right hand-side  of  the previous  equation
converges in  probability to zero.  Indeed, for any $\epsilon  >0$, we
have
\begin{eqnarray*}
\pr(\sqrt{n}  |\hat
\theta_{I,n}  -\theta|\mathbf{1}\{I_{\hat k_n}  \nsubseteq  [1-\delta, 1]\}
\ge \epsilon) &\le& \pr (I_{\hat k_n} \nsubseteq [1-\delta, 1]) \\
&\le& \sum_{k \in \mathcal{D}_1} \pr \Big(\frac{n_{k_0}}{n|I_{k_0}|} -\theta
+\frac{\alpha_k}{|I_k|}-\frac{n_k}{n|I_k|} \geq \epsilon_k\Big) \xrightarrow[n\to +\infty]{} 0.
\end{eqnarray*}
Now, whenever  $k \in \mathcal{D}_0$, by denoting $$\sigma_k^2 =
\theta\Big(\frac{1}{|I_k|} - \theta\Big),$$ the  central limit theorem
gives the convergence in distribution
\[
\sqrt{n}    \Big(\frac{n_k}{n|I_k|}   -\theta\Big)   \xrightarrow[n\to
+\infty]{d}  \mathcal{N} \Big(  0, \sigma_k^2
\Big) .
\]
As   a   consequence,   each    of   these   terms   is   bounded   in
probability. According to~\eqref{eq:hat_theta_decomp_sqrtn}, we conclude 
\[
\sqrt{n} (\hat \theta_{I,n} -\theta) =O_{\mathbb{P}}(1).
\]
We now prove the third statement of the proposition. We have
\[
 \Var(\sqrt{n}\hat\theta_{I,n}) \le
 \esp\big[(\sqrt{n}(\hat\theta_{I,n}-\theta))^2\big],
\]
where
\begin{equation}\label{eq:decomp_variance}
\esp\big[(\sqrt{n}(\hat\theta_{I,n}-\theta))^2\big] = \sum_{k \in
  \mathcal{D}_0} \esp\Big[\Big(\sqrt{n}\big(\frac{n_k}{n|I_k|}
-\theta\big)\Big)^2 \mathbf{1}_{\hat  k_n=k} \Big] + \sum_{k \in
  \mathcal{D}_1} \esp\Big[\Big(\sqrt{n}\big(\frac{n_k}{n|I_k|} -\theta\big)\Big)^2 \mathbf{1}_{\hat  k_n=k} \Big].
\end{equation}
The second term in the right-hand side of~\eqref{eq:decomp_variance}
is bounded by
\begin{equation*}
\sum_{k \in
  \mathcal{D}_1} \esp\Big[\Big(\sqrt{n}\big(\frac{n_k}{n|I_k|}
-\theta\big)\Big)^2 \mathbf{1}_{\hat  k_n=k} \Big] \leq
\Big(\max_{1\leq k \leq D} \frac{1}{|I_k|} -\theta\Big)^2 \sum_{k \in
  \mathcal{D}_1} n \pr (\hat  k_n=k),
\end{equation*}
where for all $k \in \mathcal{D}_1$, fixing an integer $k_0 \in \mathcal{D}_0$ and 
according to Hoeffding's inequality, 
\begin{eqnarray*}
\pr (\hat  k_n=k) 
&\leq&\pr\Big(\frac{n_k}{n|I_k|} \leq
\frac{n_{k_0}}{n|I_{k_0}|} \Big)\\
&\leq&\pr\Big[\sum_{i=1}^n\Big(\frac{1}{|I_{k_0}|}\mathbf{1}\{ X_i \in
I_{k_0}\} -\theta
+\frac{\alpha_k}{|I_k|}-\frac{1}{|I_k|}\mathbf{1}\{ X_i \in
I_k\}  \Big)\geq n \epsilon_k\Big]\\
&\leq& \exp \Big[-2n\epsilon_k^2\Big(\frac{1}{|I_k|}+\frac{1}{|I_{k_0}|}\Big)^{-2} \Big].
\end{eqnarray*}
Thus, we get that
\begin{multline*}
\sum_{k \in
  \mathcal{D}_1} \esp\Big[\Big(\sqrt{n}\big(\frac{n_k}{n|I_k|}
-\theta\big)\Big)^2 \mathbf{1}_{\hat  k_n=k} \Big] \\
\leq
\Big(\max_{1\leq k \leq D} \frac{1}{|I_k|} -\theta\Big)^2 \sum_{k \in
  \mathcal{D}_1} n\exp
\Big[-2n\epsilon_k^2\Big(\frac{1}{|I_k|}+\frac{1}{|I_{k_0}|}\Big)^{-2}
\Big] \xrightarrow[n\to +\infty]{} 0.
\end{multline*}
For the first term in the right-hand side
of~\eqref{eq:decomp_variance}, we apply Cauchy-Schwarz's inequality
\begin{eqnarray}\label{eq:var_bounded1}
\sum_{k \in
  \mathcal{D}_0} \esp\Big[\Big(\sqrt{n}\big(\frac{n_k}{n|I_k|}
-\theta\big)\Big)^2 \mathbf{1}_{\hat  k_n=k} \Big] &\leq& \sqrt{\sum_{k \in
  \mathcal{D}_0} \esp\Big[\Big(\sqrt{n}\big(\frac{n_k}{n|I_k|}
-\theta\big)\Big)^4\Big] } \sqrt{\sum_{k \in \mathcal{D}_0} \pr(\hat
k_n=k)} \nonumber\\
&\leq& \sqrt{\sum_{k \in
  \mathcal{D}_0} \esp\Big[\Big(\sqrt{n}\big(\frac{n_k}{n|I_k|}
-\theta\big)\Big)^4\Big] },
\end{eqnarray}
where for all $k \in \mathcal{D}_0$, 
\begin{eqnarray}\label{eq:var_bounded2}
& &\esp\Big[\Big(\sqrt{n}\big(\frac{n_k}{n|I_k|}
-\theta\big)\Big)^4\Big] = \esp\Big[\frac{1}{n^2}\Big( \sum_{i=1}^n \big( \frac{1}{|I_k|}\mathbf{1}\{ X_i \in
I_k\} -\theta \big)\Big)^4\Big] \nonumber\\
&=& \frac{1}{n}  \esp\Big[\Big(\frac{1}{|I_k|}\mathbf{1}\{ X_1 \in
I_k\} -\theta \Big)^4\Big] + \frac{n-1}{n} \esp^2\Big[\Big(\frac{1}{|I_k|}\mathbf{1}\{ X_1 \in
I_k\} -\theta \Big)^2\Big] \nonumber \\
&=&
\frac{\theta}{n}\Big(\frac{1}{|I_k|^3}-\frac{4\theta}{|I_k|^2}+\frac{6\theta^2}{|I_k|}-3\theta^3\Big)
+ \frac{n-1}{n} \sigma_k^4.
\end{eqnarray}
Thus, we finally obtain that
\begin{multline*}
 \Var(\sqrt{n}\hat\theta_{I,n}) \le \sqrt{\sum_{k \in
  \mathcal{D}_0} \Big[\frac{\theta}{n}\Big(\frac{1}{|I_k|^3}-\frac{4\theta}{|I_k|^2}+\frac{6\theta^2}{|I_k|}-3\theta^3\Big)
+ \frac{n-1}{n} \sigma_k^4\Big] }\ + \\ \Big(\max_{1\leq k \leq D} \frac{1}{|I_k|} -\theta\Big)^2 \sum_{k \in
  \mathcal{D}_1} n\exp
\Big[-2n\epsilon_k^2\Big(\frac{1}{|I_k|}+\frac{1}{|I_{k_0}|}\Big)^{-2}
\Big] \xrightarrow[n\to +\infty]{} \sqrt{\sum_{k \in
  \mathcal{D}_0} \sigma_k^4 }.
\end{multline*}
\end{proof}

\begin{proof}[Proof of Proposition~\ref{prop:equiv_efficiency}]
Let us first establish that condition~\eqref{equa-int} holds. 
In fact, with the notation $p_{\theta,f} = \theta+(1-\theta)f$,
we have
\begin{eqnarray*}
&&\int \Arrowvert \tilde{l}_{\theta_n,f}d\mathbb{P}_{\theta_n,f}^{1/2}
-\tilde{l}_{\theta,f}d\mathbb{P}_{\theta,f}^{1/2}\Arrowvert ^2
=\int_0^1\Big(\tilde{l}_{\theta_n,f}(x)\sqrt{p_{\theta_n,f}(x)} -
\tilde{l}_{\theta,f}(x)\sqrt{p_{\theta,f}(x)}\Big)^2 dx\\
&\leq&2\int_0^1\big(\tilde{l}_{\theta_n,f}(x) -
\tilde{l}_{\theta,f}(x)\big)^2 p_{\theta_n,f}(x) dx+2\int_0^1
\tilde{l}_{\theta,f}^2(x)\Big(\sqrt{p_{\theta_n,f}(x)}
-\sqrt{p_{\theta,f}(x)}\Big)^2 dx\\
&\leq&2\int_0^1\Big[\frac{1}{\theta_n}-\frac{1}{\theta}+\Big(\frac{1}{\theta(1-\theta\delta)}-\frac{1}{\theta_n(1-\theta_n\delta)}\Big)\mathbf{1}_{\{f(x)>0\}}\Big]^2p_{\theta_n,f}(x)dx\\
&& + 2\int_0^1\Big[\frac{1}{\theta}-\frac{1}{\theta(1-\theta\delta)}\mathbf{1}_{\{f(x)>0\}}\Big]^2\frac{(\theta_n-\theta)^2(1-f(x))^2}{\Big(\sqrt{p_{\theta_n,f}(x)}
+\sqrt{p_{\theta,f}(x)}\Big)^2}dx\\
&\leq&2\int_0^1(\theta_n-\theta)^2\Big[\frac{1}{\theta\theta_n}+\frac{\delta(\theta+\theta_n)+1}{\theta\theta_n(1-\theta\delta)(1-\theta_n\delta)}\mathbf{1}_{\{f(x)>0\}}\Big]^2p_{\theta_n,f}(x)dx\\
&& + 2\int_0^1(\theta_n-\theta)^22\Big[\frac{1}{\theta^2}+\frac{1}{\theta^2(1-\theta)^2}\Big]\frac{(1-f(x))^2}{\big(\sqrt{\theta_n}
+\sqrt{\theta}\big)^2}dx\\
&\leq&(\theta_n-\theta)^2\Big[\frac{C}{\theta^2}+\frac{C(1+2C\theta}{\theta^2(1-\theta)^2}\Big]^2
+ C(\theta_n-\theta)^2\Big[\frac{1}{\theta^3}+\frac{1}{\theta^3(1-\theta)^2}\Big]=O\big(\frac{1}{n}\big),
\end{eqnarray*}
where $C$ is some positive constant. Thus, according to  Theorem
7.4 from \cite{VanderVaartlecture}, the existence of an asymptotically efficient sequence of
estimators of $\theta$ is equivalent to
the existence of a sequence of estimators $\hat{l}_{n,\theta}$
satisfying~\eqref{equa-vitesse-para}
and~\eqref{equa-variance}. 

Now in model $\mathcal{P}_\delta$, the efficient score
function $\tilde{l}_{\theta,f}$ is given by
\begin{equation*}
\tilde{l}_{\theta,f}(x)                                               =
\frac{1}{\theta}-\frac{1}{\theta(1-\theta\delta)}\mathbf{1}_{[0, 1 - \delta)}(x) ,
\end{equation*}
so that it  is natural to estimate the parameter  $\delta$ in order to
estimate  $\tilde{l}_{\theta,f}$. Let  $\hat{\delta}_n$  be any  given
consistent (in probability) estimator of
$\delta$. Let us examine condition~\eqref{equa-vitesse-para} more closely. We have 
\begin{align*}
\sqrt{n}\mathbb{P}_{\theta_n,f}\hat{l}_{n,\theta_n} =& \sqrt{n}\mathbb{P}_{\theta_n,f}(\hat{l}_{n,\theta_n} -\tilde{l}_{\theta_n,f})\\
=& \sqrt{n}\int_0^1 \frac{1}{\theta_n}
\Big[\frac{1}{1-\theta_n\hat{\delta}_n} \mathbf{1}_{[0, 1 -
    \hat{\delta}_n)}(x) -\frac{1}{1-\theta_n\delta}\mathbf{1}_{[0, 1
  - \delta)}(x)\Big] g_{\theta_n, f}(x) dx\\
=&\int_0^1 \frac{\sqrt{n}}{\theta_n}
\Big[\Big(\frac{1}{1-\theta_n\hat{\delta}_n }- \frac{1}{1-\theta_n\delta}\Big) \mathbf{1}_{[0, 1 -
    \hat{\delta}_n)}(x) \\
 & + \frac{1}{1-\theta_n\delta}\Big(\mathbf{1}_{[0, 1 -
    \hat{\delta}_n)}(x) - \mathbf{1}_{[0, 1
  - \delta)}(x)\Big)\Big] g_{\theta_n, f}(x) dx\\
=& \sqrt{n}(\hat{\delta}_n- \delta) \int_0^{1-\hat{\delta}_n} 
\frac{g_{\theta_n, f}(x)}{(1-\theta_n\delta)(1-\theta_n\hat{\delta}_n)} dx +
  \sqrt{n}\int_{1-\delta}^{1-\hat{\delta}_n} \frac{g_{\theta_n, f}(x)
  }{1-\theta_n\delta} dx\\
=& \sqrt{n}(\hat{\delta}_n- \delta) \Big[\int_0^{1-\delta} 
\frac{g_{\theta, f}(x)}{(1-\theta\delta)^2} dx -\frac{g_{\theta, f}(1-\delta)
  }{1-\theta\delta}  + o_{\mathbb{P}}(1)\Big].
\end{align*}

Hence, the "no-bias" condition~\eqref{equa-vitesse-para} is equivalent to the
existence of an estimator $\hat{\delta}_n$ of $\delta$ that
 converges at a rate faster than $1/\sqrt{n}$, namely such that  $ \sqrt{n}(\hat{\delta}_n- \delta) =
o_{\mathbb{P}}(1)$. With the same argument as in the previous 
calculation, the  consistency condition~\eqref{equa-variance} is 
satisfied as soon as the estimator $\hat{\delta}_n$ converges in
probability to $\delta$. 
\end{proof}

\subsection{Proof of Proposition~\ref{prop:CR_sqrtn_consistent}}\label{sec:proof_CR}

For each partition $I$, let us denote by $\mathcal{F}_I $ the vector
space of piecewise constant functions built from the partition $I$ and
$g_I$ the orthogonal projection of $g \in L^2([0,1])$ onto
$\mathcal{F}_I $. The mean squared error of a histogram
estimator $\hat{g}_I$ can be written as the sum of a bias term and
a variance term
\begin{equation*}
\esp[||g-\hat{g}_I||_2^2]=||g-g_I||_2^2+\esp[||g_I-\hat{g}_I||_2^2].
\end{equation*}
We introduce three lemmas that are needed to prove Proposition~\ref{prop:CR_sqrtn_consistent}. The proofs of these technical lemmas is further postponed to Appendix~\ref{appendix}.

\begin{lem}
\label{lem1}
Let $I = (I_k)_{k=1}^D$ be an arbitrary partition of $[0,1]$.
Then the variance term of the mean squared error of a histogram
estimator $\hat{g}_I$ is bounded by $C/n$, where $C$ is a positive
constant. In other words,  
\begin{equation*}
\esp[||g_I-\hat{g}_I||_2^2] = O\big(\frac{1}{n}\big).
\end{equation*}
\end{lem}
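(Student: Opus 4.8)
\textbf{Proof plan for Lemma~\ref{lem1}.} The plan is a direct second-moment computation, exploiting the fact that a histogram estimator is, coordinatewise, a rescaled binomial count. First I would make the two relevant piecewise-constant functions explicit. Writing $\alpha_k = \pr(X_1 \in I_k) = \int_{I_k} g(u)\,du$, the orthogonal projection of $g$ onto the space $\mathcal{F}_I$ of functions that are constant on each $I_k$ is $g_I = \sum_{k=1}^D \frac{\alpha_k}{|I_k|}\mathbf{1}_{I_k}$ (the value on $I_k$ being the average of $g$ over $I_k$), while $\hat{g}_I = \sum_{k=1}^D \frac{n_k}{n|I_k|}\mathbf{1}_{I_k}$ by definition. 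Hence
\[
g_I - \hat g_I = \sum_{k=1}^D \frac{1}{|I_k|}\Big(\alpha_k - \frac{n_k}{n}\Big)\mathbf{1}_{I_k},
\]
and, since the $\mathbf{1}_{I_k}$ have disjoint supports,
\[
\|g_I - \hat g_I\|_2^2 = \sum_{k=1}^D |I_k|\,\frac{1}{|I_k|^2}\Big(\alpha_k - \frac{n_k}{n}\Big)^2 = \sum_{k=1}^D \frac{1}{|I_k|}\Big(\alpha_k - \frac{n_k}{n}\Big)^2 .
\]

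Next I would take expectations term by term. Because $n_k = \sum_{i=1}^n \mathbf{1}\{X_i \in I_k\}$ is a sum of $n$ i.i.d.\ Bernoulli$(\alpha_k)$ variables, it has the Binomial$(n,\alpha_k)$ distribution, so $\esp[n_k/n] = \alpha_k$ and
\[
\esp\Big[\Big(\alpha_k - \frac{n_k}{n}\Big)^2\Big] = \Var\Big(\frac{n_k}{n}\Big) = \frac{\alpha_k(1-\alpha_k)}{n}.
\]
Summing over $k$ gives the exact identity
\[
\esp\big[\|g_I-\hat g_I\|_2^2\big] = \frac{1}{n}\sum_{k=1}^D \frac{\alpha_k(1-\alpha_k)}{|I_k|} \;\le\; \frac{1}{n}\sum_{k=1}^D \frac{1}{|I_k|} ,
\]
where in the last inequality I use $0 \le \alpha_k(1-\alpha_k) \le \alpha_k \le 1$. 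Setting $C = \sum_{k=1}^D |I_k|^{-1}$ (a finite quantity depending only on the partition $I$, and not on $n$) yields the claimed bound $\esp[\|g_I-\hat g_I\|_2^2] \le C/n = O(1/n)$.

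I do not expect a genuine obstacle here; the only point that needs mild care is that the density $g = \theta + (1-\theta)f$ is not assumed bounded (only $f \in \mathbb{L}^2([0,1])$), so one should not try to bound the local average $\alpha_k/|I_k|$ by $\|g\|_\infty$; using $\alpha_k \le 1$ avoids this entirely. I would also add a one-line remark that, since the collection $\mathcal{I}$ in~\eqref{eq-collection} is finite, the constant $C$ may be chosen uniformly over all $I \in \mathcal{I}$ (e.g.\ $C = 4^{m_{\max}}$), which is the form in which the lemma is used in the proof of Proposition~\ref{prop:CR_sqrtn_consistent}.
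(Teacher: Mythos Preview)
Your argument is correct and in fact arrives at the very same exact identity as the paper: your expression $\frac{1}{n}\sum_k \alpha_k(1-\alpha_k)/|I_k|$ is exactly $\frac{1}{n}(s_{11}-s_{21})$ in the paper's notation. The route, however, is different. You compute $\|g_I-\hat g_I\|_2^2$ directly by writing both functions in the basis $(\mathbf{1}_{I_k})_k$ and taking the expectation of each binomial variance term. The paper instead proceeds indirectly via the Pythagorean decomposition $\esp\|g_I-\hat g_I\|_2^2 = \esp\|g-\hat g_I\|_2^2 - \|g-g_I\|_2^2$, computing separately the bias $\|g-g_I\|_2^2 = \|g\|_2^2 - s_{21}$ and the full risk $\esp\|g-\hat g_I\|_2^2 = \|g\|_2^2 - s_{21} + \frac{1}{n}(s_{11}-s_{21})$, then subtracting. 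Your path is shorter and cleaner for the lemma as stated; the paper's detour is chosen because the two intermediate formulas (for the bias and for the full risk) are reused later, in the proof of Lemma~\ref{lem2} and in the variance bound for $\hat\theta_n^{CR}$ in Proposition~\ref{prop:CR_sqrtn_consistent}. Your remark on taking $C$ uniform over the finite collection $\mathcal{I}$ is apt and not made explicit in the paper.
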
 
For any partition  $I=(I_k)_{1,\ldots,D}$ of $[0,1]$, we let  
\[
L(I) = ||g_I-g||_2^2 \quad \mbox{and}\quad \hat{L}_p(I) = \hat{R}_p(I) + ||g||_2^2,
\]
respectively the bias term of the mean squared error of a histogram
estimator $\hat{g}_I$ and its estimator. 
\begin{lem}
\label{lem2}
Let $I=(I_k)_{1,\ldots,D}$ be an arbitrary partition of $[0,1]$. 
Let $p \in \{1,2, \ldots, n-1\}$ such that $\underset{n\rightarrow
  \infty}{\lim}p/n< 1$. Then we have the following results\\
i) $\hat{L}_p(I) \xrightarrow[n\to\infty] {a.s.} L(I)$\\
ii) $\sqrt{n}\big(\hat{L}_p(I) -L(I)\big)=\sqrt{n}\big(\hat{R}_p(I)
-R(I)\big) + \frac{1}{\sqrt{n}}(s_{11} -s_{21})
\xrightarrow[n\to\infty] {d} \mathcal{N}(0, 4\sigma_I^2)$, where 
\[
\sigma_I^2 = s_{32} - s_{21}^2\ \text{with}\ s_{ij} = \displaystyle\sum_k\frac{\alpha_k^i}{|I_k|^j} , \forall (i,j)
\in \mathbb{N}^2.
\]
\end{lem}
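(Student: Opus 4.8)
Throughout, $I=(I_k)_{k=1}^D$ is fixed; write $\hat\alpha_k=n_k/n$. The plan is to reduce both statements to the asymptotics of one $U$-statistic. Introduce the empirical mean $A_n=\sum_k\hat\alpha_k/|I_k|=\frac1n\sum_{i=1}^n w(X_i)$ with $w(x)=\sum_k\mathbf 1_{I_k}(x)/|I_k|$, and the degree-two $V$-statistic $B_n=\sum_k\hat\alpha_k^2/|I_k|=\frac1{n^2}\sum_{i,j}h(X_i,X_j)$, where $h(x,y)=\sum_k\mathbf 1_{I_k}(x)\mathbf 1_{I_k}(y)/|I_k|$ is a bounded symmetric kernel (bounded because the partition is finite with positive widths). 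Since $h(x,x)=w(x)$, pulling off the diagonal gives $B_n=\frac1nA_n+\frac{n-1}{n}U_n$ with $U_n=\frac1{n(n-1)}\sum_{i\neq j}h(X_i,X_j)$. Writing \eqref{LPO-estimator} as $\hat R_p(I)=c_{1,n}A_n-c_{2,n}B_n$ with $c_{1,n}=\frac{2n-p}{(n-1)(n-p)}$ and $c_{2,n}=\frac{n(n-p+1)}{(n-1)(n-p)}$, and using $\hat L_p(I)=\hat R_p(I)+\|g\|_2^2$ together with $L(I)=\|g\|_2^2-\|g_I\|_2^2=\|g\|_2^2-s_{21}$, a short computation in which the coefficients collapse ($c_{1,n}-c_{2,n}/n=1/(n-p)$ and $c_{2,n}(n-1)/n=1+1/(n-p)$) yields the clean identity
\[
\hat L_p(I)-L(I)=\frac{1}{n-p}\bigl(A_n-U_n\bigr)-\bigl(U_n-s_{21}\bigr).
\]

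Part (i) is then immediate: by the strong law, $\hat\alpha_k\to\alpha_k$ a.s., so $A_n\to s_{11}$ and $B_n\to s_{21}$ a.s., hence $U_n\to s_{21}$ a.s.; since $\lim p/n<1$ forces $n-p\to\infty$, the first term above vanishes a.s., giving $\hat L_p(I)-L(I)\to0$ a.s.

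For part (ii), the stated equality $\sqrt n(\hat L_p(I)-L(I))=\sqrt n(\hat R_p(I)-R(I))+\frac1{\sqrt n}(s_{11}-s_{21})$ is just a rewriting of the identity above: the bias–variance decomposition $\esp\|g-\hat g_I\|_2^2=L(I)+\esp\|g_I-\hat g_I\|_2^2$ together with $\Var(\hat\alpha_k)=\alpha_k(1-\alpha_k)/n$ gives $\esp\|g_I-\hat g_I\|_2^2=\frac1n(s_{11}-s_{21})$, hence $R(I)=\esp\|g-\hat g_I\|_2^2-\|g\|_2^2=-s_{21}+\frac1n(s_{11}-s_{21})$, and substituting this back recovers $\sqrt n(\hat R_p(I)+s_{21})=\sqrt n(\hat L_p(I)-L(I))$. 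It remains to find the limit law of $\sqrt n(\hat L_p(I)-L(I))$. Since $A_n-U_n\to s_{11}-s_{21}$ a.s. while $\sqrt n/(n-p)\to0$ (again using $\lim p/n<1$), the term $\frac{\sqrt n}{n-p}(A_n-U_n)$ tends to $0$ in probability, so by Slutsky it suffices to treat $-\sqrt n(U_n-s_{21})$. The central limit theorem for $U$-statistics gives $\sqrt n(U_n-\esp h(X_1,X_2))\to\mathcal N(0,4\zeta_1)$ with $\zeta_1=\Var(h_1(X_1))$ and $h_1(x)=\esp[h(x,X_2)]=\sum_k\mathbf 1_{I_k}(x)\alpha_k/|I_k|=g_I(x)$; computing $\esp[g_I(X_1)^2]=\sum_k\alpha_k^3/|I_k|^2=s_{32}$ and $\esp[g_I(X_1)]=s_{21}$ gives $\zeta_1=s_{32}-s_{21}^2=\sigma_I^2$. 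Thus $\sqrt n(U_n-s_{21})\to\mathcal N(0,4\sigma_I^2)$, and by symmetry of the Gaussian we obtain $\sqrt n(\hat L_p(I)-L(I))\to\mathcal N(0,4\sigma_I^2)$, which is the claim.

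The computations are essentially bookkeeping; the only external input is the $U$-statistic CLT (plus the companion strong law), and the one place that needs genuine care is verifying that the Hoeffding projection of $h$ equals exactly $g_I$ — this is what pins the limiting variance to $4\sigma_I^2$ — as well as checking that the hypothesis $\lim p/n<1$ is used in the two places where $n-p\to\infty$ is needed (and, for the CLT remainder, $n-p$ of larger order than $\sqrt n$). If one also wishes to include the degenerate case $\sigma_I^2=0$, note that then $\sqrt n(U_n-s_{21})=n^{-1/2}\cdot n(U_n-s_{21})\to0$ in probability by the second-order $U$-statistic theory, so the stated limit $\mathcal N(0,0)$ still holds.
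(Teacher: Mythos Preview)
Your proof is correct and takes a genuinely different route from the paper's. The paper expands $\sqrt n(\hat R_p(I)-R(I))$ directly, splitting it into a remainder $T_1$ (containing the linear term $\sum_k|I_k|^{-1}\sqrt n(\hat\alpha_k-\alpha_k)$, the quadratic term $\sum_k|I_k|^{-1}(\sqrt n(\hat\alpha_k-\alpha_k))^2$, and some deterministic pieces) which is shown to be $o_{\mathbb P}(1)$ term by term, plus a main term $-2\sum_k(\alpha_k/|I_k|)\sqrt n(\hat\alpha_k-\alpha_k)$ that is rewritten as $n^{-1/2}\sum_i Y_i$ with $Y_i=g_I(X_i)-s_{21}$ and handled by the ordinary CLT. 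Your approach instead recognises $B_n=\sum_k\hat\alpha_k^2/|I_k|$ as a $V$-statistic, splits off its diagonal to get a $U$-statistic $U_n$, and obtains the exact identity $\hat L_p(I)-L(I)=\frac1{n-p}(A_n-U_n)-(U_n-s_{21})$, after which the $U$-statistic CLT does the work. The two arguments are really the same decomposition in disguise: your Hoeffding projection $h_1(x)=g_I(x)$ is precisely the paper's $Y_i+s_{21}$, and the diagonal/quadratic remainder you absorb into $U_n$ is the paper's $T_1$. What your approach buys is a cleaner bookkeeping (the coefficients collapse nicely into $1/(n-p)$ and $1+1/(n-p)$, avoiding the somewhat heavy display in the paper) and a one-line treatment of the second-order term via standard $U$-statistic theory; the paper's approach has the advantage of being entirely self-contained, requiring only the classical CLT rather than the Hoeffding CLT for $U$-statistics.
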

Let $I, J$ be two partitions in $\mathcal{I}$, then $I$ is called a
subdivision of $J$ and we denote $I \trianglelefteq J$, if
$\mathcal{F}_J \subset \mathcal{F}_I$ and $I \ntrianglelefteq J$ otherwise.

\begin{lem}
\label{lem3}
 Suppose that function $f$ satisfies Assumption~\ref{assum1}. Let us consider
$m_{max}$ large enough such that $\mu^\star - \lambda^\star >
2^{1-m_{max}}$. Define $N = 2^{m_{max}}$ and $I^{(N)} =
(N,\lambda_N,\mu_N) \in \mathcal{I}$ with $\lambda_N = \lceil N\lambda^\star
\rceil /N$, $\mu_N = \lfloor N\mu^\star \rfloor /N$. Then for every
partition $I \in \mathcal{I}$, we have\\
i) If $I$ is a subdivision of $I^{(N)}$, then $L(I) =
L(I^{(N)})$.\\
ii) If $I$ is not a subdivision of $I^{(N)}$, then $L(I) >
L(I^{(N)})$.
\end{lem}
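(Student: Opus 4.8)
The plan is to work directly with the definition $L(I)=\|g_I-g\|_2^2$, where $g_I$ is the $L^2$-orthogonal projection of $g$ onto the space $\mathcal{F}_I$ of piecewise-constant functions adapted to $I$. Since $g=\theta+(1-\theta)f$ is constant (equal to $\theta$) on $[\lambda^\star,\mu^\star]$ by Assumption~\ref{assum1}, the only contribution to the bias $\|g_I-g\|_2^2$ comes from intervals $I_k$ of the partition on which $g$ is non-constant. The key observation is that $g_I$ restricted to an interval $I_k$ is the mean of $g$ over $I_k$, so $\|g_I-g\|_{L^2(I_k)}^2 = \int_{I_k} g^2 - |I_k|^{-1}(\int_{I_k} g)^2$, which is zero iff $g$ is a.e.\ constant on $I_k$. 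Thus $L(I)$ decomposes as a sum of local contributions, and an interval $I_k$ contributes $0$ exactly when either $I_k\subseteq[\lambda^\star,\mu^\star]$ (where $g\equiv\theta$) or $g$ happens to be constant on $I_k$; under the monotonicity part of Assumption~\ref{assum1}, $g$ is strictly monotone (hence non-constant) on any subinterval of $[0,\lambda^\star]$ or of $[\mu^\star,1]$ of positive length, so $I_k$ contributes $0$ iff $I_k\subseteq[\lambda^\star,\mu^\star]$.

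Next I would describe the partition $I^{(N)}=(N,\lambda_N,\mu_N)$ explicitly: its intervals are the dyadic intervals $[j/N,(j+1)/N)$ except that the block of them lying inside $[\lambda_N,\mu_N]=[\lceil N\lambda^\star\rceil/N,\lfloor N\mu^\star\rfloor/N]$ is merged into a single interval. By construction $[\lambda_N,\mu_N]\subseteq[\lambda^\star,\mu^\star]$ (rounding the left endpoint up and the right endpoint down), and the choice $m_{\max}$ large enough, i.e.\ $\mu^\star-\lambda^\star>2^{1-m_{\max}}$, guarantees $[\lambda_N,\mu_N]$ is non-empty. Hence in $I^{(N)}$ the "middle" interval contributes $0$ to $L(I^{(N)})$, and all the remaining contributions come from the dyadic intervals of width $1/N$ that meet the complement of $[\lambda^\star,\mu^\star]$ (plus possibly the two dyadic intervals straddling $\lambda^\star$ and $\mu^\star$).

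For part~(i), suppose $I\trianglelefteq I^{(N)}$, i.e.\ $\mathcal{F}_{I^{(N)}}\subset\mathcal{F}_I$, so $I$ refines $I^{(N)}$. Then every interval of $I$ is contained in an interval of $I^{(N)}$; in particular every sub-interval of $I$ lying inside the big middle interval of $I^{(N)}$ is contained in $[\lambda_N,\mu_N]\subseteq[\lambda^\star,\mu^\star]$, so $g\equiv\theta$ there and those pieces still contribute $0$; every other interval of $I$ refines a width-$1/N$ dyadic interval of $I^{(N)}$. I claim refining such an interval does not change the total bias contribution. This needs a short argument: since $g_I$ is a \emph{finer} projection of $g$ than $g_{I^{(N)}}$, one always has $L(I)\le L(I^{(N)})$ by the Pythagorean/nesting property of orthogonal projections; for the reverse inequality I would argue that on each width-$1/N$ dyadic interval outside $[\lambda^\star,\mu^\star]$, $g$ is a restriction of either $\theta+(1-\theta)f$ with $f$ monotone or — no, more carefully: the point is that refining never \emph{increases} bias, so the real content of (i) is that the refinement inside the middle block keeps those pieces at $0$ and refinement elsewhere can only help; but we want \emph{equality}. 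Here is the cleaner route: since $I\trianglelefteq I^{(N)}$ means $\mathcal F_{I^{(N)}}\subset\mathcal F_I$, decompose $g_I=g_{I^{(N)}}+(g_I-g_{I^{(N)}})$ orthogonally, giving $L(I^{(N)})=L(I)+\|g_I-g_{I^{(N)}}\|_2^2$, so (i) is equivalent to $g_I=g_{I^{(N)}}$, i.e.\ $g_{I^{(N)}}$ already lies in $\mathcal F_I$ — and indeed $g_{I^{(N)}}$ is constant (equal to the mean of $g$, which is $\theta$) on the middle block and equal to the mean of $g$ on each width-$1/N$ dyadic interval, hence is itself a function in $\mathcal F_I$ whenever $I$ refines $I^{(N)}$, provided $g_{I^{(N)}}$ restricted to each dyadic block is constant — which it is, being a projection onto $I^{(N)}$. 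Wait: $g_{I^{(N)}}$ is by definition $\mathcal F_{I^{(N)}}$-measurable, and $\mathcal F_{I^{(N)}}\subset\mathcal F_I$, so $g_{I^{(N)}}\in\mathcal F_I$ automatically, forcing $g_I=g_{I^{(N)}}$ and hence $L(I)=L(I^{(N)})$. So part~(i) is actually immediate from nesting of projection spaces and does not use Assumption~\ref{assum1} at all.

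For part~(ii), suppose $I$ is \emph{not} a subdivision of $I^{(N)}$. By the same orthogonal decomposition it suffices to show $L(I)>L(I^{(N)})$, equivalently to rule out $L(I)\le L(I^{(N)})$; since in general refinement lowers bias, the subtle case is when $I$ and $I^{(N)}$ are not comparable. The strategy: because $I\in\mathcal I$, $I$ also has the structure of dyadic intervals of some width $1/M$ with one merged middle block $[\lambda,\mu]$. I would argue that the hypothesis $I\ntrianglelefteq I^{(N)}$ forces either (a) the middle block $[\lambda,\mu]$ of $I$ is \emph{not} contained in $[\lambda_N,\mu_N]$ — but in fact I should compare with $[\lambda^\star,\mu^\star]$: if $[\lambda,\mu]\not\subseteq[\lambda^\star,\mu^\star]$ then the merged interval of $I$ overlaps a region where $g$ is strictly monotone, so that interval contributes a strictly positive amount $\varepsilon>0$ to $L(I)$ which a strictly finer/compatible partition avoids, giving $L(I)\ge\varepsilon$ while $L(I^{(N)})$ can be made $<\varepsilon$; or (b) $[\lambda,\mu]\subseteq[\lambda^\star,\mu^\star]$ but $[\lambda,\mu]\not\supseteq[\lambda_N,\mu_N]$, or the dyadic grids are incompatible, which forces the common refinement $I\wedge I^{(N)}$ to strictly refine $I$ on the complement of the middle block, and then — this is where monotonicity is essential — strict monotonicity of $g$ there makes the local bias strictly decrease under this strict refinement: on an interval where $g$ is strictly monotone, splitting it strictly reduces $\int g^2-|\cdot|^{-1}(\int g)^2$. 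Concretely, let $J=I\wedge I^{(N)}$ be the common refinement; then $J\trianglelefteq I$ and $J\trianglelefteq I^{(N)}$, so $L(J)\le L(I)$ and $L(J)\le L(I^{(N)})$, with $L(J)=L(I^{(N)})$ by part~(i) since $J$ refines $I^{(N)}$. Hence it remains to show $L(I)>L(J)=L(I^{(N)})$, i.e.\ the refinement from $I$ to $J$ \emph{strictly} decreases the bias. This fails only if $g_J=g_I$, i.e.\ $g$ is already "as projected as possible" on $I$; but $J$ strictly refines $I$ somewhere (else $I\trianglelefteq I^{(N)}$), say it splits an interval $I_k$ of $I$; if that $I_k\subseteq[\lambda^\star,\mu^\star]$ the split changes nothing, so the assumption $I\ntrianglelefteq I^{(N)}$ must force a strict split of some $I_k$ that is \emph{not} contained in $[\lambda^\star,\mu^\star]$, on which $g$ is strictly monotone by Assumption~\ref{assum1}, and a strict split of an interval on which $g$ is strictly monotone strictly decreases the local bias (the conditional mean on each half differs from the global mean). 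This yields $L(I)>L(I^{(N)})$.

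\textbf{Main obstacle.} The delicate combinatorial point is part~(ii): showing that $I\ntrianglelefteq I^{(N)}$ genuinely forces a strict refinement of an interval on which $g$ is non-constant, rather than merely a strict refinement inside the flat zone $[\lambda^\star,\mu^\star]$. This is exactly where Assumption~\ref{assum1} (and in particular its monotonicity part, which the authors flag as not strictly necessary but convenient) is used, and where — per the remark in the text — the weaker Assumption~A' of \cite{Celisse-Robin2010} is insufficient. One must carefully exploit the dyadic structure of the partitions in $\mathcal I$ together with the sharp choice $\lambda_N=\lceil N\lambda^\star\rceil/N$, $\mu_N=\lfloor N\mu^\star\rfloor/N$ to pin down that the middle block of $I^{(N)}$ is the \emph{maximal} dyadic-compatible interval inside $[\lambda^\star,\mu^\star]$, so any $I$ not refining $I^{(N)}$ either has a too-large middle block (spilling into the monotone region) or an incompatible grid forcing a non-trivial split outside the flat zone.
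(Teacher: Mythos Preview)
Your argument for part~(i) contains a real logical error. You write that ``$g_{I^{(N)}}\in\mathcal F_I$ automatically, forcing $g_I=g_{I^{(N)}}$,'' and conclude that (i) ``does not use Assumption~\ref{assum1} at all.'' This is wrong: the inclusion $\mathcal F_{I^{(N)}}\subset\mathcal F_I$ makes $g_{I^{(N)}}\in\mathcal F_I$ trivially, but it does \emph{not} force the projection of $g$ onto the larger space $\mathcal F_I$ to equal its projection onto the smaller one (take $g$ linear on $[0,1]$, $I^{(N)}=\{[0,1]\}$, $I=\{[0,1/2],[1/2,1]\}$). Your own Pythagorean identity $L(I^{(N)})=L(I)+\|g_I-g_{I^{(N)}}\|_2^2$ shows that equality in~(i) is equivalent to $g_I=g_{I^{(N)}}$, and this needs structure: since $N=2^{m_{\max}}$ is the \emph{finest} grid in $\mathcal I$, any $I\in\mathcal I$ with $I\trianglelefteq I^{(N)}$ must itself have grid $N$ and middle block $[\lambda,\mu]\subset[\lambda_N,\mu_N]$; hence $I$ and $I^{(N)}$ are identical outside $[\lambda_N,\mu_N]$ and differ only inside, where $g\equiv\theta$ by Assumption~\ref{assum1}. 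That is the paper's argument, and it is essentially what you had sketched \emph{before} your ``cleaner route'' detour.

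For part~(ii), your common-refinement approach (set $J=I\wedge I^{(N)}$, show $L(J)=L(I^{(N)})$ and $L(I)>L(J)$) is different from the paper's direct region-by-region comparison but is workable. Two points to tighten: first, $L(J)=L(I^{(N)})$ does not follow from part~(i) as stated since $J\notin\mathcal I$ in general --- you need the easy extension that any refinement of $I^{(N)}$ adding breakpoints only inside $[\lambda_N,\mu_N]$ has the same bias (this is true because all breakpoints of $I$ lie on the $1/N$-grid). Second, to get $L(I)>L(J)$ you must check that the new breakpoint $b\notin(\lambda_N,\mu_N)$ splits an interval of $I$ into two pieces with \emph{different} means of $g$; monotonicity gives this when the interval lies on one side of the flat zone, but when the middle block $[\lambda,\mu]$ of $I$ straddles $[\lambda^\star,\mu^\star]$ a symmetry could in principle defeat this, and you should argue (as the paper does, via a direct computation on $[\lambda,\mu]$) why this cannot happen under Assumption~\ref{assum1}. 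The paper instead splits into the cases $m=m_{\max}$ and $m<m_{\max}$ and compares $\|g_I-g\|_2^2$ and $\|g_{I^{(N)}}-g\|_2^2$ directly on the region where the two partitions disagree, using that both share the $1/N$-grid outside the middle blocks; your route is more conceptual, the paper's more hands-on.
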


We are now ready to prove Proposition~\ref{prop:CR_sqrtn_consistent}, starting by establishing point $i)$.
First, we remark that under condition~\eqref{condition-sij},
\citeauthor{Celisse-Robin2010} prove in their Proposition~2.1 that 
$$
\frac{\hat{p}(I)}{n}\xrightarrow[n\to\infty] {a.s.} l_{\infty}(I) \in [0,1).
$$
Denoting by $\Lambda^\star=[\lambda^\star,\mu^\star]$ and $\hat{\Lambda}=[\hat{\lambda},\hat{\mu}]$, we may write
\begin{eqnarray}
\hat{\theta}_n^{CR} &=&
\theta+(\hat{\theta}_n^{CR}-\theta)\mathbf{1}_{\hat{I}  \trianglelefteq I^{(N)}}+(\hat{\theta}_n^{CR}-\theta)\mathbf{1}_{\hat{I}  \ntrianglelefteq I^{(N)}}
\notag \\
 &=&
\theta+\displaystyle\sum_{I=(N,\lambda,\mu) \trianglelefteq I^{(N)}}\left[\frac{1}{n(\mu-\lambda)}\displaystyle\sum_{i=1}^n\mathbf{1}\{X_i\in
[\lambda,\mu]\}-\theta\right]\mathbf{1}\{\hat{\lambda}=\lambda,\hat{\mu}=\mu\} \notag
\\
&&\
+(\hat{\theta}_n^{CR}-\theta)\mathbf{1}_{\hat{I}  \ntrianglelefteq I^{(N)}}, \label{eq:estimator}
\end{eqnarray}
where $N = 2^{m_{max}}$ as in Lemma ~\ref{lem3}. For each partition $I=(N,\lambda,\mu) \trianglelefteq I^{(N)}$, we have
$[\lambda,\mu]\subseteq\Lambda^\star$. By applying the strong law of large
numbers we get that
\begin{equation*}
\frac{1}{n(\mu-\lambda)}\displaystyle\sum_{i=1}^n\mathbf{1}\{X_i\in
[\lambda,\mu]\}\xrightarrow[n\to\infty] {a.s.} \frac{\mathbb{P}(X_i \in
[\lambda, \mu])}{\mu -\lambda} = \theta.
\end{equation*}
Since the cardinality $card(\mathcal I)$ of $\mathcal{I}$ is finite and does not depend on
$n$, in order to finish the proof, it is sufficient to establish that
$$(\hat{\theta}_n^{CR}-\theta)\mathbf{1}_{\hat{I}  \ntrianglelefteq I^{(N)}}\xrightarrow[n\to\infty]
{a.s.} 0.$$
Using Lemma~\ref{lem3}, we have
$L(\hat{I}) > L(I^{(N)})$. Let 
\begin{equation}
\label{defi-gama}
\gamma = \underset{I\ntrianglelefteq
  I^{(N)}}{\min}L(I) -L(I^{(N)}) >0,
\end{equation}
we obtain that
\begin{multline*}
|\hat{\theta}_n^{CR}-\theta|\mathbf{1}_{\hat{I}  \ntrianglelefteq I^{(N)}} \leq
(N-\theta)\mathbf{1}\{L(\hat{I}) - L(I^{(N)}) \geq \gamma\}
\leq \\(N-\theta)\mathbf{1}\{|\hat{L}_{\hat{p}(\hat{I})}(\hat{I}) - L(\hat{I})|+|\hat{L}_{\hat{p}(I^{N})}(I^{N}) - L(I^{N})|+\hat{L}_{\hat{p}(\hat{I})}(\hat{I})- \hat{L}_{\hat{p}(I^{(N)})}(I^{(N)}) \geq \gamma\}\\
\leq (N-\theta)\mathbf{1}\{2\underset{I\in \mathcal{I}}{\sup}|\hat{L}_{\hat{p}(I)}(I) - L(I)|+\hat{L}_{\hat{p}(\hat{I})}(\hat{I})- \hat{L}_{\hat{p}(I^{(N)})}(I^{(N)}) \geq \gamma\}.
\end{multline*}
By definition of $\hat{I}$, we have
$\hat{L}_{\hat{p}(\hat{I})}(\hat{I})-
\hat{L}_{\hat{p}(I^{(N)})}(I^{(N)}) \leq 0$, so that
\begin{eqnarray}
\label{eq-sup}
|\hat{\theta}_n^{CR}-\theta|\mathbf{1}_{\hat{I}  \ntrianglelefteq I^{(N)}} &\leq&
(N-\theta)\mathbf{1}\{\underset{I\in
    \mathcal{I}}{\sup}|\hat{L}_{\hat{p}(I)}(I) - L(I)| \geq
  \frac{\gamma}{2}\}\\
&\leq&
(N-\theta) \displaystyle\sum_{I \in \mathcal{I}}\mathbf{1}\{|\hat{L}_{\hat{p}(I)}(I) - L(I)| \geq
  \frac{\gamma}{2}\}.\notag
\end{eqnarray}
Since $\forall I \in \mathcal{I}$, we both have $\hat{L}_p(I)
\xrightarrow[n\to\infty] {a.s.} L(I)$ and $\hat{p}(I)/n
\xrightarrow[n\to\infty] {a.s.} l_{\infty}(I) \in  [0,1)$ as well as the
fact that 
$\hat{R}_p(I)$ (given by~\eqref{LPO-estimator}) is a continuous
function of $p/n$, we obtain $\hat{L}_{\hat{p}(I)}(I)
\xrightarrow[n\to\infty] {a.s.} L(I)$. Therefore,
\begin{equation*}
\mathbf{1}\{|\hat{L}_{\hat{p}(I)}(I) - L(I)| \geq
  \frac{\gamma}{2}\}\xrightarrow[n\to\infty] {a.s.} 0.
\end{equation*}
Indeed, if $X_n \xrightarrow[] {a.s.} X$ then $\forall
\epsilon >0,\ \text{we have}\ \mathbf{1}\{|X_n-X|\geq \epsilon\}\xrightarrow[]
{a.s.} 0$. It thus follows that $(\hat{\theta}_n^{CR}-\theta)\mathbf{1}_{\hat{I}  \ntrianglelefteq I^{(N)}}\xrightarrow[]
{a.s.} 0$. We finally get that $\hat{\theta}_n^{CR} \xrightarrow[] {a.s.} \theta$.

We now turn to point $ii)$. 
We may write as previously, 
\begin{eqnarray*}
\sqrt{n}(\hat{\theta}_n^{CR} - \theta) =
\displaystyle\sum_{I=(N,\lambda,\mu) \trianglelefteq I^{(N)}}&&\sqrt{n}\Big[\frac{1}{n(\mu-\lambda)}\displaystyle\sum_{i=1}^n\mathbf{1}\{X_i\in
[\lambda,\mu]\}-\theta\Big]\mathbf{1}_{\{\hat{\lambda}=\lambda,\hat{\mu}=\mu\}}\\
& &+\sqrt{n}(\hat{\theta}_n^{CR}-\theta)\mathbf{1}_{\{\hat{I}  \ntrianglelefteq I^{(N)}\}}.
\end{eqnarray*}
For each partition $I=(N,\lambda,\mu) \trianglelefteq I^{(N)}$, by
applying the central limit theorem, we get that
\begin{equation*}
\sqrt{n}\big[\frac{1}{n(\mu-\lambda)}\displaystyle\sum_{i=1}^n\mathbf{1}_{X_i\in
[\lambda,\mu]}-\theta\big] \xrightarrow[n\to\infty] {d} \mathcal{N}\Big(0,\theta\Big(\frac{1}{\mu-\lambda}-\theta\Big)\Big).
\end{equation*}
Hence, using again that $\text{card} (\mathcal{I})$ is finite,
\begin{equation}
\label{eq-somme}
\displaystyle\sum_{I=(N,\lambda,\mu) \trianglelefteq I^{(N)}}\sqrt{n}\big[\frac{1}{n(\mu-\lambda)}\displaystyle\sum_{i=1}^n\mathbf{1}_{X_i\in
[\lambda,\mu]}-\theta\big]\mathbf{1}_{\hat{\lambda}=\lambda,\hat{\mu}=\mu}
= O_{\mathbb{P}}(1).
\end{equation}
We shall now prove that
$\sqrt{n}(\hat{\theta}_n^{CR}-\theta)\mathbf{1}_{\hat{I}  \ntrianglelefteq I^{(N)}}\xrightarrow[n\to\infty]
{\mathbb{P}} 0$. In fact, according to~\eqref{eq-sup}, for all $\epsilon > 0$, we have
\begin{eqnarray*}
\mathbb{P}(\sqrt{n}|\hat{\theta}_n^{CR}-\theta|\mathbf{1}_{\hat{I}  \ntrianglelefteq I^{(N)}}
>\epsilon) &\leq& \mathbb{P}(\hat{I}  \ntrianglelefteq I^{(N)})\\
&\leq& \mathbb{P}(\underset{I\in
    \mathcal{I}}{\sup}|\hat{L}_{\hat{p}(I)}(I) - L(I)| \geq
  \frac{\gamma}{2})\\
 &\leq&\displaystyle\sum_{I \in \mathcal{I}}\mathbb{P}(|\hat{L}_{\hat{p}(I)}(I) - L(I)| \geq
  \frac{\gamma}{2}) \xrightarrow[n\to\infty] {} 0,
\end{eqnarray*}
where $\gamma$ is defined by~\eqref{defi-gama}. Therefore,
$\sqrt{n}(\hat{\theta}_n^{CR}-\theta)\mathbf{1}_{\hat{I}  \ntrianglelefteq I^{(N)}}
=o_{\mathbb{P}}(1)$. We finally conclude that $\sqrt{n}(\hat{\theta}_n^{CR} -
\theta) = O_{\mathbb{P}}(1).$\\
We now prove the last statement $iii)$ of the proposition. We have
\[
 \Var(\sqrt{n}\hat\theta_{n}^{CR}) \le
 \esp\big[(\sqrt{n}(\hat\theta_{n}^{CR}-\theta))^2\big],
\]
where
\begin{eqnarray*}\label{eq:decomp_variance2}
\esp\big[(\sqrt{n}(\hat\theta_{n}^{CR}-\theta))^2\big] =
\displaystyle\sum_{I=(N,\lambda,\mu) \trianglelefteq I^{(N)}}&&\esp
\Big[\frac{1}{n}\Big(\displaystyle\sum_{i=1}^n \big(\frac{1}{\mu-\lambda}\mathbf{1}\{X_i\in
[\lambda,\mu]\}-\theta\big)\Big)^2\mathbf{1}_{\{\hat{\lambda}=\lambda,\hat{\mu}=\mu\}}\Big]\\
& &+\esp\big[(\sqrt{n}(\hat\theta_{n}^{CR}-\theta))^2 \mathbf{1}\{ \hat{I}\ntrianglelefteq I^{(N)}\}\big]. 
\end{eqnarray*}
The first term of the above equation is bounded as in the proof of Proposition~\ref{prop:sqrtn_consistent} (see inequalities~\eqref{eq:var_bounded1} and \eqref{eq:var_bounded2})
\begin{eqnarray*}
& &\displaystyle\sum_{I=(N,\lambda,\mu) \trianglelefteq I^{(N)}}\esp
\Big[\frac{1}{n}\Big(\displaystyle\sum_{i=1}^n \big(\frac{1}{\mu-\lambda}\mathbf{1}\{X_i\in
[\lambda,\mu]\}-\theta\big)\Big)^2\mathbf{1}_{\{\hat{\lambda}=\lambda,\hat{\mu}=\mu\}}\Big]\\
&\leq& \sqrt{\sum_{I=(N,\lambda,\mu) \trianglelefteq I^{(N)}} \Big[\frac{\theta}{n}\Big(\frac{1}{(\mu-\lambda)^3}-\frac{4\theta}{(\mu-\lambda)^2}+\frac{6\theta^2}{\mu-\lambda}-3\theta^3\Big)
+ \frac{n-1}{n} \theta^2\Big(\frac{1}{(\mu-\lambda)} - \theta\Big)^2\Big] }. 
\end{eqnarray*}
The second term is bounded by
\begin{eqnarray*}
\esp\big[(\sqrt{n}(\hat\theta_{n}^{CR}-\theta))^2 \mathbf{1}\{
\hat{I}\ntrianglelefteq I^{(N)}\}\big] &\leq& (N-\theta)^2n\mathbb{P}(\hat{I}\ntrianglelefteq I^{(N)})\\
&\leq& (N-\theta)^2n\mathbb{P}(\underset{I\in
    \mathcal{I}}{\sup}|\hat{L}_p(I) - L(I)| \geq
  \frac{\gamma}{2})\\
 &\leq&(N-\theta)^2n\displaystyle\sum_{I \in \mathcal{I}}\mathbb{P}(|\hat{L}_p(I) - L(I)| \geq
  \frac{\gamma}{2}).
\end{eqnarray*}
For each partition $I \in \mathcal{I}$, according to the calculations
in the proof of Lemma \ref{lem1}, we have
\begin{eqnarray*}
\hat{L}_p(I)-L(I) 
&=& \frac{2n-p}{(n-1)(n-p)}\displaystyle\sum_k\frac{n_k}{n|I_k|}
-
\frac{n(n-p+1)}{(n-1)(n-p)}\displaystyle\sum_k\frac{1}{|I_k|}\big(\frac{n_k}{n}\big)^2
+ s_{21}\\
& =&\frac{2n-p}{(n-1)(n-p)} \Big\{\displaystyle\sum_k\frac{1}{|I_k|}\Big(\frac{n_k}{n} -\alpha_k\Big) + s_{11}-s_{21}\Big\}\\ 
& -&\frac{n(n-p+1)}{(n-1)(n-p)}\displaystyle\sum_k\frac{1}{|I_k|}\Big(\frac{n_k}{n}
-\alpha_k\Big)^2 \\
&-&\frac{2n(n-p+1)}{(n-1)(n-p)}\displaystyle\sum_k\frac{\alpha_k}{|I_k|}\Big(\frac{n_k}{n}
-\alpha_k\Big).
\end{eqnarray*}
This leads to
\begin{eqnarray*}
\mathbb{P}(|\hat{L}_p(I) - L(I)| \geq  \frac{\gamma}{2})
&\leq& \mathbb{P}\Big(\Big|\displaystyle\sum_k\frac{1}{|I_k|}\Big(\frac{n_k}{n} -\alpha_k\Big) \Big| \geq \frac{(n-1)(n-p)\gamma}{6(2n-p)}
-|s_{21}-s_{11}|\Big)\\ 
&+& \mathbb{P}\Big(\displaystyle\sum_k\frac{1}{|I_k|}\Big(\frac{n_k}{n}
-\alpha_k\Big)^2 \geq \frac{(n-1)(n-p)\gamma}{6n(n-p+1)} \Big)\\
&+&\mathbb{P}\Big(\Big|\displaystyle\sum_k\frac{\alpha_k}{|I_k|}\Big(\frac{n_k}{n}
-\alpha_k\Big) \Big| \geq \frac{(n-1)(n-p)\gamma}{12n(n-p+1)}
\Big).
\end{eqnarray*}
According to Hoeffding's inequality, we have
\begin{eqnarray*}
&&\mathbb{P}\Big(\Big|\displaystyle\sum_k\frac{1}{|I_k|}\Big(\frac{n_k}{n}
-\alpha_k\Big) \Big| \geq \frac{(n-1)(n-p)\gamma}{6(2n-p)}
-|s_{21}-s_{11}|\Big)\\
&=&\mathbb{P}\Big(\Big|\displaystyle\sum_{i=1}^n\displaystyle\sum_k\frac{1}{|I_k|}\Big(\mathbf{1}\{X_i
\in I_k\}-\alpha_k\Big) \Big| \geq \frac{n(n-1)(n-p)\gamma}{6(2n-p)}
-n|s_{21}-s_{11}|\Big)\\
&\leq& 2\exp\Big[-2n\Big(\displaystyle\sum_k\frac{1}{|I_k|}\Big)^{-2}\Big(\frac{(n-1)(n-p)\gamma}{6(2n-p)}
-|s_{21}-s_{11}|\Big)^2\Big],
\end{eqnarray*} 
as well as 
\begin{equation*}
\mathbb{P}\Big(\Big|\displaystyle\sum_k\frac{\alpha_k}{|I_k|}\Big(\frac{n_k}{n}
-\alpha_k\Big) \Big| \geq \frac{(n-1)(n-p)\gamma}{12n(n-p+1)}\Big)
\leq 2\exp\Big[-2n s_{11}^{-2}\Big(\frac{(n-1)(n-p)\gamma}{12n(n-p+1)}
\Big)^2\Big],
\end{equation*} 
and
\begin{eqnarray*}
&&\mathbb{P}\Big(\Big|\displaystyle\sum_k\frac{1}{|I_k|}\Big(\frac{n_k}{n}
-\alpha_k\Big)^2 \Big| \geq \frac{(n-1)(n-p)\gamma}{6n(n-p+1)}
\Big)\\
&\leq&\displaystyle\sum_k\mathbb{P}\Big(\Big|\frac{n_k}{n}
-\alpha_k \Big|^2 \geq \frac{|I_k| (n-1)(n-p)\gamma}{6Dn(n-p+1)}
\Big)\\
&\leq&\displaystyle\sum_k\mathbb{P}\Big(\Big|\displaystyle\sum_{i=1}^n\Big(\mathbf{1}\{X_i
\in I_k\}-\alpha_k\Big) \Big|^2 \geq \frac{|I_k| n(n-1)(n-p)\gamma}{6D(n-p+1)}
\Big)\\
&\leq& 2\exp\Big[-2\Big(\frac{|I_k| (n-1)(n-p)\gamma}{6D(n-p+1)}
\Big)\Big].
\end{eqnarray*}
Hence, we obtain that $n \mathbb{P}(|\hat{L}_p(I) - L(I)| \geq
\frac{\gamma}{2}) \xrightarrow[n\to +\infty]{} 0$. Finally, we
conclude that $\underset{n \rightarrow \infty}{\limsup}
  \Var(\sqrt{n}\hat\theta^{CR}_{n}) < +\infty$. 

\
\\
\
\begin{merci}
 The authors are grateful to Cyril Dalmasso, Elisabeth Gassiat and Pierre Neuvial for
 fruitful discussions concerning this work. They also would like to thank an anonymous referee whose remarks and suggestions greatly improved this work.
\end{merci}

\bibliographystyle{chicago}
\bibliography{prop_true_null}


\newpage
\appendix                    

\section{Appendix. Proofs of technical lemmas}            
   \label{appendix}
\subsection{Proof of Lemma~\ref{lem1}}
Note that \cite{Celisse-Robin2010} prove that
$\esp[||g-\hat{g}_I||_2^2] \xrightarrow[n\to\infty] {} 0$, while we
further establish that it is $O(1/n)$. By a simple bias-variance decomposition, we may write
\begin{equation*}
\esp[||g_I-\hat{g}_I||_2^2] = \esp[||g-\hat{g}_I||_2^2] -
||g_I-g||_2^2.
\end{equation*}
As for the bias term, it is easy to show that
\begin{eqnarray}
\label{term-bias}
||g-g_I||_2^2 &=&\underset{h \in \mathcal{F}_I}{\inf}||g-h||_2^2
\notag\\
&=&\underset{(a_k)_k \in \mathbb{R}}{\inf}\Big[||g||_2^2
-2\int_0^1\big(\displaystyle\sum_ka_k\mathbf{1}_{I_k}(x)\big)g(x)dx + \int_0^1\big(\displaystyle\sum_ka_k\mathbf{1}_{I_k}(x)\big)^2dx\Big] \notag\\
&=&\underset{(a_k)_k \in \mathbb{R}}{\inf}\Big[||g||_2^2-2
\displaystyle\sum_ka_k\alpha_k +\displaystyle\sum_ka_k^2|I_k|\Big] \notag\\
&=&||g||_2^2- \displaystyle\sum_k\frac{\alpha_k^2}{|I_k|} = ||g||_2^2
- s_{21}.
\end{eqnarray}
Let us now calculate the mean squared error of $\hat{g}_I$
\begin{eqnarray*}
\esp[||g-\hat{g}_I||_2^2] &=& ||g||_2^2+\esp\Big[||\hat{g}_I||_2^2 -
2\int_0^1\hat{g}_I(x)g(x)dx\Big]\\
&=&||g||_2^2+\esp\Big[\int_0^1\big(\sum_k\frac{n_k}{n|
  I_k|}\mathbf{1}_{I_k}(x)\big)^2dx -
2\int_0^1\sum_k\frac{n_k}{n|
  I_k|}\mathbf{1}_{I_k}(x)g(x)dx\Big]\\
&=&||g||_2^2+\esp\Big[\sum_k\frac{n_k^2}{n^2|
  I_k|} -
2\sum_k\frac{n_k\alpha_k}{n|
  I_k|}\Big] .
\end{eqnarray*}
Since $n_k $ follows a Binomial distribution $ \mathcal{B}(n, \alpha_k)$, we have
\begin{equation*}
\esp[n_k] = n\alpha_k \ \text{and} \ \esp[n_k^2] = n^2\alpha_k^2 + n\alpha_k(1-\alpha_k).
\end{equation*}
Therefore,
\begin{eqnarray}
\label{term-MSE}
\esp[||g-\hat{g}_I||_2^2]&=& ||g||_2^2+ \sum_k\frac{n^2\alpha_k^2+n\alpha_k(1-\alpha_k)}{n^2|
  I_k|} -2 \sum_k\frac{n\alpha_k^2}{n|I_k|}\notag\\
&=& ||g||_2^2 - s_{21} + \frac{1}{n}(s_{11} -s_{21}).
\end{eqnarray}
Using~\eqref{term-bias}  and~\eqref{term-MSE}, we  obtain  the desired
result, namely 
\begin{equation*}
\esp[||g_I-\hat{g}_I||_2^2] = \esp[||g-\hat{g}_I||_2^2] -
||g_I-g||_2^2 = \frac{1}{n}(s_{11} -s_{21}) = O\big(\frac{1}{n}\big).
\end{equation*}

\subsection{Proof of Lemma~\ref{lem2}}
i) Since 
\begin{equation*}
\underset{n \rightarrow \infty}{\lim} \frac{p}{n} < 1\ \text{and}\ \frac{n_k}{n}
\xrightarrow[n\to\infty] {a.s.} \alpha_k,\ \text{for all}\ k,
\end{equation*}
we obtain that
\begin{eqnarray*}
\hat{L}_p(I)&=&||g||_2^2+\frac{2n-p}{(n-1)(n-p)}\displaystyle\sum_k\frac{n_k}{n|I_k|}
-
\frac{n(n-p+1)}{(n-1)(n-p)}\displaystyle\sum_k\frac{1}{|I_k|}\big(\frac{n_k}{n}\big)^2\\
&\xrightarrow[n\to\infty] {a.s.}&||g||_2^2 -
\displaystyle\sum_k\frac{\alpha_k^2}{|I_k|}=||g||_2^2 -s_{21} =||g_I-g||_2^2= L(I).
\end{eqnarray*}
ii) By definition of $R(I)$ and using~\eqref{term-MSE}, we have
\begin{equation*}
R(I) = \esp[||g-\hat{g}_I||_2^2] - ||g||_2^2 = - s_{21} + \frac{1}{n}(s_{11} -s_{21}).
\end{equation*}
This gives that
\begin{eqnarray}
\label{equa-ii}
\sqrt{n}[\hat{R}_p(I)-R(I)] &=& \sqrt{n}\Big[\frac{2n-p}{(n-1)(n-p)}\displaystyle\sum_k\frac{n_k}{n|I_k|}
-
\frac{n(n-p+1)}{(n-1)(n-p)}\displaystyle\sum_k\frac{1}{|I_k|}\big(\frac{n_k}{n}\big)^2
\notag\\ & &+ s_{21} - \frac{1}{n}(s_{11} -s_{21})\Big] \notag\\
&=&\frac{2n-p}{(n-1)(n-p)}\displaystyle\sum_k\frac{1}{|I_k|}\big[\sqrt{n}\big(\frac{n_k}{n}
-\alpha_k\big)\big] +\frac{(2n-p)\sqrt{n}}{(n-1)(n-p)}s_{11} \notag\\ & &-\frac{n(n-p+1)}{\sqrt{n}(n-1)(n-p)}\displaystyle\sum_k\frac{1}{|I_k|}\big[\sqrt{n}\big(\frac{n_k}{n}
-\alpha_k\big)\big]^2 -\frac{(2n-p)\sqrt{n}}{(n-1)(n-p)}s_{21}\notag\\ & &-\frac{2n(n-p+1)}{(n-1)(n-p)}\displaystyle\sum_k\frac{\alpha_k}{|I_k|}\big[\sqrt{n}\big(\frac{n_k}{n}
-\alpha_k\big)\big]-\frac{1}{\sqrt{n}}(s_{11} -s_{21}) \notag\\
&=&T_1 -\frac{2n(n-p+1)}{(n-1)(n-p)}\displaystyle\sum_k\frac{\alpha_k}{|I_k|}\big[\sqrt{n}\big(\frac{n_k}{n}
-\alpha_k\big)\big ].
\end{eqnarray}
Then, using the central limit theorem and the continuity of the function $x
\mapsto x^2$, we have
\begin{equation*}
\sqrt{n}\big(\frac{n_k}{n} -\alpha_k\big) \xrightarrow[n\to\infty] {d} \mathcal{N}(0,\alpha_k(1-\alpha_k)),
\end{equation*}
\begin{equation*}
\big[\sqrt{n}\big(\frac{n_k}{n} -\alpha_k\big)\big]^2
\xrightarrow[n\to\infty] {d} Z_k^2 \ \text{with}\ Z_k \sim \mathcal{N}(0,\alpha_k(1-\alpha_k)).
\end{equation*}
It thus follows that $T_1=o_{\mathbb{P}}(1)$.
We now consider the remaining term in~\eqref{equa-ii}. We have 
\begin{eqnarray*}
\displaystyle\sum_k\frac{\alpha_k}{|I_k|}\big[\sqrt{n}\big(\frac{n_k}{n}
-\alpha_k\big)\big]&=&\frac{1}{\sqrt{n}}\displaystyle\sum_k\frac{\alpha_k}{|I_k|}n_k
-\sqrt{n}\displaystyle\sum_k\frac{\alpha_k^2}{|I_k|}\\
&=&\frac{1}{\sqrt{n}}\displaystyle\sum_k\frac{\alpha_k}{|I_k|}\big(\sum_{i=1}^n\mathbf{1}_{X_i \in
  I_k}\big)
-\sqrt{n}\ s_{21}\\
&=&\frac{1}{\sqrt{n}}\sum_{i=1}^n\big(\displaystyle\sum_k\frac{\alpha_k}{|I_k|}\mathbf{1}_{X_i \in
  I_k}-s_{21}\big).
\end{eqnarray*}
Let us denote
\begin{equation*}
Y_i=\displaystyle\sum_k\frac{\alpha_k}{|I_k|}\mathbf{1}_{X_i \in
  I_k}-s_{21}.
\end{equation*}
Then the random variables $Y_1, Y_2, \ldots, Y_n$ are iid centered with variance
\begin{equation*}
\sigma_I^2=\esp(Y_1^2)=\esp\Big(\displaystyle\sum_k\frac{\alpha_k^2}{|I_k|^2}\mathbf{1}_{X_1 \in
  I_k}-2s_{21}\displaystyle\sum_k\frac{\alpha_k}{|I_k|}\mathbf{1}_{X_1 \in
  I_k} + s_{21}^2\Big)=s_{32}-s_{21}^2.
\end{equation*}
By the central limit theorem, we obtain 
\begin{equation*}
\displaystyle\sum_k\frac{\alpha_k}{|I_k|}\big[\sqrt{n}\big(\frac{n_k}{n}
-\alpha_k\big)\big]\xrightarrow[n\to\infty] {d} \mathcal{N}(0, \sigma_I^2).
\end{equation*}
Combining this with~\eqref{equa-ii} implies that
\begin{equation*}
 \sqrt{n}[\hat{R}_p(I)-R(I)]\xrightarrow[n\to\infty] {d} \mathcal{N}(0, 4\sigma_I^2).
\end{equation*}
It is easy to calculate that
\begin{equation*}
 \sqrt{n}\big(\hat{L}_p(I) -L(I)\big)=\sqrt{n}\big(\hat{R}_p(I)
-R(I)\big) + \frac{1}{\sqrt{n}}(s_{11} -s_{21}).
\end{equation*}
Hence, we have
\begin{equation*}
 \sqrt{n}[\hat{L}_p(I)-L(I)]\xrightarrow[n\to\infty] {d} \mathcal{N}(0, 4\sigma_I^2),
\end{equation*}
which completes the proof.
\subsection{Proof of Lemma~\ref{lem3}}
i) If $I$ is a subdivision of $I^{(N)}$, then $I = (N,\lambda,
\mu)$ with $[\lambda, \mu] \subset [\lambda^\star, \mu^\star]$. For example,
we may have the following situation\\
\setlength{\unitlength}{0.9mm}
\begin{picture}(20,15)
 \put(10,0){\line(1,0){123}} 
 \put(30,-1){\line(0,1){2}} 
 \put(130,-1){\line(0,1){2}}
\put(40,-1){\line(0,1){2}}
\put(50,-1){\line(0,1){2}}
\put(60,-1){\line(0,1){2}}
\put(70,-1){\line(0,1){2}}
\put(110,-1){\line(0,1){2}}
\put(120,-1){\line(0,1){2}}
\put(64,0){\circle*{1}}
\put(117,0){\circle*{1}}
\put(30,5){$0$}
\put(70,5){$\lambda_{N}$}
\put(107,5){$\mu_{N}$}
\put(62,5){$\lambda^\star$}
\put(116,5){$\mu^\star$}
\put(130,5){$1$}
\put(140,0){$I^{(N)}$}
\end{picture}\\
\setlength{\unitlength}{0.9mm}
\begin{picture}(20,15)
 \put(10,0){\line(1,0){123}} 
 \put(30,-1){\line(0,1){2}} 
 \put(130,-1){\line(0,1){2}}
\put(40,-1){\line(0,1){2}}
\put(50,-1){\line(0,1){2}}
\put(60,-1){\line(0,1){2}}
\put(70,-1){\line(0,1){2}}
\put(80,-1){\line(0,1){2}}
\put(110,-1){\line(0,1){2}}
\put(120,-1){\line(0,1){2}}
\put(64,0){\circle*{1}}
\put(117,0){\circle*{1}}
\put(30,5){$0$}
\put(70,5){$\lambda_{N}$}
\put(107,5){$\mu_{N}$}
\put(80,5){$\lambda$}
\put(100,5){$\mu$}
\put(100,-1){\line(0,1){2}}
\put(62,5){$\lambda^\star$}
\put(116,5){$\mu^\star$}
\put(130,5){$1$}
\put(140,0){$I$}
\end{picture}\\
\\
Since $g$ is constant on the interval $[\lambda^\star,\mu^\star]\supset
[\lambda_N,\mu_N]\supset [\lambda,\mu]$, we have $g_I = g_{I^{(N)}} = g$
on the interval $[\lambda_N,\mu_N]$. This implies that $||g_I-g||_2^2 =
||g_{I^{(N)}}-g||_2^2$.\\
ii) If $I =(2^m,\lambda, \mu)$ is not a subdivision of $I^{(N)}$,
then there are two cases to consider:\\
If $m=m_{max}$ then $[\lambda,\mu] \nsubseteq [\lambda_N,\mu_N]$. For
example, we may have\\
\setlength{\unitlength}{0.9mm}
\begin{picture}(20,15)
 \put(10,0){\line(1,0){123}} 
 \put(30,-1){\line(0,1){2}} 
 \put(130,-1){\line(0,1){2}}
\put(40,-1){\line(0,1){2}}
\put(50,-1){\line(0,1){2}}
\put(60,-1){\line(0,1){2}}
\put(70,-1){\line(0,1){2}}
\put(110,-1){\line(0,1){2}}
\put(120,-1){\line(0,1){2}}
\put(64,0){\circle*{1}}
\put(117,0){\circle*{1}}
\put(30,5){$0$}
\put(70,5){$\lambda_{N}$}
\put(107,5){$\mu_{N}$}
\put(62,5){$\lambda^\star$}
\put(116,5){$\mu^\star$}
\put(130,5){$1$}
\put(140,0){$I^{(N)}$}
\end{picture}\\
\setlength{\unitlength}{0.9mm}
\begin{picture}(20,15)
 \put(10,0){\line(1,0){123}} 
 \put(30,-1){\line(0,1){2}} 
 \put(130,-1){\line(0,1){2}}
\put(40,-1){\line(0,1){2}}
\put(50,-1){\line(0,1){2}}
\put(60,-1){\line(0,1){2}}
\put(110,-1){\line(0,1){2}}
\put(120,-1){\line(0,1){2}}
\put(64,0){\circle*{1}}
\put(70,0){\circle*{1}}
\put(117,0){\circle*{1}}
\put(30,5){$0$}
\put(70,5){$\lambda_{N}$}
\put(107,5){$\mu_{N}$}
\put(59,5){$\lambda$}
\put(99,5){$\mu$}
\put(100,-1){\line(0,1){2}}
\put(63,5){$\lambda^\star$}
\put(116,5){$\mu^\star$}
\put(130,5){$1$}
\put(140,0){$I$}
\end{picture}\\
\\
Since $g_I = g_{I^{(N)}} = g$ on the interval $[\lambda_N,\mu_N]$ and
the two partitions $I$ and $I^{(N)}$ restricted to the interval $[\lambda,\mu]^c\cap
[\lambda_N,\mu_N]^c$ are the same, we thus have 
\begin{equation*}
||g_I-g||_{2,[\lambda,\mu]^c}^2 = ||g_{I^{(N)}}-g||_{2,[\lambda,\mu]^c}^2,
\end{equation*}
so that
\begin{equation*}
||g_I-g||_2^2 - ||g_{I^{(N)}}-g||_2^2 =
||g_I-g||_{2,[\lambda,\mu]}^2 - ||g_{I^{(N)}}-g||_{2,[\lambda,\mu]}^2.
\end{equation*}
Using the monotonicity of $f$ on the intervals $[0,\lambda^\star]$ and $[\mu^\star,1]$,
we get that 
\begin{equation*}
||g_I-g||_{2,[\lambda,\mu]}^2 >
||g_{I^{(N)}}-g||_{2,[\lambda,\mu]}^2,\ \text{which implies that}\ L(I) >
L(I^{(N)}).
\end{equation*}
If $m<m_{max}$, we may have for example\\
\setlength{\unitlength}{0.9mm}
\begin{picture}(20,15)
 \put(10,0){\line(1,0){123}} 
 \put(30,-1){\line(0,1){2}} 
 \put(130,-1){\line(0,1){2}}
\put(40,-1){\line(0,1){2}}
\put(50,-1){\line(0,1){2}}
\put(60,-1){\line(0,1){2}}
\put(70,-1){\line(0,1){2}}
\put(110,-1){\line(0,1){2}}
\put(120,-1){\line(0,1){2}}
\put(64,0){\circle*{1}}
\put(117,0){\circle*{1}}
\put(30,5){$0$}
\put(70,5){$\lambda_{N}$}
\put(107,5){$\mu_{N}$}
\put(62,5){$\lambda^\star$}
\put(116,5){$\mu^\star$}
\put(130,5){$1$}
\put(140,0){$I^{(N)}$}
\end{picture}\\
\setlength{\unitlength}{0.9mm}
\begin{picture}(20,15)
 \put(10,0){\line(1,0){123}} 
 \put(30,-1){\line(0,1){2}} 
 \put(130,-1){\line(0,1){2}}
\put(50,-1){\line(0,1){2}}
\put(90,-1){\line(0,1){2}}
\put(110,-1){\line(0,1){2}}
\put(64,0){\circle*{1}}
\put(70,0){\circle*{1}}
\put(117,0){\circle*{1}}
\put(30,5){$0$}
\put(70,5){$\lambda_{N}$}
\put(107,5){$\mu_{N}$}
\put(49,5){$\lambda$}
\put(89,5){$\mu$}
\put(63,5){$\lambda^\star$}
\put(116,5){$\mu^\star$}
\put(130,5){$1$}
\put(140,0){$I$}
\end{picture}\\
\\
As before, we may show that
\begin{equation*}
||g_I-g||_2^2 - ||g_{I^{(N)}}-g||_2^2 \geq
||g_I-g||_{2,[\lambda,\mu]^c}^2 - ||g_{I^{(N)}}-g||_{2,[\lambda,\mu]^c}^2 >0,
\end{equation*}
which completes the proof.

We remark that the assumptions in Lemma 2.1 or Theorem 2.1 in
\cite{Celisse-Robin2010} are not sufficient to show these
results. In fact, the assumption "$g$ is non-constant outside
$\Lambda^\star$" is not sufficient to imply that $\|g -
g_{I^{(N)}}\|_2^2<\|g - g_{\hat{I}}\|_2^2$ in the case where $\hat{I}$
is not  a subdivision of $I^{(N)}$.  For example, let  us consider the
following situation\\
\setlength{\unitlength}{0.9mm}
\begin{picture}(60,15)
 \put(10,0){\line(1,0){123}} 
 \put(30,-1){\line(0,1){2}} 
 \put(130,-1){\line(0,1){2}}
\put(40,-1){\line(0,1){2}}
\put(50,-1){\line(0,1){2}}
\put(60,-1){\line(0,1){2}}
\put(70,-1){\line(0,1){2}}
\put(80,-1){\line(0,1){2}}
\put(110,-1){\line(0,1){2}}
\put(120,-1){\line(0,1){2}}
\put(74,0){\circle*{1}}
\put(117,0){\circle*{1}}
\put(30,5){$0$}
\put(80,5){$\lambda_{N}$}
\put(108,5){$\mu_{N}$}
\put(117,5){$\mu^\star$}
\put(73,5){$\lambda^\star$}
\put(130,5){$1$}
\put(140,0){$I^{(N)}$}
\put(40,5){$a$}
\put(60,5){$b$}
\put(50,5){$c$}
\end{picture}\\
\setlength{\unitlength}{0.9mm}
\begin{picture}(60,15)
 \put(10,0){\line(1,0){123}} 
 \put(30,-1){\line(0,1){2}} 
 \put(130,-1){\line(0,1){2}}
\put(40,-1){\line(0,1){2}}
\put(60,-1){\line(0,1){2}}
\put(70,-1){\line(0,1){2}}
\put(80,-1){\line(0,1){2}}
\put(110,-1){\line(0,1){2}}
\put(120,-1){\line(0,1){2}}
\put(74,0){\circle*{1}}
\put(117,0){\circle*{1}}
\put(30,5){$0$}
\put(80,5){$\lambda_{N}$}
\put(108,5){$\mu_{N}$}
\put(117,5){$\mu^\star$}
\put(74,5){$\lambda^\star$}
\put(130,5){$1$}
\put(140,0){$\hat{I}$}
\put(90,-1){\line(0,1){2}}
\put(100,-1){\line(0,1){2}}
\put(50,6){$\hat{\Lambda}$}
\put(50,5){\vector(0,-1){4}}
\put(40,5){$a$}
\put(60,5){$b$}
\end{picture}\\
\\
We may then calculate that
\begin{equation*}
\|g-g_{\hat{I}}\|_2^2 - \|g-g_{I^{(N)}}\|_2^2 = (c-a)(\alpha_1 - \alpha)^2 + (b-c)(\alpha_2 - \alpha)^2,
\end{equation*}
where
\begin{equation*}
\alpha = \frac{1}{b-a}\int_a^bg(x)dx,\quad \alpha_1 = \frac{1}{c-a}\int_a^cg(x)dx,\quad \alpha_2 = \frac{1}{b-c}\int_c^bg(x)dx.
\end{equation*}
So that if the function $g$ satisfies $\alpha = \alpha_1 = \alpha_2$
(and $g$ is non-constant outside
$\Lambda^\star$) then $\|g - g_{I^{(N)}}\|_2^2 = \|g - g_{\hat{I}}\|_2^2$.

\end{document}